\newcommand{\cor}{\mathrm{cor}}
\newcommand{\x}{\mathbf{x}}
\newcommand{\y}{\mathbf{y}}
\newcommand{\indicator}[2]{\mathds{1}_{#1}\big({#2}\big)}
\newcommand{\apprx}{\mathrm{apprx}}
\newcommand{\subN}{\underline{N}}
\newcommand{\residue}{\mathcal{O}\Big(\frac{1}{\sqrt{\subN}}\Big)}
\newcommand{\RMSet}[2]{\mathcal{R}_{\mu,#1}^{#2}(\mu^\rho_{#1}, \nu^\rho_{#1})}
\newcommand{\RNSet}[2]{\mathcal{R}_{\nu,#1}^{#2}(\mu^\rho_{#1}, \nu^\rho_{#1})}
\newcommand{\empMu}[1]{\mathrm{Emp}_{\mu}(#1)}
\newcommand{\empNu}[1]{\mathrm{Emp}_{\nu}(#1)}
\newcommand{\bfX}{\mathbf{X}}
\newcommand{\bfY}{\mathbf{Y}}
\newcommand{\bfU}{\mathbf{U}}
\newcommand{\bfV}{\mathbf{V}}
\newcommand{\bfx}{\mathbf{x}}
\newcommand{\bfy}{\mathbf{y}}
\newcommand{\distH}[1]{\mathrm{dist}_\mathrm{H}( #1 )}
\newcommand{\Co}[1]{\mathrm{Co}(#1)}
\newcommand{\lowervalue}{\underline{J}}
\newcommand{\uppervalue}{\bar{J}}
\DeclareMathOperator*{\argmax}{argmax}
\DeclareMathOperator*{\argmin}{argmin}
\theoremstyle{plain}
\newtheorem{theorem}{Theorem}
\newtheorem{lemma}{Lemma}
\newtheorem{corollary}{Corollary}
\newtheorem{proposition}{Proposition}
\theoremstyle{definition}
\newtheorem{definition}{Definition}
\newtheorem{assumption}{Assumption}
\newtheorem{remark}{Remark}
\newtheorem{problem}{Problem}
\newcommand{\A}{\mathcal{A}}
\newcommand{\B}{\mathcal{B}}
\newcommand{\E}{E}
\newcommand{\M}{\mathcal{M}}
\newcommand{\N}{\mathcal{N}}
\renewcommand{\P}{\mathcal{P}}
\newcommand{\R}{\mathcal{R}}
\newcommand{\U}{\mathcal{U}}
\newcommand{\V}{\mathcal{V}}
\newcommand{\X}{\mathcal{X}}
\newcommand{\Y}{\mathcal{Y}}
\newcommand{\Z}{\mathcal{Z}}
\newcommand{\st}{~\mathrm{s.t.}~}
\newcommand{\ED}{ED}
\newcommand{\MF}{MF}
\newcommand\norm[1]{\left\lVert#1\right\rVert}
\newcommand{\abs}[1]{\left\lvert#1\right \rvert}
\newcommand{\expectation}[1]{\mathbb{E}\left[#1\right]}
\newcommand{\expct}[2]{\mathbb{E}_{#2}\left[#1\right]}
\newcommand{\dtv}[1]{\mathrm{d}_{\mathrm{TV}} \big( #1 \big)}
\newcommand{\halfspace}{\kern 0.2em}
\newcommand{\figref}[1]{Figure~\ref{#1}}
\title{\LARGE \bf  Zero-Sum Games between Large-Population Teams: \\
Reachability-based Analysis under Mean-Field Sharing
}
\author{
	Yue Guan%
	\thanks{Yue Guan is a PhD student with the School of Aerospace Engineering, Georgia Institute of Technology, Atlanta, GA, USA. Email:
		{\tt\small yguan44@gatech.edu}}
	\qquad Mohammad Afshari%
	\thanks{Mohammad Afshari is a Postdoctoral Fellow with the Institute for Robotics and Intelligent Machines, Georgia Institute of Technology, Atlanta, GA, USA. Email:
		{\tt\small mafshari@gatech.edu}}
 	\qquad Panagiotis Tsiotras%
 	\thanks{Panagiotis Tsiotras is the David \& Andrew Lewis Chair Professor with the School of Aerospace Engineering, Georgia Institute of Technology, Atlanta, GA, USA. Email: {\tt\small tsiotras@gatech.edu}}
}
\begin{document}
\maketitle

\setlength{\abovedisplayskip}{5pt}
\setlength{\belowdisplayskip}{5pt}

\begin{abstract}
    This work studies the behaviors of two large-population teams competing in a discrete environment. 
    The team-level interactions are modeled as a zero-sum game while the agent dynamics within each team is formulated as a collaborative mean-field team problem. 
    Drawing inspiration from the mean-field literature, we first approximate the large-population team game with its infinite-population limit. Subsequently, we construct a fictitious centralized system and transform the infinite-population game to an equivalent zero-sum game between two coordinators. 
    We study the optimal coordination strategies for each team via a novel reachability analysis and later translate them back to decentralized strategies that the original agents deploy.
    We prove that the strategies are $\epsilon$-optimal for the original finite-population team game,
    and we further show that the suboptimality diminishes when team size approaches infinity.
    The theoretical guarantees are verified by numerical examples.
\end{abstract}

\section{Introduction}

Multi-agent decision-making arises in many applications, ranging from warehouse robots~\citep{li2021lifelong}, surveillance missions~\citep{tian2020search} to organizational economics~\citep{gibbons2013handbook}. 
While the majority of the literature formulates the problems within either the cooperative or competitive settings, results on mixed collaborative-competitive team behaviors are relatively sparse. 
In this work, we consider a competitive team game, where two teams, each consisting of a large number of intelligent agents, compete at the team level, while the agents within each team collaborate. 
Such hierarchical interactions are of particular interest to military operations~\citep{tyler2020autonomous} and other multi-agent systems that operate in adversarial environments.

There are two major challenges when trying to solve such competitive team problems:
\begin{enumerate}
    \item Large-population team problems are \emph{computationally} challenging since the solution complexity increases exponentially with the number of agents. 
    It is well-known that team optimal control problems belong to the NEXP complexity class~\citep{bernstein2002complexity}.
    \item 
    Competitive team problems are \emph{conceptually} challenging due to the elusive nature of the opponent team. In particular, one may want to impose assumptions on the opponent team to obtain tractable solutions, but these assumptions may not be valid. It is thus unclear whether one can deploy techniques that are readily available in the large-population game literature.
\end{enumerate}

The scalability challenge is also present in dynamic games; however, it has been successfully resolved for a class of models known as mean-field games (MFGs)~\citep{huang2006large,lasry2007mean}.
The salient feature of mean-field games is that the \emph{selfish} agents are weakly coupled in their dynamics and rewards, and coupling is only through the mean field (i.e., the empirical distribution of the agents). 
When the population is large, the game can be approximately solved by considering the infinite-population limit, at which point the interactions among agents are reduced to a game between a typical agent and the infinite population. 

The mean-field approximation idea has been extended to the single-team setting~\citep{arabneydi:2014}, where a group of homogeneous agents are weakly coupled but receive the same team reward and thus collaborate. 
Under the assumption that all agents apply the same strategy, a dynamic programming decomposition is developed leveraging the common-information approach~\citep{nayyar2013decentralized} so that all agents within the team deploy the same strategy prescribed by a fictitious coordinator. 
The identical strategy assumption significantly simplifies the problem, but the optimality of such approach is only guaranteed for the linear-quadratic models~\citep{arabneydi2015team}.
However, in competitive team setting, although one may restrict the strategies used by her/his team to be identical, extending the same assumption to the opponent team may lead to a substantial underestimation of the opponent's capabilities and thus requires further justification. 

\subsection{Main Contributions}

We address the two aforementioned challenges for the class of discrete zero-sum mean-field team games (ZS-MFTGs), which is an extension to the mean-field (single) team problems. 
Importantly, ZS-MFTG models competitive team behaviors and draws focus to the justifiable approximation of the opponent team strategies. 
We focus on discrete-time models with finite state and action spaces, different from the continuous setting in the concurrent work~\citep{sanjari2022nash}.

We develop a dynamic program that constructs $\epsilon$-optimal strategies to the proposed large-population competitive team problem. 
Notably, our approach finds an optimal solution at the infinite-population limit and considers only \emph{identical} team strategies.
This avoids both the so-called ``curse of dimensionality" issue in multi-agent systems and the book-keeping of individual strategies.
Our main results provide a sub-optimality bound on the exploitability for our proposed solution in the original finite-population game, even when the opponent team is allowed to use non-identical strategies.
Specifically, we show that the sub-optimality decreases at the rate of $\mathcal{O}(\underline{N}^{-0.5})$, where $\underline{N}$ is the size of the smaller team.

Our results stem from a novel reachability-based analysis of the mean-field approximation. 
In particular, we show that any finite-population team behavior can be effectively approximated by an infinite-population team that uses identical team strategies. 
This result allows us to approximate the original problem with two competing infinite-population teams and transform the resulting problem into a zero-sum game between two \emph{fictitious} coordinators.
Specifically, each coordinator observes the mean-fields (state distributions) of both teams and prescribes a local policy for all agents within its team (see Figure 1).
Such transformation leads to a simple dynamic program based on the common-information approach~\citep{nayyar2013decentralized}.

\vspace{-0.05in}
\subsection{Related Work}
\vspace{-0.05in}
\paragraph{Mean-field games.}

The mean-field game model was introduced in~\citep{huang2006large, huang2007large, lasry2007mean} to address the scalability issues in large population games. 
The salient feature of mean-field games is that \textit{selfish} agents are weakly-coupled in their dynamics and rewards through the state distribution (mean-field) of the agents. 
If the population is sufficiently large, then an approximately optimal solution for these models can be obtained by solving the infinite-population limit, and the solution in this limiting case is the mean-field equilibrium (MFE). 
The infinite-population game is easier to solve since, when the population is asymptotically large, the action of a single agent does not affect the mean-field. 
In the continuous setting, the MFE is characterized by a Hamilton-Jacobi-Bellman equation (HJB) coupled with a transport equation~\citep{lasry2007mean}. 
The HJB equation describes the optimality conditions for the strategy of a typical agent, and the transport equation captures the evolution of the mean-field. 
The existence and uniqueness of the MFE have been established in~\citep{huang2006large}.
Two desirable features of the MFE are:
(i) the resultant strategy is decentralized and identical across all agents, i.e., an agent selects its action only based on the local information of its own state, and no information regarding the mean-field is needed; 
and (ii) the complexity of the MFE solution does not depend on the number of agents. 
The mean-field results have been extended to discrete environments in~\citep{cui2021approximately,guan2022shaping}, using entropy-regularization.
We refer the readers to~\citep{lauriere2022learning} for a detailed overview of the main results in the mean-field game literature. 

The main differences between our setup and the current mean-field game literature are the following:
(i) we seek team-optimal strategies, while MFG seeks Nash equilibrium strategies. 
In particular, we provide performance guarantees when the whole opponent team deviates, while MFG only considers single-agent deviations.
(ii) The MFE assumes that all agents apply the same strategy thus the mean-field flow can be solved offline.
As a result, the MFE strategy is open-loop to the mean-field.
However, under the setting in this paper, different opponent team strategies can lead to different mean-field trajectories, and thus our optimal strategy requires feedback on the mean-field to respond to different strategies deployed by the opponent team.

\paragraph{Mean-field teams.}

The single-team problem under with mean-field-type coupling has been considered in~\citep{arabneydi:2014}, where all agents receive the same team reward and thus the problem is collaborative. 
It transforms the team problem into a hierarchical control structure, where a fictitious coordinator broadcasts a local policy based on the distribution of agents, and the individual agents then act according to this policy.
The work~\citep{arabneydi:2014} assumes that all agents within the team apply the same strategy and the optimality for the finite-population game is \emph{only} guaranteed in the LQG setting~\citep{mahajan2015linear}.
Our work extends this formulation to the two-team zero-sum setting and justifies the identical team strategy assumptions on the opponent team. 

The concurrent work~\citep{sanjari2022nash} studies a similar team against team problem but in a continuous state and action setting. 
Through modeling the randomized policies as Borel probability measures, the authors showed that under suitable topology induced by certain information structure, a Nash equilibrium exists for the team game, and it is exchangeable for the finite population and symmetric (identical) for the infinite population. 
It is also shown that common randomness is not necessary for an approximate Nash equilibrium for large population games.
Our work differs in the following aspects:
(i)
The concurrent work~\citep{sanjari2022nash} relies on the convexity of the team best-response correspondence and the Kakutani fixed point theorem to establish the existence of a Nash equilibrium. 
However, due to discrete nature of our formulation, the convexity no longer holds, and thus a Nash equilibrium may not exist as shown in Numerical Example~1. 
Consequently, we developed a novel reachability-based analysis to study the single-sided optimality based on the lower and upper game values;
(ii) the mean-field sharing information structure allows an easy transformation of the team against team problem into a zero-sum continuous game between two coordinators, which can be solved via dynamic programming. 
(iii) our reachability-based approach and the additional Lipschitz assumptions allow us to provide the  convergence rate of the finite-population team performance to its infinite-population limit.

\vspace{-0.1in}
\paragraph{Markov team games and information structure.}

Different from mean-field games, team problems focus on fully cooperative scenarios, where all agents share a common reward function.
The theory of teams was first investigated in~\citep{marschak1955elements,radner1962team}. 
From a game-theoretic perspective, this cooperative setting can also be viewed as a special case of Markov potential games~\citep{zazo2016dynamic}, with the potential function being the common reward. 
When all agents have access to the present and past observations along with the past actions of all other agents, such a system
is equivalent to a centralized one, which enables the use of single-agent algorithms, such as value iteration or Q-learning~\citep{bertsekas1996neuro}.
However, such a centralized system suffers from scalability issues, since the joint state and action spaces grow exponentially with the number of agents. 
Furthermore, in many applications, decentralization is a desirable or even required trait, due to reasons such as limited communication bandwidth, privacy, etc.
This work, therefore, investigates the decentralized team control problem under the information structure known as the mean-field sharing~\citep{arabneydi:2014}.

Information structure is one of the most important characteristics of a multi-agent system, since it largely determines the tractability of the corresponding decentralized decision problem~\citep{4048470,ho1980team}.
In~\citep{witsenhausen1971separation}, information structures are classified into three classes: classical, quasi-classical, and non-classical.
The characterizing feature of the classical information structure is centralization of information, i.e., all agents know the information available to all agents acting in the past. 
A system is called quasi-classical or partially nested if the following condition holds: 
If agent $i$ can influence agent $j$, then agent $j$ must know the information available to agent $i$.
All other information structures are non-classical. 
In the team  game literature, information structures that are commonly used include 
state sharing~\citep{aicardi1987decentralized}, 
belief sharing~\citep{yuksel2009stochastic}, 
partial-history sharing~\citep{nayyar2013decentralized}, 
mean-field sharing~\citep{arabneydi:2014}, etc. 
This work, in particular, assumes a mean-field sharing structure, where each agent observes its local state and the mean-fields of both teams. 
As the agents do not receive information regarding the other agents' individual states, the mean-field sharing is a non-classic information structure, thus posing a challenge to attain tractable solutions.

As decentralized team problems belong to the NEXP complexity class~\citep{bernstein2002complexity},
no efficient algorithm is available, in general. 
Nonetheless, it is possible to develop a dynamic programming decomposition for specific information structures. 
Three such generic approaches are discussed in~\citep{mahajan2012information}: namely,
person-by-person approach,
designer's approach,
and common-information approach. 
While we use the common-information approach in this work, we refer the interested reader to~\citep{mahajan2012information} for the other two approaches.

Team problems in a mixed competitive-cooperative setting have also been considered.
For example, \citep{lagoudakis2002learning} proposed a centralized learning algorithm for zero-sum Markov games in which each side is composed of multiple agents collaborating against an opponent team of agents. 
Later, in~\citep{hogeboom2023zero}, the authors studied the information structure and the existence of equilibria in games involving teams against teams.

\vspace{-0.1in}
\paragraph{Common-information approach.} 

The common information-based approach was proposed in~\citep{nayyar2013decentralized} to investigate optimal strategies in decentralized stochastic control problems with partial history-sharing information structure. 
In this model, each agent shares part of its information with other agents at each time step. Based on the common information available to all agents, a fictitious coordinator is designed. It is shown that the optimal control problem at the coordinator level is a centralized Partially Observable Markov Decision Process (POMDP) and can be solved using dynamic programming. The coordinator solves the POMDP problem and chooses a \emph{prescription} for each agent that maps that agent's local information to its actions. The common information approach was used to solve decentralized stochastic control problems including the control sharing information structure~\citep{mahajan2011}, mean-field teams~\citep{arabneydi:2014}, and LQG systems~\citep{mahajan2015linear}.

Most relevant to this work, \citep{Kartik2021} considered a general model of zero-sum stochastic game between two competing teams, where the information available to each agent can be decomposed into common and private information. 
Exploiting this specific information structure, an expanded game between two virtual players (fictitious coordinators) is constructed based on the sufficient statistics known as the common information belief. 
The authors showed that the expanded game can be solved via dynamic programming, and the upper and lower values of the expanded games provide bounds for the original game values.
Although we utilize a similar common information approach as in~\citep{Kartik2021}, this work focuses more on the large-population aspect of the problem, i.e., the mean-field limits of the team games along with the performance guarantees under the mean-field approximation.

In this work, we treat the mean fields of both (infinite-population) teams as common information. 
Instead of having a single coordinator, we introduce two fictitious coordinators, one for each team, and, consequently, formulate an equivalent zero-sum game between the two coordinators. 
Since we consider the equivalent coordinator game at the infinite-population limit, the two mean fields provide enough information to characterize the status of the system under the mean-field dynamics, which leads to a full-state information structure instead of a partially observable one.
Although the original environment is discrete, the resultant zero-sum coordinator game has continuous state and action spaces%
\footnote{The joint state of the coordinator game is the mean-fields of the two teams, which are distributions over state space and are continuous objects. 
The actions used by the coordinators are the local policies, which are distributions over the action spaces and are also continuous.}.
Under certain assumptions, we show that the optimal max-min and min-max value functions are Lipschitz continuous with respect to the mean-fields. 
This justifies the solution of the continuous game via a discretization scheme.

\subsection{Road Map}

Our approach can be summarized in the road map in Figure~\ref{fig:road-map}.

\begin{enumerate}
    \item 
    In Section~\ref{sec:formulation}, we present the formulation of the zero-sum mean-field team games (ZS-MFTG) with \textit{finite number of agents}. 
    \item 
    In Section~\ref{sec:mean-field-approx},
    we employ the \textbf{mean-field approximation} and considers the \textit{infinite-population limit} where identical team strategies are deployed.   
    \item In Section~\ref{sec:coordinator-game}, we use the \textbf{common information approach}~\citep{nayyar2013decentralized} and reformulate the infinite-population problem as a \textit{zero-sum game between two coordinators}. This allows us to efficiently find the max-min and min-max optimal strategies through dynamic programming.
    \item In Section~\ref{sec:performance}, we establish that the optimal strategies for the coordinator game (computed under the common information approach, mean-field approximation, and identical team  strategies) remain $\epsilon$-optimal for the original finite-population game. 
    Specifically, if one team applies the proposed strategy, the opponent team can only increase its performance by $\mathcal{O}\big( 1/\sqrt{\min\{N_1, N_2\}}\big)$, even if the opponent team uses a non-identical team strategy to exploit.
\end{enumerate}

\begin{figure}[b]
    \centering
    \includegraphics[width=0.7\linewidth]{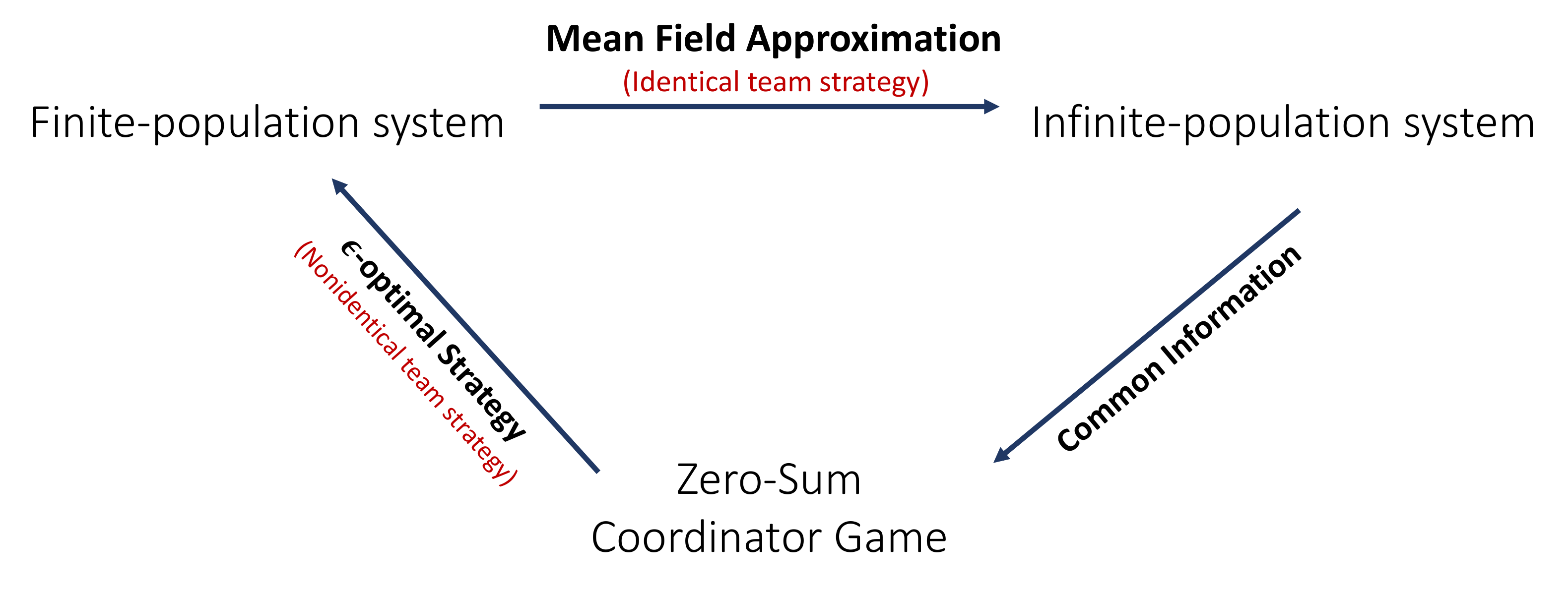}
    \vspace{-0.1in}
    \caption{The road map of the proposed approach.}
    \label{fig:road-map}
\end{figure}

\subsection{Notations}
We use the shorthand notation $[n]$ to denote the set of indices $\{1, 2, \ldots, n\}$. 
The indicator function is denoted as $\indicator{\cdot}{\cdot}$, such that $\indicator{a}{b} = 1$ if $a=b$ and $0$ otherwise.
To distinguish between random variables and their realizations, we use uppercase letters to denote random variables (e.g. $X$ and $\M$) and lowercase letters to denote their realizations (e.g. $x$ and $\mu$).
We use bold letters to denote vectors, e.g. $\bfY = (Y_1, \ldots, Y_n)$.
For a finite set $\E$, we denote the space of all probability measures over $\E$ as $\P(\E)$ and equip it with the total variation norm $\mathrm{d}_\mathrm{TV}$~\citep{chung2001course}.
Additional important notations used in this paper are summarized in Table~\ref{table:notation}.

\begin{table}[h!]
  \begin{center}
    \caption{Important Notation.}
    \label{table:notation}
    \begin{tabular}{l|l|l}
      \toprule 
      \textbf{Notation} & \textbf{Description} & \textbf{Population Size}\\
      \midrule 
      $N_1$ / $N_2$ / $N$ & Number of agents in Blue / Red team / whole system & Finite\\
      $\rho$ & Team size ratio $N_1 / N$ & Finite \\ 
      $X_{i,t}^{N_1} (x_{i,t}^{N_1})$ / $Y_{j,t}^{N_2} (y_{j,t}^{N_2})$ & State of a single Blue/Red agent (realization)& Finite \\
      $\M_t^{N_1}(\mu_t^{N_1})$ / $\N_t^{N_2} (\nu_t^{N_2})$ & Empirical distribution of Blue/Red team (realization)& Finite \\
      $\mu_t^{\rho}$ / $\nu_t^{\rho}$ & Mean-field of Blue/Red team & Infinite \\  
      $\phi_t$ / $\psi_t$ & Identical Blue/Red team policy & Finite \& Infinite\\  
      $\phi$ / $\psi$ & Identical Blue/Red team strategy & Finite \& Infinite\\  
      $\phi^{N_1}_t$ / $\psi^{N_2}_t$ & Non-identical Blue/Red team policy &  Finite\\
      $\phi^{N_1}$ / $\psi^{N_2}$ & Non-identical Blue/Red team strategy &  Finite\\
      $\pi_t$ / $\sigma_t$ & Identical Blue/Red team local policy & Infinite\\
      $\alpha$ / $\beta$ & Blue/Red coordinator strategy & Infinite\\
      $r^\rho_t$ & Game reward & Finite \& Infinite\\
      $f^\rho_t$ / $g^\rho_t$ & Dynamics of Blue/Red agents & Finite \& Infinite\\
      $J^{N}$ & Finite-population game value & Finite \\
      $J^\rho_\cor$ & Coordinator game value & Infinite\\
      $\R^\rho_{\mu,t}$ / $\R^\rho_{\nu,t}$ & Reachablity correspondence for Blue/Red mean-field  \qquad  & Infinite \\
      \bottomrule 
    \end{tabular}
  \end{center}
\end{table}

\section{Problem Formulation}
\label{sec:formulation}

Consider a discrete-time system with two \textit{large} teams of agents that operate over a finite time horizon $T$. 
The Blue team consists of $N_1$ homogeneous agents, and the Red team consists of $N_2$ homogeneous agents.
The total system size is denoted as $N = N_1 + N_2$.
We use $\rho = N_1 / N$ to denote the ratio between the size of the Blue team and the total number of agents (Blue and Red agents combined).
Let $X^{N_1}_{i,t} \in \X$ and $U^{N_1}_{i,t} \in \U$ be the random variables representing the state and action, respectively, taken by Blue agent $i \in [N_1]$ at time $t$. Here, $\X$ and $\U$ represent the \emph{finite} individual state and action space for each Blue agent, independent of $i$ and $t$.
Similarly, we use $Y^{N_2}_{j,t} \in \Y$ and $V^{N_2}_{j,t} \in \V$ to denote the individual state and action of Red agent $j \in [N_2]$.
The joint state and action of the Blue team are denoted as $\bfX^{N_1}_t = \left(X^{N_1}_{1,t},\ldots, X^{N_1}_{N_1,t}\right)$ and $\bfU^{N_1}_t = \left(U^{N_1}_{1,t},\ldots, U^{N_1}_{N_1,t}\right)$, respectively. 
Similarly, the Red team's joint state and action are denoted as $\bfY^{N_2}_t$, and $\bfV^{N_2}_t$.
We use $\bfX^{N_1}_{-i,t} = \left(X^{N_1}_{1,t},\ldots,X^{N_1}_{i-1,t},X^{N_1}_{i+1,t}, \ldots, X^{N_1}_{N_1,t}\right)$
to denote the joint state of the Blue team excluding Blue agent $i$, and $\bfY^{N_2}_{-j,t}$ is defined similarly.

\begin{definition}  [Empirical Distribution]
    \label{def:empirical-dist}
    The \textit{empirical distribution} (\ED{}) for the Blue and Red teams are defined as
    \begin{subequations}\label{eqn:mean-field}
        \begin{align}
            \M^{N_1}_t(x) &= \frac{1}{N_1} \sum_{i=1}^{N_1} \mathds{1}_x(X^{N_1}_{i,t}), ~~~ x \in \X,\\
            \N^{N_2}_t(y) &= \frac{1}{N_2} \sum_{j=1}^{N_2} \mathds{1}_y(Y^{N_2}_{j,t}), ~~~ y \in \Y.
        \end{align}
    \end{subequations}
\end{definition}
Notice that $\M^{N_1}_t(x)$ gives the fraction of Blue agents at state $x$. Hence, the random vector $\M^{N_1}_t = [\M^{N_1}_t(x)]_{x \in \X}$ is a probability measure, i.e., $\M^{N_1}_t \in \P(\X)$. 
Similarly, we have $\N^{N_2}_t \in \P(\Y)$. 
Since finite state and action spaces are considered, we have $\P(\X) = \Delta_{|\X|}$ and $\P(\Y) = \Delta_{|\Y|}$, where $\Delta_k$ is the $k$-dimensional standard simplex.

We introduce two operators $\mathrm{Emp}_\mu: \X^{N_1} \to \P(\X)$ and $\mathrm{Emp}_\nu : \Y^{N_2} \to \P(\X)$ to denote operation performed in \eqref{eqn:mean-field} that relates the joint states and their corresponding \ED{}s:
\begin{align}
    \mu_t^{N_1} =\empMu{\bfx^{N_1}_t}, \qquad
    \nu_t^{N_2} =\empNu{\bfy^{N_2}_t}.
\end{align}

To measure the distance between two distributions (probability measures), we use the total variation.
\begin{definition}
    For a finite set $\E$, the total variation between two probability measures $\mu, \mu' \in \P(\E)$ is given by
    \begin{equation}
        \dtv{\mu, \mu'} = \frac{1}{2} \sum_{e\in \E} \abs{\mu(e) - \mu'(e)} = \frac{1}{2} \norm{\mu - \mu'}_1.
    \end{equation}
\end{definition}

\subsection{Dynamics}

We consider a weakly-coupled dynamics, where the dynamics of each individual agent is coupled with other agents through the \ED{}s (empirical distributions). 
For Blue agent $i$, its stochastic transition is governed by a transition kernel $f^\rho_t: \X \times \X \times \U \times \P(\X) \times \P(\Y) \to [0,1]$ and satisfies
\begin{equation}
    \label{eqn:dynamics-blue}
    \mathbb{P}(X_{i,t+1}^{N_1}=x_{i,t+1}^{N_1} | U_{i,t}^{N_1} = u_{i,t}^{N_1}, \bfX_{t}^{N_1} = \bfx_{t}^{N_1}, \bfY_{t}^{N_2}= \bfy_{t}^{N_2}) = f^\rho_t(x_{i,t+1}^{N_1}\vert x_{i,t}^{N_1}, u_{i,t}^{N_1}, \mu^{N_1}_t, \nu^{N_2}_t),
\end{equation}
where $\mu^{N_1}_t = \empMu{\bfx_t^{N_1}}$ and $\nu^{N_2}_t = \empNu{\bfy_t^{N_2}}$.
Note that the transition kernel has $\rho = N_1 / N$ depends on the ratio between the sizes of the two teams.
Similarly, the transition of Red agent $j$ follows
\begin{equation}
    \label{eqn:dynamics-red}
    \mathbb{P}(Y_{j,t+1}^{N_2}=y_{j,t+1}^{N_2} | V_{j,t}^{N_2} = v_{j,t}^{N_2}, \bfX_{t}^{N_1} = \bfx_{t}^{N_1}, \bfY_{t}^{N_2}= \bfy_{t}^{N_2}) = g^\rho_t(y_{j,t+1}^{N_2}\vert y_{j,t}^{N_2}, v_{j,t}^{N_2}, \mu^{N_1}_t, \nu^{N_2}_t).
\end{equation}

\begin{assumption}
    \label{assmpt:lipschitiz-dynamics}
    The transition kernels $f^\rho_t$ and $g^\rho_t$ are $L_{f_t}$ and $L_{g_t}$-Lipschitz continuous, respectively.
    Formally, for all $\mu, \mu' \in \P(\X)$ and $\nu, \nu' \in \P(\Y)$ and for all $t \in \{0, \ldots, T-1\}$, 
    there exist positive constants $L_{f_t}$ and $L_{g_t}$ such that
    \begin{alignat*}{2}
        &\sum_{x' \in \X} \abs{f^\rho_t(x'|x,u,\mu, \nu) - f^\rho_t(x'|x,u,\mu', \nu')} \leq L_{f_t}\Big(\dtv{\mu, \mu'} + \dtv{\nu, \nu'}\Big) ~~~&&\forall x \in \X, u \in \U, \\
        &\sum_{y' \in \Y} \abs{g^\rho_t(y'|y,v,\mu, \nu) - g^\rho_t (y'|y,v,\mu', \nu')} \leq L_{g_t} \Big(\dtv{\mu, \mu'} + \dtv{\nu,\nu'}\Big)~~~&& \forall y \in \Y, v \in \V.
    \end{alignat*}
\end{assumption}



\subsection{Reward Structure}

Under the team-game framework, agents in the same team share the same reward. 
Similar to the dynamics, we consider a weakly-coupled \emph{team reward} 
\begin{equation}
    \label{eqn:reward}
    r^\rho_t: \P(\X) \times \P(\Y) \to [-R_{\max}, R_{\max}].
\end{equation}

\begin{assumption}
    \label{assmpt:lipschitiz-rewards}
    The reward function is $L_r$-Lipschitz continuous.
    Formally, for all $\mu, \mu' \in \P(\X)$, $\nu, \nu' \in \P(\Y)$, and for all $t \in \{0, \ldots, T\}$, there exists a positive constant $L_r$ such that 
    \begin{equation}
        \label{eqn:lipschitz-rewards}
        \abs{r^\rho_t(\mu, \nu) - r^\rho_t(\mu', \nu')} \leq L_r \Big(\dtv{\mu,\mu'} + \dtv{\nu,\nu'}\Big).
    \end{equation}
\end{assumption}

Under the zero-sum reward structure, we let the Blue team maximize the reward while the Red team minimizes it.

\subsection{Information Structure.}
We assume a mean-field sharing information structure~\citep{arabneydi2015team}. 
At each time step $t$, Blue agent $i$ observes its own state $X_{i,t}^{N_1}$ and the \ED{}s $\M_t^{N_1}$ and $\N_t^{N_2}$.
Similarly, Red agent $j$ observes $Y_{j,t}^{N_2}$, $\M_t^{N_1}$ and $\N_t^{N_2}$.
Viewing the \ED{}s as aggregated states of the teams, 
we consider the following mixed Markov policies:
\begin{equation}
    \label{eqn:IS}
    \begin{aligned}
        \phi_{i,t}&: \U \times \X \times \P(\X) \times \P(\Y) \to [0,1],\\
        \psi_{j,t}&: \V \times \Y \times \P(\X) \times \P(\Y) \to [0,1],
    \end{aligned}
\end{equation}
where $\phi_{i,t}(u|X^{N_1}_{i,t}, \M^{N_1}_t,\N^{N_2}_t)$ represents the probability that Blue agent $i$ selects action $u$ given its state $X_{i,t}^{N_1}$ and the team distributions $\M^{N_1}_t$ and $\N^{N_2}_t$. 
Note that agent's individual state is a private information, while the team \ED{}s are the common information shared among all agents.

An individual strategy is defined as a time sequence $\phi_{i} = \{\phi_{i, t}\}_{t=0}^T$.
A Blue team strategy $\phi^{N_1} = \{\phi_i\}_{i=1}^{N_1}$ is the collection of individual strategies used by each Blue agent. 
We use $\Phi_t$ and $\Phi$ to denote, respectively, the set of individual policies and strategies available to each Blue agent.\footnote{Since Blue agents have the same state and action spaces, they have the same policy space.}
The set of Blue team strategies is then defined as the Cartesian product $\Phi^{N_1} = \times_{i=1}^{N_1} \Phi$.
The notations extend naturally to the Red team.

In summary, an instance of a zero-sum mean-field team game is defined as the tuple: 
\begin{equation*}
    \texttt{ZS-MFTG} = \langle \X, \Y, \U, \V, f^\rho_t, g^\rho_t, r^{\rho}_t, N_1, N_2, T \rangle.
\end{equation*}

\subsection{Optimization Problem.}
\vspace{-0.1in}
Starting from the initial joint states $\bfx_0^{N_1}$ and $\bfy^{N_2}_0$, the performance of any team strategy pair $(\phi^{N_1}, \psi^{N_2}) \in \Phi^{N_1} \times \Psi^{N_2}$ is quantified by the \emph{expected cumulative reward}
\begin{equation}\label{eqn:expected-cum-reward}
    J^{N, \phi^{N_1}, \psi^{N_2}} \big(\bfx_0^{N_1}, \bfy_0^{N_2} \big) = 
    \expct{
    \sum_{t=0}^{T} r^{\rho}_t(\M^{N_1}_t, \N^{N_2}_t) \Big \vert \bfX^{N_1}_0 = \bfx_0^{N_1}, \bfY^{N_2}_0 = \bfy^{N_2}_0}{\phi^{N_1}, \psi^{N_2}},
\end{equation}
where $\M_t^{N_1} = \empMu{\bfX^{N_1}_t}$ and $\N_t^{N_2} = \empNu{\bfY^{N_2}_t}$, and the expectation is with respect to the distribution induced on all system variables by the strategy pair $(\phi^{N_1}, \psi^{N_2})$.

\begin{remark}
    Note that the value function in~\eqref{eqn:expected-cum-reward} takes the joint states as arguments, different from the value function in~\citep{arabneydi2015team} which takes the \ED{} as its argument.
    The difference comes from the non-identical team strategies allowed in our formulation, which requires each agent's state and index information to sample actions and predict the game's evolution. 
    \citeauthor{arabneydi2015team} assume an identical team strategy, and hence the \ED{} is an information state (sufficient statistics) that characterizes the value function of the finite-population game. 
    A counter-example that shows the \ED{} is not an information state under non-identical team strategy is presented in Appendix~\ref{appdx-sec:infomration-example}.
\end{remark}

When the maximizing Blue team considers its worst-case performance, we have the following max-min optimization:
\begin{equation}\label{eqn:origin-optimization}
    \underline{J}^{N*} = \max_{\mathbf{\phi}^{N_1} \in \Phi^{N_1}} ~ \min_{\mathbf{\psi}^{N_2} \in \Psi^{N_2}} ~~ J^{N, \phi^{N_1}, \psi^{N_2}},
\end{equation}

where $\underline{J}^{N*}$ is the lower game value for the finite-population game.
Note that the game value may not always exist (max-min value differs from min-max value)~\citep{elliott1972existence}.
In Section~\ref{sec:existence-game-value}, we present a special case of the ZS-MFTG where the game value exists at the infinite-population limit.
For the general case, we consider the following optimality condition for the Blue team strategy.

\begin{definition}
    \label{def:eps-team-optimal}
    A Blue team strategy $\phi^{N_1 *}$ is $\epsilon$-optimal if
    \begin{equation*}
        \underline{J}^{N*}  \geq \min_{\mathbf{\psi}^{N_2} \in \Psi^{N_2}} J^{N, \phi^{N_1 *}, \psi^{N_2}} \geq \underline{J}^{N*} - \epsilon.
    \end{equation*}
    The strategy $\phi^{N_1 *}$ is optimal if $\epsilon=0$.
\end{definition}

Similarly, the minimizing Red team considers a min-max optimization problem, which leads to the upper game value as follows
\begin{equation}\label{eqn:red-optimization}
    \bar{J}^{N*} = \min_{\mathbf{\psi}^{N_2} \in \Psi^{N_2}} ~\max_{\mathbf{\phi}^{N_1} \in \Phi^{N_1}} ~ J^{N, \phi^{N_1}, \psi^{N_2}}.
\end{equation}

The definitions of optimality and $\epsilon$-optimality extend naturally to the Red team strategies.

The rest of the paper focuses on the performance analysis form the Blue team's perspective (max-min optimization), but the techniques developed are applicable to the Red team's side due to the symmetry of the problem formulation. 

\subsection{Examples}

\paragraph{A two-node example}
Consider a simple team game on a two-node graph in Figure~\ref{fig:demo-example}.
The state spaces are given by $\X\hspace{-0.02in}=\hspace{-0.02in}\{x^1, x^2\}$ and $\Y \hspace{-0.02in}=\hspace{-0.02in} \{y^1, y^2\}$, and the action spaces are $\U \hspace{-0.02in}=\hspace{-0.02in} \{u^1, u^2\}$ and $\V \hspace{-0.02in}=\hspace{-0.02in} \{v^1, v^2\}$.
The Blue action $u^1$ corresponds to staying on the current node and $u^2$ represents moving to the other node. 
The same connotations apply to Red actions $v^1$ and $v^2$.

\begin{figure}[t]
    \centering
    \includegraphics[width=0.3\linewidth]{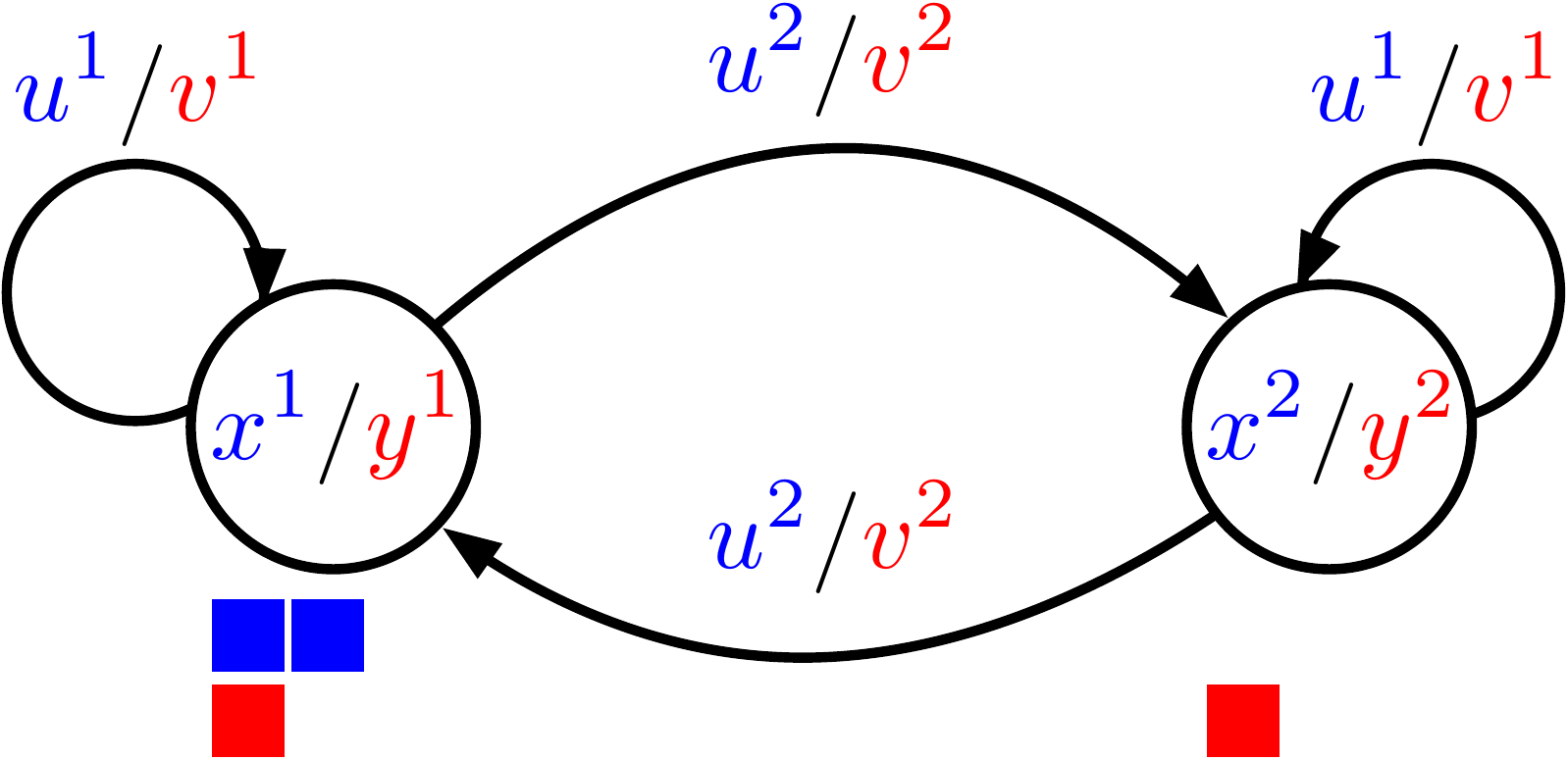}
    \caption{An example of ZS-MFTG over a two-node graph, where $N_1=2$, $N_2=2$ and $\rho=0.5$.}
    \label{fig:demo-example}
    \vspace{-0.2in}
\end{figure}

The maximizing Blue team's objective is to maximize its presence at node~2 (state $x^2$), 
and hence $r_t (\mu, \nu) \hspace{-0.02in} = \mu(x^2)$.
The Blue transition kernel at $x^1$ under $u^2$ is defined as
\begin{align*}
    &f_t(x^1|x^1, u^2, \mu, \nu)  =  0.5 \big(1  -  \big(\rho \mu(x^1)  -  (1 - \rho) \nu(y^1)\big)\big), \\
& f_t(x^2|x^1, u^2, \mu, \nu)  =  0.5 \big(1  +  \big(\rho \mu(x^1)  -  (1  -  \rho) \nu(y^1)\big)\big).
\end{align*}
Under this transition kernel, the probability of a Blue agent transitioning from node 1 to node 2 depends on the Blue team's numerical advantage over the Red team at node~1.

The initial joint states depicted in Figure~\ref{fig:demo-example} are given by  
$\bfx^{2}_0 = [x^1, x^1]$ and $\bfy^{2}_0 = [y^1, y^2]$.
The corresponding \ED{}s are $\mu_0^2 = [1, 0]$, $\nu_0^2=[0.5, 0.5]$, and the running reward is $r_0 = \mu_0^2(x^2)=0$.
Suppose the Blue team applies a team strategy such that $\phi^i_0(u^2|x^1, \mu_0^2, \nu_0^2) = 1$ for both $i\in [2]$ (under which both Blue agents apply $u^2$). 
The probability of an individual Blue agent transitioning to node 2 is 0.625.
Thus, the next Blue \ED{} is a random vector with three possible realizations:
(i) $\M^{2}_1 \hspace{-0.02in}=\hspace{-0.02in} [1, 0]$ with probability 0.14 (both Blue agents remain on node 1);
(ii) $\M^{2}_1 \hspace{-0.02in}=\hspace{-0.02in} [0.5, 0.5]$ with probability 0.47 (one Blue agent moves and the other remains);
and (iii) $\M^2_1 \hspace{-0.02in}=\hspace{-0.02in} [0, 1]$ with probability 0.39 (both Blue agents move).
Suppose the game terminates at $T=1$, then the value under the given Blue strategy $\phi^2$ is given by $J^{4, \phi^{2},\psi^2} (\bfx^2_0, \bfy^2_0) = 0 + (0.14 \cdot 0 + 0.47 \cdot 0.5 + 0.39 \cdot 1) = 0.63$.

\paragraph{Connections to the dynamics in~\citep{huang2006large}}
As a special case of the previously described dynamics and reward structure, we consider the following pairwise-state-coupled dynamics and team reward derived from the single-population model in~\citep{huang2006large}. The transition kernel for the Blue and Red agents are defined as follows: 
\begin{subequations}
\label{eqn:psc-MFTG-dynamics}
\begin{align}
    f_t\big(x^{N_1}_{i,t+1}\big \vert x^{N_1}_{i,t}, u^{N_1}_{i,t}, \bfx^{N_1}_{-i,t}, \bfy^{N_2}_{t}) &= 
    \frac{ \sum_{k=1}^{N_1}f_{1,t}(x^{N_1}_{i,t+1}\big \vert x^{N_1}_{i,t}, u^{N_1}_{i,t}, x^{N_1}_{k,t} )}{N} +
    \frac{ \sum_{k=1}^{N_2}f_{2,t}(x^{N_1}_{i,t+1}\big \vert x^{N_1}_{i,t}, u^{N_1}_{i,t}, y^{N_2}_{k,t} )}{N}\\
    g_t\big(y^{N_2}_{j,t+1}\big \vert y^{N_2}_{j,t}, v^{N_2}_{j,t}, \bfx^{N_1}_{t}, \bfy^{N_2}_{-j, t}) &= 
    \frac{ \sum_{k=1}^{N_1}g_{1,t}(y^{N_2}_{j,t+1}\big \vert y^{N_2}_{j,t}, v^{N_2}_{j,t}, x^{N_1}_{k,t} )}{N} +
    \frac{ \sum_{k=1}^{N_2}g_{2,t}(y^{N_2}_{j,t+1}\big \vert y^{N_2}_{j,t}, v^{N_2}_{j,t}, y^{N_2}_{k,t} )}{N}.
\end{align}
\end{subequations}
The terms $f_{1,t}$, $f_{2,t}$, $g_{1,t}$ and $g_{2,t}$ are transition kernels and model the pair-wise influence between agents. For example, $f_{2,t}$ represents the influence that the state of a Red agent has on the transition of a Blue agent.

The weakly-coupled team rewards are the averaged state-dependent reward
\begin{equation}
\label{eqn:team-rewards-eg}
    \begin{aligned}
        r_t(\bfx_t^{N_1},\bfy_t^{N_2}) &= \frac{1}{N} \sum_{k=1}^{N_1}r_{1,t}(x^{N_1}_{k,t} ) - 
        \frac{1}{N} \sum_{k=1}^{N_2}r_{2,t}(y^{N_2}_{k,t} ).
    \end{aligned}
\end{equation}
Through some algebraic manipulations, the aforementioned pairwise-state-dependent transition kernel and team reward can be transformed into the weakly-coupled form represented by equations \eqref{eqn:dynamics-blue} and \eqref{eqn:reward} as follows:
\begin{align*}
f_t \big(x^{N_1}_{i,t+1}\big \vert x^{N_1}_{i,t}, u^{N_1}_{i,t}, \bfx^{N_1}_{-i,t}, \bfy^{N_2}_{t}\!)
    & =
    \!\rho\!\sum_{x\in \X}\!\mu^{N_1}_t(x) f_{1,t}(x^{N_1}_{i,t+1}\big \vert x^{N_1}_{i,t}, u^{N_1}_{i,t}, x ) \!+\!
    \!(1-\rho)\!\sum_{y\in \Y} \!\nu_t^{N_2}(y)f_{2,t}(x^{N_1}_{i,t+1}\big \vert x^{N_1}_{i,t}, u^{N_1}_{i,t}, y ) \nonumber\\
    &\triangleq f^\rho_t\big(x^{N_1}_{i,t+1}\big \vert x^{N_1}_{i,t}, u^{N_1}_{i,t},  \mu^{N_1}_t, \nu^{N_2}_t),   \nonumber \\
    g_t \big(y^{N_2}_{j,t+1}\big \vert y^{N_2}_{j,t}, v^{N_2}_{j,t}, \bfx^{N_1}_{t}, \bfy^{N_2}_{-j, t}\!)
    & =
    \!\rho\!\sum_{x\in \X}\!\mu^{N_1}_t(x) g_{1,t}(y^{N_2}_{j,t+1}\big \vert y^{N_2}_{j,t}, v^{N_2}_{j,t}, x ) \!+\!
    \!(1-\rho)\!\sum_{y\in \Y} \!\nu_t^{N_2}(y)g_{2,t}(y^{N_2}_{j,t+1}\big \vert y^{N_2}_{j,t}, v^{N_2}_{j,t}, y ) \nonumber\\
    &\triangleq g^\rho_t\big(y^{N_2}_{j,t+1}\big \vert y^{N_2}_{j,t}, v^{N_2}_{j,t},  \mu^{N_1}_t, \nu^{N_2}_t),   \nonumber \\
    r_t(\bfx_t^{N_1},\bfy_t^{N_2}) &= \rho \sum_{x\in \X} \mu^{N_1}_t(x) r_{1,t}(x) - (1-\rho) \sum_{y\in \Y} \nu^{N_2}_t(y) r_{2,t}(y) \triangleq r^\rho_t(\mu_t^{N_1}, \nu_t^{N_2}),
\end{align*}
where $\rho = N_1/N$, $\mu^{N_1}_t = \empMu{\bfx_t^{N_1}}$ and $\nu^{N_2}_t = \empNu{\bfy_t^{N_2}}$ (see Appendix~\ref{appdx:special-case} for the detailed derivations).
In this specific example, it is clear that the team with more agents has a larger influence on the dynamics and the reward.
Hence, it is necessary to include $\rho = N_1/N$ as a parameter for both the dynamics and the reward function.
It can be further verified that the above weakly-coupled transition kernels $f^\rho_t$ and $g^\rho_t$ are both $2$-Lipschitz continuous with respect to the \ED{}s. Additionally, the weakly-coupled reward function has a Lipschitz constant of $L_{r} = 2 \max \{r_{1,\max}, r_{2,\max}\}$, 
where $r_{1,\max} = \max_{x, t} \abs{r_{1,t}(x)}$ and $r_{2,\max} = \max_{y,t} \abs{r_{2,t}(y)}$ (see Propositions~\ref{appdx-prop:psc-MFTG-dynamics-lip} and~\ref{appdx-prop:psc-MFTG-reward-lip} in Appendix~\ref{appdx:special-case}).

\section{Mean-Field Approximation}
\label{sec:mean-field-approx}

The preceding max-min and min-max optimizations are intractable for large-population systems, since the dimension of the joint policy spaces $\Phi^{N_1}$ and $\Psi^{N_2}$ grows exponentially with the number of the agents.
To address this scalability challenge, we first consider the limiting case of the ZS-MFTGs, when the number of agents of both teams goes to infinity.
We further assume that agents from the same infinite-population team employ the same strategy. As a result, the behavioral of the entire team can be represented by a \emph{typical agent}~\citep{huang2006large}.
As we will show later, due to the law of large numbers, the \ED{} of an infinite-population team converges to the state distribution of its corresponding typical agent. 
This limiting distribution is known as the \emph{mean-field}~\citep{huang2006large}.
In the sequel, we formulate the mean-field team game at its infinite-population limit and introduce several additional concepts that are necessary for the performance analysis in the next sections.

\subsection{Identical Team Strategies and Mean-Field Dynamics}
At the infinite population limit, we specifically focus on the Blue and Red teams that employ identical team strategies. Consequently, we first formalize the class of identical team strategies.
\begin{definition} [Identical Blue Team Strategy]
    \label{def:identical-policy}
    The Blue team strategy $\phi^{N_1} = \{\phi_{1},\ldots, \phi_{N_1}\}$ is an identical team strategy, if $\phi_{i_1,t} = \phi_{i_2,t}$ for all $i_1,i_2 \in [N_1]$ and all $t \in \{0, \ldots, T-1\}$. 
\end{definition}

We slightly abuse the notation and use $\phi$ to denote the identical Blue team strategy, under which all Blue agents apply the same individual strategy $\phi$.
Consequently, $\Phi$ is used to denote both the set of Blue individual strategies and the set of identical Blue team strategies.\footnote{When $\Phi$ denotes the set of identical team strategies, it is a subset of the original set of Blue team strategies $\Phi^{N_1}$, i.e., $\Phi \subseteq \Phi^{N_1}$.}
The definitions and notations extend to the identical Red team strategies.

With the notions of identical team strategies, we define the mean-field flow as a deterministic sequence of the state distribution of a typical agent under the given pair of strategies. 
\begin{definition}
    Given identical team strategies $\phi \in \Phi$ and $\psi \in \Psi$, the \MF{}s are defined as the sequence of vectors that adhere to the following coupled deterministic dynamics:
    \begin{subequations}
    \label{eqn:mf-dynamics}
    \begin{align}
        \mu^\rho_{t+1} (x') &= \sum_{x \in \X} \left[\sum_{u \in \U} f^\rho_t(x'|x, u, \mu^\rho_t, \nu^\rho_t )\; \phi_t(u|x,\mu^\rho_t, \nu^\rho_t) \right] \mu^\rho_t(x), \\
        \nu^\rho_{t+1} (y') &= \sum_{y \in \Y} \left[\sum_{v \in \V}g^\rho_t(y'|y, v, \mu^\rho_t, \nu^\rho_t ) \; \psi_t(v|y,\mu^\rho_t, \nu^\rho_t)\right] \nu^\rho_t(y),
    \end{align}
\end{subequations}
where the initial conditions are given by $\mu^\rho_0 = \mu_0$ and $\nu^\rho_0 = \nu_0$.
\end{definition}

The above deterministic mean-field dynamics can be expressed in a compact matrix form as 
\begin{equation}
    \begin{aligned}
        \label{eqn:coordinator-dynamics}
        \mu^\rho_{t+1} &= \mu^\rho_{t} F_t^{\rho}(\mu^\rho_t, \nu^\rho_t,\phi_t), \\
        \nu^\rho_{t+1} &= \nu^\rho_t G^\rho_t(\mu^\rho_t, \nu^\rho_t,\psi_t),
    \end{aligned}
\end{equation}
where $F_t^\rho \in \mathbb{R}^{|\X| \times |\X|}$ is the transition matrix for a typical Blue agent's distribution under the policy $\phi_t$ 
and its entries are given by $\left[F_t^{\rho}(\mu^\rho_t, \nu^\rho_t,\phi_t)\right]_{pq} = \sum_{u \in \U}f^\rho_t(q|p, u, \mu^\rho_t, \nu^\rho_t ) \; \phi_t(u|p, \mu_t^\rho, \nu_t^\rho)$. 
The matrix $G_t^\rho$ is defined similarly.

The following lemma shows that the \emph{deterministic} \MF{} above is an approximation of the (stochastic) finite-population \ED{}, and the approximation error goes to zero when $N_1, N_2 \to \infty$.
Thus, we can regard the mean-field as the empirical distribution of an infinite-population team.

\begin{restatable}{lemma}{mfp}
    \label{lmm:mf-aprx}
    Let $\bfX^{N_1}_t$, $\bfY^{N_2}_t$, $\M^{N_1}_t$ and $\N^{N_2}_t$ be the joint states and the corresponding \ED{}s of a finite-population game.
    Denote the next \ED{}s induced by an identical policy pair $(\phi_t, \psi_t) \in \Phi_t \times \Psi_t$ as $(\M_{t+1}^{N_1}, \N_{t+1}^{N_2})$.
    Then, the following bounds hold:
    \begin{align*}
        \expct{\dtv{\M_{t+1}^{N_1}, \M_{t+1}} \big\vert \; \bfX^{N_1}_t, \bfY^{N_2}_t}{\phi_t} \leq \frac{|\X|}{2}\sqrt{\frac{1}{N_1}},\\
        \expct{\dtv{\N_{t+1}^{N_2}, \N_{t+1}} \big\vert \; \bfX^{N_1}_t, \bfY^{N_2}_t}{\psi_t}  \leq \frac{|\Y|}{2}\sqrt{\frac{1}{N_2}},
    \end{align*}
    where 
    $\M_{t+1} = \M^{N_1}_{t} F_t^{\rho}(\M_{t}^{N_1}, \N_{t}^{N_2},\phi_t)$ and
    $\N_{t+1} = \N^{N_2}_t G^\rho_t(\M_{t}^{N_1}, \N_{t}^{N_2},\psi_t)$ according to~\eqref{eqn:coordinator-dynamics}.
\end{restatable}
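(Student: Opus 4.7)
The plan is to show that $\M_{t+1}$ is precisely the conditional expectation of $\M^{N_1}_{t+1}$ given $(\bfX^{N_1}_t, \bfY^{N_2}_t)$, and then control the fluctuation of the empirical distribution around its mean using independence across agents and a standard variance bound. The symmetry of the statement means it suffices to treat the Blue team; the Red bound follows by the same argument.

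First I would unpack what the identical policy assumption buys us. Since every Blue agent samples its action from the same kernel $\phi_t(\cdot \mid \cdot, \mu^{N_1}_t, \nu^{N_2}_t)$ and transitions via $f^\rho_t$ depending only on its own state and the current empirical distributions, the conditional law of the next states factorizes: given $(\bfX^{N_1}_t, \bfY^{N_2}_t)$, the random variables $\{X^{N_1}_{i,t+1}\}_{i=1}^{N_1}$ are mutually independent, and each $X^{N_1}_{i,t+1}$ has the law
\[
\pr{X^{N_1}_{i,t+1} = x' \mid \bfX^{N_1}_t, \bfY^{N_2}_t} = \sum_{u \in \U} f^\rho_t\bigl(x' \mid X^{N_1}_{i,t}, u, \mu^{N_1}_t, \nu^{N_2}_t\bigr)\, \phi_t\bigl(u \mid X^{N_1}_{i,t}, \mu^{N_1}_t, \nu^{N_2}_t\bigr),
\]
which is exactly the $(X^{N_1}_{i,t}, x')$-entry of the matrix $F^\rho_t(\mu^{N_1}_t, \nu^{N_2}_t, \phi_t)$. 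Grouping agents by their current state (so that the sum over $i$ becomes $N_1 \M^{N_1}_t(x)$ for each $x \in \X$), a direct computation then yields
\[
\expct{\M^{N_1}_{t+1}(x') \mid \bfX^{N_1}_t, \bfY^{N_2}_t}{\phi_t} = \bigl[\M^{N_1}_t F^\rho_t(\M^{N_1}_t, \N^{N_2}_t, \phi_t)\bigr](x') = \M_{t+1}(x'),
\]
identifying the deterministic mean-field update as the conditional mean of the empirical update.

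Next I would bound the expected total variation distance component-wise. By Jensen's inequality, for each $x' \in \X$,
\[
\expct{\abs{\M^{N_1}_{t+1}(x') - \M_{t+1}(x')} \mid \bfX^{N_1}_t, \bfY^{N_2}_t}{\phi_t} \leq \sqrt{\mathrm{Var}\bigl(\M^{N_1}_{t+1}(x') \mid \bfX^{N_1}_t, \bfY^{N_2}_t\bigr)}.
\]
Because $\M^{N_1}_{t+1}(x') = \frac{1}{N_1}\sum_{i=1}^{N_1} \indicator{x'}{X^{N_1}_{i,t+1}}$ is a normalized sum of conditionally independent Bernoulli indicators with success probabilities $p_i \in [0,1]$, the conditional variance is at most $\frac{1}{N_1^2}\sum_{i=1}^{N_1} p_i(1-p_i) \leq \frac{1}{4 N_1}$. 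Summing the resulting $\frac{1}{2\sqrt{N_1}}$ bound over $x' \in \X$ and dividing by two per the definition of $\dtv{\cdot,\cdot}$ yields the claimed bound $\frac{|\X|}{2}\sqrt{1/N_1}$ (in fact a factor of two tighter, which is harmless). The Red inequality follows verbatim with $g^\rho_t$, $\psi_t$, $\N^{N_2}_t$, and the matrix $G^\rho_t$.

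There is no real obstacle here; the only place to be careful is in justifying the conditional independence, which relies crucially on the identical policy assumption and the fact that the transition kernel depends on the opponent team only through the aggregate $\nu^{N_2}_t$ (not through the full joint state), so that conditioning on $(\bfX^{N_1}_t, \bfY^{N_2}_t)$ freezes all coupling between agents' randomization. Everything else is an application of Jensen's inequality and the Bernoulli variance bound.
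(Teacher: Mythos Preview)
Your proof is correct and actually takes a cleaner route than the paper. The paper first partitions the Blue agents into sub-populations according to their current state $x$, so that within each group the next states are i.i.d.; it then applies an $\ell^2$ weak law (its Lemma~\ref{lmm:l2-apprx}/Corollary~\ref{cor:l2-weak-law-iid}) to each group separately, recombines via a convex combination, and finishes with a Cauchy--Schwarz step (its Lemma~\ref{appdx-lmm:sum}) to collapse $\sum_x \sqrt{N_{1,t}^x}$. You instead work with all $N_1$ agents at once, observing that conditional independence (not identical distribution) is enough: you bound each coordinate via Jensen and the Bernoulli variance cap $p(1-p)\le 1/4$, then sum over $x'$. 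Your argument is shorter, avoids the auxiliary lemmas, and even yields a constant a factor of two better. The paper's sub-population decomposition is not wasted effort, however: exactly the same scaffolding is reused in the proof of Lemma~\ref{lmm:mf-apprx-team-policy}, where agents at the same state may follow \emph{different} policies and the averaging needed to define $\pi_{\apprx,t}$ is naturally organized state-by-state.
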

\begin{proof}
    See Appendix~\ref{appdx:prop-mf-apprx}
\end{proof}

\subsection{Infinite-Population Optimization}
For the infinite-population game, the cumulative reward induced by the strategy pair $(\phi, \psi) \in \Phi \times \Psi$ is given by 
\begin{equation}
    \label{eqn:mf-optimization-phi-psi}
    J^{\rho, \phi, \psi}(\mu^\rho_0, \nu^\rho_0) = \sum_{t=0}^{T} r_t^\rho(\mu^\rho_t, \nu^\rho_t),
\end{equation}
where the propagation of the mean-fields $\mu^\rho_t$ and $\nu^\rho_t$ are subject to the dynamics~\eqref{eqn:mf-dynamics}.
The worst-case performance of the Blue team is given by the lower game value
\begin{equation}\label{eqn:mf-optimization-lower}
    \underline{J}^{\rho*} (\mu^\rho_0, \nu^\rho_0) = \max_{\phi\in \Phi} \; \min_{\psi \in \Psi} ~ J^{\rho, \phi, \psi}(\mu^\rho_0, \nu^\rho_0).
\end{equation}

The Red team instead uses the upper game value based on the following min-max optimization
\begin{equation}\label{eqn:mf-optimization-upper}
    \bar{J}^{\rho*} (\mu^\rho_0, \nu^\rho_0) =  \min_{\psi \in \Psi} \max_{\phi \in \Phi}~ J^{\rho, \phi, \psi}(\mu^\rho_0, \nu^\rho_0).
\end{equation}

\begin{remark}
    Different from the finite-population value~\eqref{eqn:origin-optimization} which takes joint states as argument, the infinite-population game value~\eqref{eqn:mf-optimization-lower} takes \MF{}s. 
    This is due to the assumed identical team strategy, which no longer requires agents' index information to differentiate them and sample actions. Consequently, the \MF{}s can serve as the information state as in~\citep{arabneydi2015team}.
    The difference comes from the non-identical strategies considered in the finite-population game, which require each agent's state and index information to sample actions and predict the game's evolution.
\end{remark}

\section{Zero-Sum Game Between Coordinators}
\label{sec:coordinator-game}

The mean-field sharing structure in~\eqref{eqn:IS} allows us to reformulate the \emph{infinite}-population competitive team problem~\eqref{eqn:mf-optimization-lower} as an equivalent two-player game from the perspective of two fictitious%
\footnote{The coordinators are fictitious since they are introduced as an auxiliary concept and are not required for the actual implementation of the obtained strategies.}
coordinators.
The coordinators know the common information (\MF{}s) and selects a local policy that maps each agent's local information (individual state) to its actions. 
Through this common-information approach~\citep{nayyar2013decentralized}, we provide a dynamic program that constructs optimal strategies for all agents under the original mean-field sharing information structure. 

\subsection{Equivalent Centralized Problem}

We use $\pi_t : \U \times \X \to [0,1]$ to denote a local Blue policy, which is \emph{open-loop} with respect to the \MF{}s. 
Specifically, $\pi_t (u|x)$ is the probability that a Blue agent selects action $u$ at state $x$ regardless of the current \MF{}s.
The set of open-loop Blue local policies is denoted as $\Pi_t$. 
Similarly, $\sigma_t : \V \times \Y \to [0,1]$ and $\Sigma_t$ denote a Red local policy and its admissible set.
Under the local policy $\pi_t$, the Blue \MF{} propagates as
\begin{equation}
    \label{eqn:mf-dynamics-local}
    \mu^\rho_{t+1} (x') \!= \sum_{x \in \X} \Big[ \sum_{u \in \U}f^\rho_t(x'|x, u, \mu^\rho_t, \nu^\rho_t ) \pi_t(u|x)\Big]\mu^\rho_t(x),
\end{equation}   
and the Red team \MF{} dynamics under Red local policies is defined similarly as
\begin{equation*}
    \nu^\rho_{t+1} (y') \!= \sum_{y \in \Y} \Big[ \sum_{v \in \V}g^\rho_t(y'|y, v, \mu^\rho_t, \nu^\rho_t ) \sigma_t(v|y)\Big]\nu^\rho_t(y).
\end{equation*}  

At each time $t$, a Blue coordinator observes the \MF{}s of both teams (common information) and prescribes a local policy $\pi_t\in \Pi_t$ to all Blue agents within its team. 
The local policy is selected based on:
\begin{equation*}
    \pi_t = \alpha_t \big(\mu^\rho_t,\nu^\rho_t \big),
\end{equation*}
where  $\alpha_t: \P(\X) \times \P(\Y) \to \Pi_t$ is a deterministic Blue \textit{coordination policy},
and $\pi_t(u_t|x_t) \triangleq \alpha_t(\mu^\rho_t,\nu^\rho_t)(u_t|x_t)$ gives the probability that a Blue agent selects action $u_t$ given its current state $x_t$.
Similarly, the Red coordinator observes the \MF{}s and selects a local policy $\sigma_t \in \Sigma_t$ according to
$\sigma_t = \beta_t \big(\mu^\rho_t, \nu^\rho_t\big)$.
We refer to the time sequence $\alpha \hspace{-0.02in}=\hspace{-0.02in}\big(\alpha_1, \ldots, \alpha_{T-1}\big)$ as the Blue team \textit{coordination strategy}
and $\beta=\big(\beta_1, \ldots, \beta_{T-1}\big)$ as the Red team coordination strategy.
The sets of admissible coordination strategies are denoted as $\A$ and $\B$.

\begin{remark}
    \label{rmk:equivalent-policy}
    There is a one-to-one correspondence between the coordination strategies and the identical team strategies.
    For example, given an \emph{identical} Blue team strategy $\phi \in \Phi$, one can define the coordination strategy:
    \begin{equation*}
        \alpha_t (\mu_t, \nu_t) = \pi_t \quad \st \pi_t(u_t|x_t) = \phi_t(u_t|x_t, \mu_t, \nu_t) \quad \forall \mu_t \in \P(\X), \; \nu_t \in \P(\Y),\; x_t \in \X \text{ and } u_t \in \U.
    \end{equation*}
    Similarly, a Blue coordination strategy $\alpha \in \A$ induces an identical team strategy $\phi \in \Phi$ according to the rule
    \begin{equation*}
        \phi_{t}(u_t|x_t, \mu^\rho_t, \nu^\rho_t) = \underbrace{\Big(\alpha_t(\mu^\rho_t,\nu^\rho_t)\Big)}_{\displaystyle{\pi_t}}(u_t|x_t) \quad \forall \mu_t \in \P(\X), \; \nu_t \in \P(\Y),\; x_t \in \X \text{ and } u_t \in \U.
    \end{equation*}

\end{remark}


Plugging in the \textit{deterministic} coordination strategies, the mean-fields dynamics in~\eqref{eqn:mf-dynamics-local} becomes
\begin{subequations}
\label{eqn:cor-mf-dynamics}
\begin{alignat}{2}
    \mu^\rho_{t+1} &= \mu^\rho_t F_t^\rho (\mu^\rho_t, \nu^\rho_t, \underbrace{\alpha_t(\mu^\rho_t,\nu^\rho_t)}_{\displaystyle{\pi_t}}), 
    \\
    \nu^\rho_{t+1} &=\nu^\rho_t G^\rho_t(\mu^\rho_t, \nu^\rho_t, \beta_t(\mu^\rho_t,\nu^\rho_t)).
\end{alignat}
\end{subequations}
For notational simplicity, we will use the shorthand notations $F_t^\rho (\mu^\rho_t, \nu^\rho_t, \alpha_t)$ and $G^\rho_t(\mu^\rho_t, \nu^\rho_t, \beta_t)$.

The original competitive team problem in~\eqref{eqn:mf-optimization-lower} can now be viewed as an equivalent zero-sum game played between the two coordinators,
where the game state is the joint mean-field $(\mu_t^\rho, \nu_t^\rho)$,
and the actions are the local policies $\pi_t$ and $\sigma_t$ selected by the coordinators. Figure~\ref{fig:schematic} provides a schematic of the equivalent centralized system.

Formally, the zero-sum coordinator game can be defined via a tuple 
\begin{equation}
    \label{eqn:coordinator-game-tuple}
    \texttt{ZS-CG} = \langle \P(\X), \P(Y), \Pi_t, \Sigma_t, F^\rho_t, G^\rho_t, r^\rho_t, \rho, T\rangle.
\end{equation}
In particular, the continuous game state space is $\P(\X) \times \P(Y)$, and the continuous action spaces are $\Pi_t$ and $\Sigma_t$ for the Blue and Red coordinators, respectively. 
The deterministic dynamics of the game is given in~\eqref{eqn:cor-mf-dynamics}. 
Finally, the reward structure $r^\rho_t$ is given in~\eqref{eqn:reward}, and the horizon $T$ is the same as in the original game. 

\begin{figure}[t]
    \vspace{-0.2in}
    \centering
    \includegraphics[width = 0.6\linewidth]{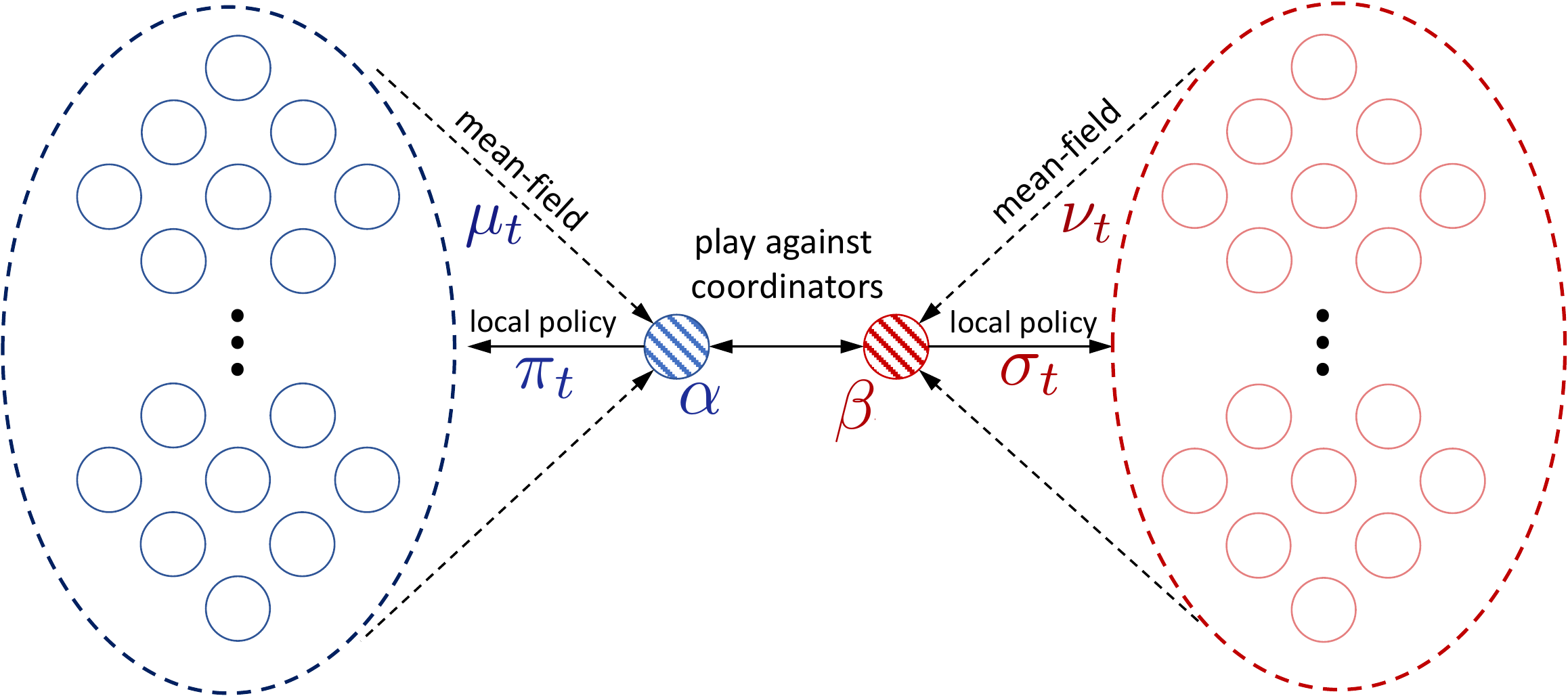}
    \caption{A schematic of the proposed common-information approach for the mean-field zero-sum team games.}
    \label{fig:schematic}
    \vspace{-0.1in}
\end{figure}

Given two coordination strategies $\alpha \in \A$ and $\beta \in \B$, the cumulative rewards of the coordinator game is defined as 
\begin{equation}
    \label{eqn:mf-optimization-alpha-beta}
    J_\cor^{\rho, \alpha, \beta}(\mu^\rho_0, \nu^\rho_0) = \sum_{t=0}^{T} r_t^\rho(\mu^\rho_t, \nu^\rho_t),
\end{equation}
where the propagation of the mean-fields $\mu^\rho_t$ and $\nu^\rho_t$ are subject to the dynamics~\eqref{eqn:cor-mf-dynamics}.
The worst-case performance of the Blue team is given by the lower coordinator game value
\begin{equation}\label{eqn:cor-mf-optimization-lower}
    \underline{J}_\cor^{\rho*} (\mu^\rho_0, \nu^\rho_0) = \max_{\alpha \in \A} \; \min_{\beta \in \B} ~ J_\cor^{\rho, \alpha, \beta}(\mu^\rho_0, \nu^\rho_0),
\end{equation}
and the upper value for the Red team is based on the following min-max optimization
\begin{equation}\label{eqn:cor-mf-optimization-upper}
    \bar{J}_\cor^{\rho*} (\mu^\rho_0, \nu^\rho_0) =  \min_{\beta \in \beta} \max_{\alpha \in \A}~ J_\cor^{\rho, \alpha, \beta}(\mu^\rho_0, \nu^\rho_0).
\end{equation}


In the next section, we examine the properties of the max-min (lower) and min-max (upper) game values of the coordinator game.

\subsection{Value Functions of the Coordinator Game}
\label{sec:optimal-value}

Similar to the standard two-player zero-sum games, we use a backward induction scheme to find the lower and upper values of the coordinator game. 
In general, the max-min value need not be equal to the min-max value~\citep{elliott1972existence}.
Consequently, from the Blue coordinator's perspective, we consider the lower value (max-min), which provides the highest guaranteed performance for the maximizing Blue team in a worst-case scenario.

The terminal lower value at time $T$ is defined as 
\begin{equation}
    \label{eqn:J-ab-T}
    \lowervalue_{\cor,T}^{\rho *}(\mu^\rho_T, \nu^\rho_T) = r^\rho_T(\mu^\rho_T, \nu^\rho_T).
\end{equation}
For all previous time steps $t=0,\ldots, T-1$, the two coordinators optimize their cumulative reward function by choosing their actions (i.e., local policies) $\pi_t$ and $\sigma_t$.
Consequently, for all $t=0,\ldots,T-1,$ we have
\begin{equation}
    \label{eqn:j-ab-t-blue}
    \lowervalue_{\cor,t}^{\rho*}(\mu^\rho_t, \nu^\rho_t) = r^\rho_t(\mu^\rho_t, \nu^\rho_t) +
    \max_{\pi_t \in \Pi_t}~ \min_{\sigma_t \in \Sigma_t}
    \lowervalue_{\cor,t+1}^{\rho*}\big(\mu^\rho_t F^\rho_t(\mu^\rho_t, \nu^\rho_t, \pi_t), \nu^\rho_t G^\rho_t(\mu^\rho_t, \nu^\rho_t, \sigma_t)\big).
\end{equation}

With the optimal value function, the optimal Blue team coordination policies can then be easily constructed via
\begin{align}
    \label{eqn:optimal-blue-policy}
    \alpha^*_t(\mu^\rho_t, \nu^\rho_t) &\in \argmax_{\pi_t \in \Pi_t}~ \min_{\sigma_t \in \Sigma_t}
    \lowervalue_{\cor,t+1}^{\rho*}\big(\mu^\rho_t F_t^{\rho}(\mu^\rho_t, \nu^\rho_t,\pi_t), \nu_t^\rho G^\rho_t(\mu^\rho_t, \nu^\rho_t, \sigma_t)\big).
\end{align}

Similarly, for the Red team coordinator, the upper values are computed as 
\begin{subequations}
    \label{eqn:j-ab-t-red}
    \begin{align}
    &\uppervalue_{\cor,T}^{\rho *}(\mu^\rho_T, \nu^\rho_T) = r^\rho_T(\mu^\rho_T, \nu^\rho_T),
    \\
    &\uppervalue_{\cor,t}^{\rho*}(\mu^\rho_t, \nu^\rho_t) = r_t(\mu^\rho_t, \nu^\rho_t) +
    \min_{\sigma_t \in \Sigma_t} \max_{\pi_t \in \Pi_t} 
    \uppervalue_{\cor,t+1}^{\rho*}\big(\mu^\rho_t F^\rho_t(\mu^\rho_t, \nu^\rho_t, \pi_t), \nu^\rho_t G^\rho_t(\mu^\rho_t, \nu^\rho_t, \sigma_t)\big),~~ t=0,\ldots,T-1,
    \end{align}
\end{subequations}
and the optimal Red team coordination policy is given by
\begin{align}
    \label{eqn:optimal-red-policy}
    \beta^*_t(\mu^\rho_t, \nu^\rho_t) &\in  \argmin_{\sigma_t \in \Sigma_t}~\max_{\pi_t \in \Pi_t}
    \uppervalue_{\cor,t+1}^{\rho*}\big(\mu^\rho_t F_t^{\rho}(\mu^\rho_t, \nu^\rho_t,\pi_t), G^\rho_t(\mu^\rho_t, \nu^\rho_t, \sigma_t)\big).
\end{align}

Note that the optimal Blue team coordination strategy induces an identical Blue team strategy that satisfies the mean-field sharing information structure and can be implemented in the finite-population game (Remark~\ref{rmk:equivalent-policy}).

\subsection{Reachable Sets} \label{sec:reachable-set}
At the infinite-population limit, the \MF{} dynamics in~\eqref{eqn:coordinator-dynamics} is deterministic, 
and thus selecting the local policies $\pi_t$ and $\sigma_t$ at time $t$ is equivalent to selecting the desirable \MF{}s at the next time step. 
Consequently, we examine the set of \MF{}s that can be reached from the current \MF{}s.

\begin{definition}
    \label{def:reachable-set}
    The Blue reachable set, starting from the mean-fields $\mu^\rho_t$ and $\nu^\rho_t$, is defined as the set comprising all the possible next Blue team mean-fields  $\mu^\rho_{t+1}$ that can be achieved by employing a local policy $\pi_t \in \Pi_t$. Formally, 
    \begin{equation}
        \label{eqn:blue-reachable-set-def}
        \RMSet{t}{\rho} = \left\{\mu^\rho_{t+1} \;\vert\; \exists \pi_t \in \Pi_t \st \mu^\rho_{t+1} = \mu^\rho_t F^{\rho}_t(\mu^\rho_t,\nu^\rho_t, \pi_t)\right\}.
    \end{equation}
    Similarly, the Red reachable set is defined as 
    \begin{equation}
        \label{eqn:red-reachable-set-def}
         \RNSet{t}{\rho} = \left\{\nu^\rho_{t+1} \; \vert \; \exists \sigma_t \in \Sigma_t \st \nu^\rho_{t+1} = \nu^\rho_t G^{\rho}_t(\mu^\rho_t,\nu^\rho_t, \sigma_t)\right\}.
    \end{equation}
\end{definition}

We will regard the reachable sets as set-valued functions (correspondences)~\citep{freeman2008robust}. 
In this case, we write $\R^\rho_{\mu,t} : \P(\X) \times \P(\Y) \rightsquigarrow \P(\X)$,
and similarly $\R^\rho_{\nu,t} : \P(\X) \times \P(\Y) \rightsquigarrow \P(\Y)$.

The following lemma justifies using the reachable sets constructed based on the local policies to analyze the reachability of identical team policies.
\begin{lemma}
    For all $\mu_t^\rho \in \P(\X)$ and $\nu_t^\rho \in \P(\Y)$, we have that 
    \begin{equation}
        \left\{\mu^\rho_{t+1} \;\vert\; \exists \phi_t \in \Phi_t \st \mu^\rho_{t+1} = \mu^\rho_t F^{\rho}_t(\mu^\rho_t,\nu^\rho_t, \phi_t)\right\} = \RMSet{t}{\rho}.
    \end{equation}
\end{lemma}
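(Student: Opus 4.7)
The plan is to prove the equality by double inclusion, exploiting the fact that an identical team policy $\phi_t(u|x,\mu,\nu)$ depends on the mean fields only as a parameter, while a local policy $\pi_t(u|x)$ does not. At a fixed realization of $(\mu_t^\rho,\nu_t^\rho)$, these two objects are interconvertible in a natural way, and this is the only ingredient needed.

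For the inclusion $\subseteq$, I would fix $\mu_t^\rho\in\P(\X)$, $\nu_t^\rho\in\P(\Y)$ and any identical Blue team policy $\phi_t\in\Phi_t$ that realizes a next mean field $\mu_{t+1}^\rho = \mu_t^\rho F_t^\rho(\mu_t^\rho,\nu_t^\rho,\phi_t)$. Define the open-loop local policy $\pi_t(u|x) \triangleq \phi_t(u|x,\mu_t^\rho,\nu_t^\rho)$ for all $x\in\X$, $u\in\U$. Since $\phi_t(\cdot|x,\mu_t^\rho,\nu_t^\rho)$ is a probability distribution on $\U$ for each $x$, we have $\pi_t\in\Pi_t$. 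Inspecting the definition of the transition matrix, the entries $[F_t^\rho(\mu_t^\rho,\nu_t^\rho,\phi_t)]_{pq}$ and $[F_t^\rho(\mu_t^\rho,\nu_t^\rho,\pi_t)]_{pq}$ coincide because both evaluate $\phi_t$ (respectively $\pi_t$) at $(\mu_t^\rho,\nu_t^\rho)$. Hence $\mu_{t+1}^\rho = \mu_t^\rho F_t^\rho(\mu_t^\rho,\nu_t^\rho,\pi_t)\in\RMSet{t}{\rho}$.

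For the reverse inclusion $\supseteq$, take $\mu_{t+1}^\rho\in\RMSet{t}{\rho}$ with witness $\pi_t\in\Pi_t$. Construct an identical team policy by the constant extension $\phi_t(u|x,\mu,\nu)\triangleq \pi_t(u|x)$ for every $\mu\in\P(\X)$, $\nu\in\P(\Y)$; this is a valid element of $\Phi_t$. Evaluated at $(\mu_t^\rho,\nu_t^\rho)$, the induced transition matrix agrees with $F_t^\rho(\mu_t^\rho,\nu_t^\rho,\pi_t)$, so $\mu_{t+1}^\rho$ is achievable by $\phi_t$ and lies in the left-hand set of the claimed equality.

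There is essentially no obstacle: the result is a pointwise reduction that makes precise the one-to-one correspondence already noted in Remark~\ref{rmk:equivalent-policy}. The only thing to be mildly careful about is that the mean-field dynamics at time $t$ depends on $\phi_t$ exclusively through its evaluation at the \emph{current} $(\mu_t^\rho,\nu_t^\rho)$, so feedback on the mean fields is irrelevant for the one-step reachability question and the two policy classes produce identical reachable sets at every state.
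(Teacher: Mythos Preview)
Your proof is correct and follows essentially the same approach as the paper: the paper's one-line proof simply invokes Remark~\ref{rmk:equivalent-policy}, whose two constructions (restricting $\phi_t$ at the current mean fields to obtain $\pi_t$, and constantly extending $\pi_t$ to obtain $\phi_t$) are exactly the maps you write out explicitly for the double inclusion.
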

\begin{proof}
    This lemma is a direct consequence of Remark~\ref{rmk:equivalent-policy}.
    
\end{proof}

\begin{remark}
    Note that the reachable sets are constructed based on identical team strategies, since under the coordinator game formulation, all agents in the same team follow the same local policies prescribed by their coordinator.
\end{remark}

\subsection{Equivalent Form of Value Functions} 
\label{subsec:equivalent-opt}

We can now change the optimization domains in~\eqref{eqn:J-ab-T} and~\eqref{eqn:j-ab-t-blue} from the policy spaces to the corresponding reachable sets. 
One can easily see that the following lower value propagation scheme is equivalent to the one in~\eqref{eqn:J-ab-T} and~\eqref{eqn:j-ab-t-blue}.
\begin{equation}
    \label{eqn:lower-j-ab-t-rset}
    \begin{aligned}
        \lowervalue_{\cor,T}^{\rho*}(\mu^\rho_T, \nu^\rho_T) &= r^\rho_T(\mu^\rho_T, \nu^\rho_T)\\
        \lowervalue_{\cor,t}^{\rho*}(\mu^\rho_t, \nu^\rho_t) &= r^\rho_t(\mu^\rho_t, \nu^\rho_t) +
        \max_{\mu_{t+1}^\rho \in \mathcal{R}_{\mu,t}^\rho(\mu_t^\rho,\nu_t^\rho)}~ \min_{\nu_{t+1}^\rho \in \mathcal{R}_{\nu,t}^\rho(\mu_t^\rho,\nu_t^\rho)}
        \lowervalue_{\cor,t+1}^{\rho*}(\mu^\rho_{t+1}, \nu^\rho_{t+1}), \quad t= 0,\ldots, T-1.
    \end{aligned}
\end{equation}

Given the current mean-fields $(\mu_t^\rho, \nu_t^\rho)$, the optimal next mean-field to achieve is then given by 
\begin{equation*}
    \mu_{t+1}^{\rho*} \in \argmax_{\mu_{t+1}^\rho \in \mathcal{R}_{\mu,t}^\rho(\mu_t^\rho,\nu_t^\rho)}~ \min_{\nu_{t+1}^\rho \in \mathcal{R}_{\nu,t}^\rho(\mu_t^\rho,\nu_t^\rho)}
        \lowervalue_{\cor,t+1}^{\rho*}(\mu^\rho_{t+1}, \nu^\rho_{t+1}).
\end{equation*}
The Blue coordination policy that leads to $\mu_{t+1}^{\rho*}$ can be constructed via a simple linear program leveraging the linearity of the mean-field dynamics with respect to the policy.
See~Appendix~\ref{appdx-sec:LP} for details.

In the sequel, we primarily work with the reachability-based optimization in~\eqref{eqn:lower-j-ab-t-rset}.
There are two advantages to this approach:
First, the reachable sets generally have a lower dimension than the coordinator action spaces,
\footnote{The Blue reachable set is a subset of $\P(\X)$, while the Blue coordinator action space is given by $\Pi_t = (\P(\U))^{|\X|}$.}, 
which is desirable for numerical algorithms;
Second, the reachability-based optimization allows us to 
compare the ``reachability" induced by non-identical and identical team strategies (Theorem~\ref{thm:R-set-rich} in Section~\ref{sec:performance}) and then study the performance loss due to the identical strategy assumption.

\subsection{Existence of the Coordinator Game Value}
\label{sec:existence-game-value}
In general, the coordinator game value may not exist, i.e, the max-min and min-max values differ (see Numerical Example 1 in Section~\ref{sec:example}). 
However, we can show that the existence of coordinator game value for a special class of mean-field team games, where the dynamics of each individual agent is independent of other agents' states (both its teammates and its opponents).
It is left as a future research direction to obtain more general conditions that ensure the existence of the game value.

\begin{restatable}{definition}{IDD}
    \label{def:independent-dynamics}
        We say that the weakly-coupled dynamics are independent if the following holds for all $\mu_t \in \P(\X)$, $\nu_t \in \P(\Y)$ and $t\in \{0, \ldots, T-1\}$,
        \begin{equation}
            \label{eqn:independent-dynamics}
            \begin{aligned}
                f^\rho_t(x_{t+1} \vert x_t, u_t, \mu_t, \nu_t) &= \bar{f}_t(x_{t+1} \vert x_t, u_t),
                \\
                g^\rho_t(y_{t+1}\vert y_{t}, v_{t}, \mu_t, \nu_t) &= \bar{g}_t(y_{t+1}\vert y_{t}, v_{t}),
            \end{aligned}
        \end{equation}
        for some transition kernels $\bar{f}_t$ and $\bar{g}_t$. 
\end{restatable}

In other words, the dynamics of an agent only depends on that agent's current state and action and is fully decoupled from all other agents. However, we still allow reward coupled via the \MF{}s.
The following theorem ensures the existence of the game value under independent dynamics. 

\begin{restatable}{theorem}{EGV}
    \label{thm:game-value}
    Suppose the reward function $r^\rho_t$ is concave-convex for all $t \in \{0, \ldots, T\}$ and the system dynamics is independent.
    Then, 
    the game value exists. 
\end{restatable}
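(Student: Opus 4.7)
The plan is to reduce the coordinator game to a static zero-sum game on two convex compact trajectory sets and then invoke Sion's minimax theorem. First, I would observe that under independent dynamics the transition matrices $F^\rho_t$ and $G^\rho_t$ depend only on their respective local policies $\pi_t,\sigma_t$ and no longer on the mean-fields. Consequently, an \emph{open-loop} Blue coordinator strategy---one that commits to a sequence of local policies $\pi_0,\ldots,\pi_{T-1}$ independent of the observed mean-fields---unilaterally determines a Blue mean-field trajectory $\mu^\rho_{0:T}$, regardless of Red's play; symmetrically for Red. Let $\mathcal{T}_\mu\subseteq(\P(\X))^{T+1}$ denote the set of all Blue trajectories reachable from $\mu^\rho_0$ by some open-loop strategy, and let $\mathcal{T}_\nu$ denote the analogous Red trajectory set.

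Second, I would prove that $\mathcal{T}_\mu$ and $\mathcal{T}_\nu$ are convex and compact. The key device is the \emph{state-action occupation sequence} $w_t(x,u):=\pi_t(u|x)\mu^\rho_t(x)$ on $\X\times\U$, which satisfies the marginal constraint $\sum_u w_t(x,u)=\mu^\rho_t(x)$ and propagates the mean-field linearly via $\mu^\rho_{t+1}(x')=\sum_{x,u}\bar{f}_t(x'|x,u)w_t(x,u)$. Given two trajectories in $\mathcal{T}_\mu$ with occupation sequences $w^{(1)},w^{(2)}$, the combination $\lambda w^{(1)}+(1-\lambda)w^{(2)}$ is again a valid occupation sequence whose marginal is the corresponding convex combination of the mean-fields, yielding the convex-combined trajectory; hence $\mathcal{T}_\mu$ is convex. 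Compactness is immediate since $\mathcal{T}_\mu$ is contained in the compact product $(\P(\X))^{T+1}$ and cut out by finitely many closed linear equalities; equivalently, it is the image of the compact policy product $\Pi_0\times\cdots\times\Pi_{T-1}$ under the continuous trajectory map.

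Third, I would apply Sion's minimax theorem to the static game
\begin{equation*}
v:=\max_{\mu\in\mathcal{T}_\mu}\min_{\nu\in\mathcal{T}_\nu}\sum_{t=0}^{T}r^\rho_t(\mu_t,\nu_t)=\min_{\nu\in\mathcal{T}_\nu}\max_{\mu\in\mathcal{T}_\mu}\sum_{t=0}^{T}r^\rho_t(\mu_t,\nu_t),
\end{equation*}
which is justified by the convex/compact strategy sets, the continuity of $r^\rho_t$ (Assumption~\ref{assmpt:lipschitiz-rewards}), and its coordinate-wise concave-convex structure. Fix a saddle point $(\mu^\star,\nu^\star)$ of this static game. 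To show $\lowervalue_\cor^{\rho*}\ge v$, let Blue commit to the open-loop coordinator strategy $\alpha^\star$ producing $\mu^\star$: against any Red $\beta$, the induced Red trajectory lies in $\mathcal{T}_\nu$, so the cumulative reward is at least $\min_{\nu\in\mathcal{T}_\nu}\sum_t r^\rho_t(\mu^\star_t,\nu_t)=v$. For the reverse inequality, given any Blue $\alpha$, let Red play the open-loop strategy $\beta^\star$ producing $\nu^\star$; the induced Blue trajectory $\mu^\rho_{0:T}[\alpha,\beta^\star]$ lies in $\mathcal{T}_\mu$, so the reward is at most $\max_{\mu\in\mathcal{T}_\mu}\sum_t r^\rho_t(\mu_t,\nu^\star_t)=v$. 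A symmetric argument gives $\uppervalue_\cor^{\rho*}=v$, establishing existence of the game value. The main subtlety is the \emph{feedback} nature of coordinator strategies: a feedback $\alpha$ might be suspected of exploiting $\nu^\rho_t$ to outperform any open-loop one, but the upper-bound step using Red's commitment to $\nu^\star$ shows that feedback cannot breach the static value $v$.
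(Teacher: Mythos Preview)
Your proof is correct and takes a genuinely different route from the paper. The paper proceeds by backward induction through the dynamic-programming recursion~\eqref{eqn:lower-j-ab-t-rset}: under independent dynamics it first shows (Lemma~\ref{lmm:rset-indep-dynamics}) that $\R^\rho_{\mu,t}$ is convex in $\mu$ and constant in $\nu$ (and symmetrically for $\R^\rho_{\nu,t}$), then uses a structural lemma (Lemma~\ref{lmm:concave-convex}) asserting that the max-min marginal function of a concave-convex $g$ over such correspondences is again concave-convex. Inductively this makes every stage value $\lowervalue_{\cor,t}^{\rho*}$ concave-convex, and Sion's theorem applied at each stage to the compact convex reachable sets gives $\lowervalue_{\cor,t}^{\rho*}=\uppervalue_{\cor,t}^{\rho*}$ for all $t$.

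Your argument instead collapses the dynamic game to a single static game over full trajectory spaces $\mathcal{T}_\mu,\mathcal{T}_\nu$, exploiting the observation that under independent dynamics each team's mean-field trajectory is determined solely by its own (open-loop) local-policy sequence. The occupation-measure device is a clean way to establish convexity of the trajectory sets, and the sandwich argument (Blue commits to $\mu^\star$ for the lower bound, Red commits to $\nu^\star$ for the upper bound) correctly handles feedback strategies. Your route is more global and invokes Sion only once; the paper's route is more local but yields the stronger byproduct that every intermediate value function $\lowervalue_{\cor,t}^{\rho*}(\cdot,\cdot)$ is itself concave-convex, which may be useful elsewhere (e.g., for discretization error analysis). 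Both proofs rely essentially on the same two ingredients---decoupling of the teams' mean-field flows and concave-convexity of the rewards---just assembled differently.
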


\begin{proof}
    One can show that the optimal value under independent dynamics is concave-convex and apply the minimax theorem. 
    The detailed proof is given in Appendix~\ref{appdx-sec:game-value}.
\end{proof}

\section{Main Results}  \label{sec:performance}
Recall that the optimal Blue team coordination strategy $\alpha^*$ is constructed for the infinite-population game assuming that both teams employ identical team strategies.
This section establishes the performance guarantees for $\alpha^*$ in the finite-population games where both teams are allowed to deploy non-identical strategies.

\subsection{Approximation Error}

As $\alpha^*$ is solved at the infinite-population limit, it is essential to understand how well the infinite-population game approximates the original finite-population problem.
In this subsection, we show that that the reachable set constructed using identical strategies and the mean-field dynamics is rich enough to approximate any empirical distributions induced by \emph{non-identical} team strategies in finite-population games. 
We start with constructing local policies that mimics the behaviors of non-identical team strategies.

\begin{restatable}{lemma}{mfa}
    \label{lmm:mf-apprx-team-policy}
    Let $\bfX^{N_1}_t$, $\bfY^{N_2}_t$, $\M^{N_1}_t$ and $\N^{N_2}_t$ be the joint states and the corresponding \ED{}s of a finite-population game.
    Given any Blue team policy $\phi^{N_1}_t \in \Phi^{N_1}_t$ (potentially non-identical),
    define the following local policy
    \begin{equation}
        \label{eqn:apprx-local-policy}
        \pi_{\apprx, t} (u|x) = \left\{
        \begin{array}{ll}
             \frac{\sum_{i=1}^N \indicator{x}{X_{i,t}^{N_1}} \phi_{i,t}(u|x,\M^{N_1}_t, \N^{N_2}_t)}{N_1 \M^{N_1}_t(x)} \quad 
             &  \text{if } ~~ \M_t^{N_1}(x) >0, \\
             {1}/{|\U|} \quad 
             &  \text{if } ~~ \M_t^{N_1}(x) =0.
        \end{array}
        \right .
    \end{equation}
    Further, define the next mean-field induced by $\pi_{\apprx, t}$ as 
    \begin{equation}
        \label{eqn:blue-apprx-mf}
        \M_{\apprx, t+1} = \M^{N_1}_t F^\rho_t(\M^{N_1}_t, \N^{N_2}_t, \pi_{\apprx, t}).
    \end{equation}
    Then, the expected distance between the next Blue \ED{} $\M^{N_1}_{t+1}$ induced by the team policy $\phi^{N_1}_t$ and the mean-field $\M_{\apprx, t+1}$ satisfies
    \begin{equation}
        \label{eqn:mf-apprx-joint-state}
        \mathbb{E}\left[ \dtv{
    \M^{N_1}_{t+1}, \M_{\apprx, t+1}}\vert \bfX_t^{N_1}, \bfY_t^{N_2}\right] \leq \frac{|\X|}{2} \sqrt{\frac{1}{N_1}}.
    \end{equation}
\end{restatable}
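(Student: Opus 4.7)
The plan is to identify $\M_{\apprx,t+1}$ as exactly the conditional expectation of $\M^{N_1}_{t+1}$ given the current joint states, and then bound the fluctuation of the empirical distribution around its conditional mean using the conditional independence of agents' transitions.

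First, I would compute the conditional expectation component-wise. For each $x' \in \X$,
\begin{align*}
\E[\M^{N_1}_{t+1}(x') \mid \bfX_t^{N_1}, \bfY_t^{N_2}]
&= \frac{1}{N_1}\sum_{i=1}^{N_1}\sum_{u \in \U} \phi_{i,t}(u\mid X_{i,t}^{N_1}, \M_t^{N_1}, \N_t^{N_2})\, f^\rho_t(x' \mid X_{i,t}^{N_1}, u, \M_t^{N_1}, \N_t^{N_2}).
\end{align*}
Grouping agents by their current state $x \in \X$ and extracting the factor $N_1 \M_t^{N_1}(x)$ counting the agents at $x$, the inner average of $\phi_{i,t}(u\mid x, \cdot)$ over agents with $X_{i,t}^{N_1}=x$ is exactly $\pi_{\apprx,t}(u\mid x)$ by the construction in~\eqref{eqn:apprx-local-policy}. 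States $x$ with $\M_t^{N_1}(x)=0$ contribute nothing, so the uniform fallback is immaterial. Comparing with~\eqref{eqn:blue-apprx-mf}, this shows $\E[\M^{N_1}_{t+1}(x') \mid \bfX_t^{N_1}, \bfY_t^{N_2}] = \M_{\apprx, t+1}(x')$.

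Next, conditional on $\bfX_t^{N_1}, \bfY_t^{N_2}$, the next states $X_{i,t+1}^{N_1}$ are independent across $i$, so $N_1 \M^{N_1}_{t+1}(x') = \sum_{i=1}^{N_1} \mathds{1}_{x'}(X_{i,t+1}^{N_1})$ is a sum of independent Bernoulli variables whose conditional mean equals $N_1 \M_{\apprx,t+1}(x')$. Using the bound $p(1-p) \leq 1$ on each Bernoulli variance,
\[
\mathrm{Var}\bigl(\M^{N_1}_{t+1}(x') \mid \bfX_t^{N_1}, \bfY_t^{N_2}\bigr) \leq \frac{1}{N_1}.
\]
Jensen's inequality then yields $\E[|\M^{N_1}_{t+1}(x') - \M_{\apprx,t+1}(x')| \mid \bfX_t^{N_1}, \bfY_t^{N_2}] \leq \sqrt{1/N_1}$.

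Finally, summing over $x' \in \X$ and applying the definition $\mathrm{d}_{\mathrm{TV}}(\mu,\mu') = \tfrac{1}{2}\sum_{x'}|\mu(x')-\mu'(x')|$ gives the stated bound $\tfrac{|\X|}{2}\sqrt{1/N_1}$. The only substantive step is the algebraic identity in the first paragraph that shows $\pi_{\apprx,t}$ was chosen precisely so that the mean-field update reproduces the conditional mean of the empirical distribution under the non-identical team policy; once this is in place the concentration step is completely analogous to the proof of Lemma~\ref{lmm:mf-aprx}, so I anticipate no serious obstacle.
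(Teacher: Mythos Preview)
Your proposal is correct but organized differently from the paper. The paper partitions the Blue agents by their current state $x$, applies the weak law of large numbers (Lemma~\ref{lmm:l2-apprx}) to each sub-population of size $N_{1,t}^x$, and then recombines the per-state bounds via the Cauchy--Schwarz step in Lemma~\ref{appdx-lmm:sum}. You instead identify $\M_{\apprx,t+1}$ directly as the conditional mean of the full empirical distribution $\M^{N_1}_{t+1}$ and bound each coordinate's fluctuation by a single variance estimate, with no state-splitting. Both routes land on the stated bound $\tfrac{|\X|}{2}\sqrt{1/N_1}$.

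Your route is the more streamlined one: the state-partition in the paper is inherited from the proof of Lemma~\ref{lmm:mf-aprx} (where agents at the same state are i.i.d.), but here the policies are already non-identical within each state, so there is no gain from isolating sub-populations. In fact, once you have established $\M_{\apprx,t+1}=\E[\M^{N_1}_{t+1}\mid \bfX_t^{N_1},\bfY_t^{N_2}]$, you could alternatively invoke Lemma~\ref{lmm:l2-apprx} directly on the whole population (conditionally independent, non-identically distributed next states) and obtain the slightly sharper $\tfrac{1}{2}\sqrt{|\X|/N_1}$; both your coordinate-wise Jensen step and the paper's state-splitting give up a factor of $\sqrt{|\X|}$ relative to that.
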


\begin{proof}
    See Appendix~\ref{appdx:mf-reachability}.
\end{proof}

The local policy $\pi_{\apprx, t}$ mimics the population behavior by setting its action distribution at state $x$ as the average of the policies used by the Blue agents at state $x$.
The purpose of the second case in~\eqref{eqn:apprx-local-policy} is to ensure that the constructed local policy is well-defined at states where no Blue agent is present.
The mean-field induced by $\pi_{\apprx,t}$ is within the reachable set and is close to the next \ED{} induced by the non-identical team policy $\phi^{N_1}_t$ in expectation.
This idea is visualized in~\figref{fig:apprx}.

\begin{figure}[b]
    \centering
    \includegraphics[width=0.4\linewidth]{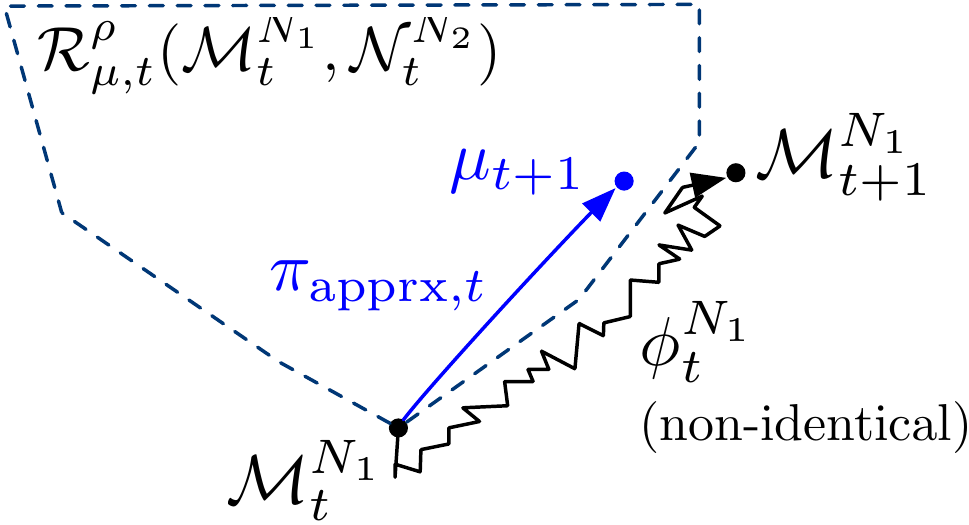}
    \vspace{-0.1in}
    \caption{An illustration of Lemma~\ref{lmm:mf-apprx-team-policy}.}
    \label{fig:apprx}
\end{figure}

Lemma~\ref{lmm:mf-apprx-team-policy} directly leads to the following theorem regarding the richness of the reachable sets, as the mean-field induced by $\pi_{\apprx,t}$ is within the reachable set.

\begin{theorem}
    \label{thm:R-set-rich}
    Let $\bfX^{N_1}_t$, $\bfY^{N_2}_t$, $\M^{N_1}_t$, and $\N^{N_2}_t$ be the joint states and the corresponding \ED{}s at time~$t$.
    Denote the next Blue \ED{} induced by some Blue team policy $\phi^{N_1}_t \in \Phi^{N_1}_t$ as $\M^{N_1}_{t+1}$. 
    Then, there exists a mean-field $\mu_{t+1} \in \mathcal{R}^\rho_{\mu,t}(\M^{N_1}_t, \N^{N_2}_t)$ that satisfies
    \begin{equation}
        \label{eqn:apprx-exist-mu-apprx}
        \expct{\dtv{\M^{N_1}_{t+1}, \mu_{t+1}} \big \vert \bfX^{N_1}_t, \bfY^{N_2}_t}{\phi_t^{N_1}} \leq \frac{|\X|}{2} \sqrt{\frac{1}{N_1}}.
    \end{equation}
\end{theorem}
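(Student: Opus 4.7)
The plan is to deduce Theorem~\ref{thm:R-set-rich} almost immediately from Lemma~\ref{lmm:mf-apprx-team-policy}, since the heavy lifting has already been done there. The theorem only asks for the \emph{existence} of some $\mu_{t+1}$ in the reachable set that is close (in expected total variation) to the random next \ED{} $\M^{N_1}_{t+1}$, and Lemma~\ref{lmm:mf-apprx-team-policy} already exhibits such a candidate, namely $\M_{\apprx,t+1}$.

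First, I would take the (possibly non-identical) Blue team policy $\phi^{N_1}_t$ given in the statement and feed it into the construction \eqref{eqn:apprx-local-policy} to obtain the local policy $\pi_{\apprx,t} \in \Pi_t$. One sanity check is needed here: $\pi_{\apprx,t}(\cdot|x)$ must be a valid probability distribution on $\U$ for every $x \in \X$. In the case $\M^{N_1}_t(x) = 0$ this is immediate (uniform distribution), and in the case $\M^{N_1}_t(x) > 0$ the numerator $\sum_i \indicator{x}{X^{N_1}_{i,t}} \phi_{i,t}(u|x,\M^{N_1}_t,\N^{N_2}_t)$, summed over $u \in \U$, equals $\sum_i \indicator{x}{X^{N_1}_{i,t}} = N_1 \M^{N_1}_t(x)$, which matches the denominator. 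So $\pi_{\apprx,t} \in \Pi_t$.

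Next, I would define $\mu_{t+1} := \M_{\apprx,t+1} = \M^{N_1}_t F^\rho_t(\M^{N_1}_t, \N^{N_2}_t, \pi_{\apprx,t})$ as in \eqref{eqn:blue-apprx-mf}. By the very definition \eqref{eqn:blue-reachable-set-def} of the reachable set, since $\pi_{\apprx,t} \in \Pi_t$ and $\mu_{t+1}$ is precisely the image of the current mean-field $\M^{N_1}_t$ under the typical-agent transition matrix associated with $\pi_{\apprx,t}$, we have $\mu_{t+1} \in \mathcal{R}^\rho_{\mu,t}(\M^{N_1}_t, \N^{N_2}_t)$. This is the membership claim of the theorem.

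Finally, the bound \eqref{eqn:apprx-exist-mu-apprx} is just the bound \eqref{eqn:mf-apprx-joint-state} from Lemma~\ref{lmm:mf-apprx-team-policy}, applied to this same pair $(\phi^{N_1}_t, \pi_{\apprx,t})$. There is no extra inequality to prove and no additional obstacle: the whole content of Theorem~\ref{thm:R-set-rich} is the reinterpretation of $\M_{\apprx,t+1}$ as an element of the reachable correspondence, which packages the lemma as a statement about the \emph{richness} of $\mathcal{R}^\rho_{\mu,t}$ rather than about a specific constructed policy. The only non-trivial work — the $\mathcal{O}(1/\sqrt{N_1})$ concentration bound — lives inside Lemma~\ref{lmm:mf-apprx-team-policy} and is being invoked as a black box here.
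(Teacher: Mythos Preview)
Your proposal is correct and matches the paper's approach exactly: the paper states that Lemma~\ref{lmm:mf-apprx-team-policy} ``directly leads to'' Theorem~\ref{thm:R-set-rich} because the mean-field $\M_{\apprx,t+1}$ induced by $\pi_{\apprx,t}$ lies in the reachable set by construction. Your added sanity check that $\pi_{\apprx,t}(\cdot|x)$ is a valid probability distribution is a nice explicit verification that the paper leaves implicit.
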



\begin{remark}
    The construction of $\pi_{\apprx,t}$ in~\eqref{eqn:apprx-local-policy} requires knowing 
    each Blue agent's state, which may seem to violate the mean-field sharing information structure in~\eqref{eqn:IS}. 
    However, $\pi_{\apprx,t}$ only serves as an auxiliary concept to prove the existence of a mean-field in the reachable set that satisfies~\eqref{eqn:apprx-exist-mu-apprx}.
    The existence result in Theorem~\ref{thm:R-set-rich} is all we need to provide performance guarantees.
    In fact, $\pi_{\apprx,t}$ will not be used to construct the optimal policies.
\end{remark}

\begin{remark}
    All results in this subsection extend to the analysis from Red team's side.
\end{remark}

\subsection{Lipschitz Continuity of the Value Functions}

Next, we examine the continuity of the optimal value function in~\eqref{eqn:lower-j-ab-t-rset} with respect to the mean-field arguments, which is essential for the performance guarantees. 
Clearly, the continuity of the value function depends on the continuity of the two reachability correspondences $\R^\rho_{\mu,t}$ and $\R^\rho_{\nu,t}$.
To properly study the continuity of the reachability correspondences, we use the Hausdorff distance to measure the distance between two sets.

\begin{definition}[Hausdorff distance]
    For a normed space $(\X, \norm{\cdot})$, the Hausdorff distance between the sets $A, B \subseteq \X$ is defined as
    \begin{equation}
        \distH{A,B} = \max\left\{\sup_{a \in A} \inf_{b \in B} \norm{a-b}, \sup_{b \in B} \inf_{a \in A} \norm{a-b}\right\}.
    \end{equation}
\end{definition}

Based on the Hausdorff distance, we have the following notion of Lipschitz continuity for correspondences~\citep{freeman2008robust}.
\begin{definition}
    A correspondence $\Gamma: \X \rightsquigarrow \Z$ is $L_\Gamma$-Lipschitz continuous under the Hausdorff distance if, for all $x, x' \in \X$, it satisfies that
    \begin{equation}
        \distH{\Gamma(x), \Gamma(x')} \leq L_\Gamma \norm{x-x'}.
    \end{equation}
\end{definition}

The following lemma presents a Lipschitz continuity result for the reachability correspondences.

\begin{restatable}{lemma}{RLC}
    \label{lmm:R-L-cont}
    The reachability correspondences $\mathcal{R}_{\mu,t}$ and $\mathcal{R}_{\nu,t}$ in~\eqref{eqn:blue-reachable-set-def} and~\eqref{eqn:red-reachable-set-def} satisfy the following inequalities for all $\mu_t, \mu'_t \in \P(\X)$ and $\nu_t, \nu'_t \in \P(\Y)$,
    \begin{equation}
        \label{eqn:reachability-continuity-1}
        \begin{aligned}
            \distH{\R^\rho_{\mu,t}(\mu_t,\nu_t), \R^\rho_{\mu,t}(\mu'_t,\nu'_t)} \leq L_{R_{\mu,t}} \big(\dtv{\mu_t, \mu'_t} + \dtv{\nu_t, \nu'_t}\big), \\
            \distH{\R^\rho_{\nu,t}(\mu_t,\nu_t), \R^\rho_{\nu,t}(\mu'_t,\nu'_t)} \leq L_{R_{\nu,t}} \big(\dtv{\mu_t, \mu'_t} + \dtv{\nu_t, \nu'_t}\big),
        \end{aligned}
    \end{equation}
    where $L_{R_{\mu,t}}  = 1+ \frac{1}{2} L_{f_t}$, $L_{R_{\nu,t}} = 1+ \frac{1}{2} L_{g_t}$, and $L_{f_t}$ and $L_{g_t}$ are the Lipschitz constants in Assumption~\ref{assmpt:lipschitiz-dynamics}.
\end{restatable}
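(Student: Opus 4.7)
The plan is to prove the bound for $\R^\rho_{\mu,t}$ and note that the result for $\R^\rho_{\nu,t}$ follows by the identical argument with $(f^\rho_t, L_{f_t})$ replaced by $(g^\rho_t, L_{g_t})$. By definition of the Hausdorff distance, it suffices to show that for any $\mu_{t+1} \in \R^\rho_{\mu,t}(\mu_t,\nu_t)$, there exists $\mu'_{t+1} \in \R^\rho_{\mu,t}(\mu'_t, \nu'_t)$ such that $\dtv{\mu_{t+1}, \mu'_{t+1}} \le L_{R_{\mu,t}}(\dtv{\mu_t, \mu'_t} + \dtv{\nu_t, \nu'_t})$, and then invoke symmetry in $(\mu_t, \nu_t) \leftrightarrow (\mu'_t, \nu'_t)$.

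The key observation is that the local policy space $\Pi_t$ is open-loop and does not depend on the mean-fields. Hence if $\mu_{t+1} = \mu_t F^\rho_t(\mu_t, \nu_t, \pi_t)$ for some $\pi_t \in \Pi_t$, the same $\pi_t$ defines a valid candidate $\mu'_{t+1} := \mu'_t F^\rho_t(\mu'_t, \nu'_t, \pi_t) \in \R^\rho_{\mu,t}(\mu'_t, \nu'_t)$. I would then write
\begin{equation*}
\mu_{t+1}(x') - \mu'_{t+1}(x') = A(x') + B(x'),
\end{equation*}
where $A(x') = \sum_x \bigl(\mu_t(x) - \mu'_t(x)\bigr) \sum_u f^\rho_t(x'|x,u,\mu_t,\nu_t)\pi_t(u|x)$ captures the sensitivity to the initial distribution, and $B(x') = \sum_x \mu'_t(x) \sum_u \bigl(f^\rho_t(x'|x,u,\mu_t,\nu_t) - f^\rho_t(x'|x,u,\mu'_t,\nu'_t)\bigr)\pi_t(u|x)$ captures the sensitivity to the mean-field arguments of the kernel.

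For the $A$ term, summing $\tfrac{1}{2}\sum_{x'} |A(x')|$, exchanging order of summation, using $\sum_{x'} f^\rho_t(x'|x,u,\mu_t,\nu_t) = 1$, and the fact that $\pi_t(\cdot|x)$ is a probability distribution, I obtain the clean bound $\tfrac{1}{2}\sum_{x'}|A(x')| \le \dtv{\mu_t,\mu'_t}$. For the $B$ term, applying the triangle inequality, Assumption~\ref{assmpt:lipschitiz-dynamics}, and that $\sum_x \mu'_t(x) = \sum_u \pi_t(u|x) = 1$ yields $\tfrac{1}{2}\sum_{x'}|B(x')| \le \tfrac{1}{2}L_{f_t}\bigl(\dtv{\mu_t,\mu'_t} + \dtv{\nu_t,\nu'_t}\bigr)$. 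Adding the two and using $\dtv{\mu_t,\mu'_t} \le \dtv{\mu_t,\mu'_t} + \dtv{\nu_t,\nu'_t}$ gives the constant $L_{R_{\mu,t}} = 1 + \tfrac{1}{2}L_{f_t}$.

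There is no serious obstacle here; the proof is a direct sensitivity estimate. The only subtlety worth flagging is the need to justify that the same $\pi_t$ can be reused at the perturbed mean-fields, which is precisely why the open-loop nature of $\Pi_t$ (highlighted in Section~\ref{sec:coordinator-game}) matters — had the policy class depended on $(\mu_t, \nu_t)$, we would have had to first construct a near-policy in the perturbed admissible set before comparing, adding an extra error term. The bound for $\R^\rho_{\nu,t}$ is obtained by the symmetric argument, substituting the Red dynamics and its Lipschitz constant $L_{g_t}$.
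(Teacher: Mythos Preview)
Your proof is correct. The core $A+B$ decomposition you give is exactly the content of the paper's Lemma~\ref{appdx-lmm:vertex-perturb}, and in fact that lemma's proof does not use purity of $\hat\pi_t$ at all --- the bound holds for any $\pi_t\in\Pi_t$, as you implicitly observe.

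Where you diverge from the paper is in the passage from the pointwise estimate to the Hausdorff bound. The paper first characterizes $\R^\rho_{\mu,t}(\mu_t,\nu_t)$ as the convex hull of the finitely many points $\{\mu_t F^\rho_t(\mu_t,\nu_t,\hat\pi^k)\}_k$ indexed by pure policies (Proposition~\ref{appdx-lmm:Rset-convex-hull}), applies the vertex-perturbation Lemma~\ref{appdx-lmm:vertex-perturb} to each vertex, and then invokes a separate convex-hull stability result (Lemma~\ref{appdx-lmm:vertex-movement}) to conclude. You instead reuse the \emph{same} local policy $\pi_t$ at the perturbed mean-fields --- legitimate precisely because $\Pi_t$ is open-loop and independent of $(\mu_t,\nu_t)$ --- and thereby obtain the Hausdorff bound directly, bypassing the convex-hull machinery entirely. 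Your route is shorter and more elementary; the paper's route has the side benefit of establishing the convex-hull characterization of the reachable set, which is reused elsewhere (e.g., for policy extraction in Appendix~\ref{appdx-sec:LP}).
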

\begin{proof}
    The proof is postponed to Appendix~\ref{appdx-sec:Continuity-RSet}.
\end{proof}

Leveraging the continuity of the reachability correspondences, the following theorem establishes the Lipschitz continuity of the optimal value functions. 

\begin{restatable}{theorem}{vlc}
    \label{thm:value-L-cont}
    The optimal lower value function $\lowervalue^{\rho*}_{\cor, t}$ and the optimal upper value function $\uppervalue^{\rho*}_{\cor, t}$ are both Lipschitz continuous. 
    Formally, for all $\mu_{t}^\rho,\mu^{\rho\prime}_t \in \P(\X)$ and $\nu_{t}^\rho,\nu^{\rho\prime}_t \in \P(\Y)$, the following inequalities hold
    \begin{equation}
        \label{eqn:L-cont-cor-value}
        \begin{aligned}
            \abs{\lowervalue^{\rho*}_{\cor, t}(\mu^\rho_{t}, \nu^\rho_t) - \lowervalue^{\rho*}_{\cor, t}(\mu^{\rho\prime}_{t}, \nu^{\rho\prime}_t)} &\leq 
        L_{J,t}\big(\dtv{\mu^\rho_t, \mu_t^{\rho\prime}}+\dtv{\mu^\rho_t, \nu_t^{\rho\prime}}\big),
        \\
        \abs{\uppervalue^{\rho*}_{\cor, t}(\mu^\rho_{t}, \nu^\rho_t) - \uppervalue^{\rho*}_{\cor, t}(\mu^{\rho\prime}_{t}, \nu^{\rho\prime}_t)} &\leq 
        L_{J,t}\big(\dtv{\mu^\rho_t, \mu_t^{\rho\prime}}+\dtv{\mu^\rho_t, \nu_t^{\rho\prime}}\big),
        \end{aligned}
    \end{equation}
    where the Lipschitz constant $L_{J,t}$ is given by
    \begin{equation}
        \label{eqn:value-L-constant}
        L_{J,t} = L_r \big( 1+ \sum_{k=t}^{T-1} \prod_{\tau =t}^{k} (L_{\R^\rho_{\mu,\tau}} + L_{\R^\rho_{\nu,\tau}})\big).
    \end{equation}
\end{restatable}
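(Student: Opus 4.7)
The plan is to prove both Lipschitz estimates simultaneously by backward induction on $t$, using the dynamic programming recursions \eqref{eqn:lower-j-ab-t-rset} (and its min-max analog). Set $\delta = \dtv{\mu^\rho_t,\mu^{\rho\prime}_t} + \dtv{\nu^\rho_t,\nu^{\rho\prime}_t}$ throughout. The base case $t=T$ is immediate: $\lowervalue^{\rho*}_{\cor,T} = \uppervalue^{\rho*}_{\cor,T} = r^\rho_T$, which is $L_r$-Lipschitz by Assumption~\ref{assmpt:lipschitiz-rewards}, and this matches $L_{J,T} = L_r$ (empty product/sum convention).

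For the inductive step, assume $\lowervalue^{\rho*}_{\cor,t+1}$ is $L_{J,t+1}$-Lipschitz and split the Bellman update into the reward term and the reachability-based max-min term
\[
W(\mu,\nu) \;=\; \max_{\tilde\mu \in \R^\rho_{\mu,t}(\mu,\nu)} \min_{\tilde\nu \in \R^\rho_{\nu,t}(\mu,\nu)} \lowervalue^{\rho*}_{\cor,t+1}(\tilde\mu, \tilde\nu).
\]
The reward term contributes at most $L_r\,\delta$ by Assumption~\ref{assmpt:lipschitiz-rewards}. The key work is to show $|W(\mu^\rho_t,\nu^\rho_t) - W(\mu^{\rho\prime}_t,\nu^{\rho\prime}_t)| \leq (L_{R_{\mu,t}} + L_{R_{\nu,t}}) L_{J,t+1}\,\delta$, which combined with the reward bound yields precisely the recursion $L_{J,t} = L_r + (L_{R_{\mu,t}}+L_{R_{\nu,t}})L_{J,t+1}$ that unrolls to \eqref{eqn:value-L-constant}.

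The core step is a standard ``shadowing'' argument built on the Hausdorff-Lipschitz continuity of the reachability correspondences provided by Lemma~\ref{lmm:R-L-cont}. Pick $\tilde\mu^{\star}$ attaining the outer max for $W(\mu^\rho_t,\nu^\rho_t)$; by the Hausdorff bound on $\R^\rho_{\mu,t}$ there exists $\tilde\mu^{\star\star} \in \R^\rho_{\mu,t}(\mu^{\rho\prime}_t,\nu^{\rho\prime}_t)$ within total-variation distance $L_{R_{\mu,t}}\delta$. Then pick $\tilde\nu^{\star\star}$ attaining the inner min at $\tilde\mu^{\star\star}$ on the perturbed Red reachable set, and by the Hausdorff bound on $\R^\rho_{\nu,t}$ find $\tilde\nu^{\star} \in \R^\rho_{\nu,t}(\mu^\rho_t,\nu^\rho_t)$ within TV distance $L_{R_{\nu,t}}\delta$. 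Suboptimality of these shadow choices against their respective inner/outer opponents, together with the inductive Lipschitz bound on $\lowervalue^{\rho*}_{\cor,t+1}$ applied to the pair $(\tilde\mu^\star,\tilde\nu^\star)$ vs.\ $(\tilde\mu^{\star\star},\tilde\nu^{\star\star})$, yields one-sided inequality $W(\mu^\rho_t,\nu^\rho_t) - W(\mu^{\rho\prime}_t,\nu^{\rho\prime}_t) \leq (L_{R_{\mu,t}}+L_{R_{\nu,t}})L_{J,t+1}\,\delta$. Swapping the roles of the two argument pairs gives the matching lower bound. Finally, for the upper value $\uppervalue^{\rho*}_{\cor,t}$ one runs the same argument with the roles of the max and min (and therefore the order in which the shadow points are picked) reversed; the bookkeeping is symmetric, producing the same recursive constant $L_{J,t}$.

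The main obstacle is pedantic rather than conceptual: the shadowing has to be performed in the correct order (outer optimizer first, then inner) so that the eventual inequality actually lower-bounds $W(\mu^{\rho\prime}_t,\nu^{\rho\prime}_t)$ by a quantity comparable to $W(\mu^\rho_t,\nu^\rho_t)$, and one must be careful that the TV metric used for $\dtv{\cdot,\cdot}$ is the same one with respect to which Lemma~\ref{lmm:R-L-cont} and the inductive hypothesis are stated. Once the orientation of the two shadowing steps is fixed, the bound $L_{J,t+1}\big(\dtv{\tilde\mu^\star,\tilde\mu^{\star\star}} + \dtv{\tilde\nu^\star,\tilde\nu^{\star\star}}\big) \leq L_{J,t+1}(L_{R_{\mu,t}}+L_{R_{\nu,t}})\delta$ drops out, and a short induction check confirms that the resulting recursion for $L_{J,t}$ collapses to the closed form in \eqref{eqn:value-L-constant}.
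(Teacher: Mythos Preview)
Your proposal is correct and follows essentially the same approach as the paper: backward induction on $t$, with the base case $L_{J,T}=L_r$ and the recursion $L_{J,t}=L_r+(L_{R_{\mu,t}}+L_{R_{\nu,t}})L_{J,t+1}$ unrolled to the closed form~\eqref{eqn:value-L-constant}. The only difference is packaging: the paper isolates your ``shadowing'' step into a standalone result (Lemma~\ref{lmm:min-max-marginal-L-cont}, proved via the auxiliary Lemmas~\ref{appdx-lmm:max-diff-set} and~\ref{appdx-lmm:min-lip}) and then invokes it in the inductive step, whereas you carry out the same Hausdorff-based optimizer-tracking argument inline.
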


\begin{proof}
    Observe that the lower value in~\eqref{eqn:lower-j-ab-t-rset} takes the  form:
    $f(x,y) = \max_{p\in \Gamma(x,y)} \min_{q\in \Theta(x,y)} g(p,q)$,
    which can be viewed as an extension of the marginal function~\citep{freeman2008robust} to the max-min case.
    We present a continuity result for this type of marginal function in Lemma~\ref{lmm:min-max-marginal-L-cont} in Appendix~\ref{appdx-sec:Continuity-Prelim}.
    Based on Lemma~\ref{lmm:min-max-marginal-L-cont}, we can prove the Lipschitz continuity result through an inductive argument since the terminal rewards are assumed to be Lipschitz.
    A detailed proof is given in Appendix~\ref{appdx-sec:Continuity-RSet-thm-2}.
\end{proof}

\subsection{Performance Guarantees}

In the previous section, we obtained an optimal Blue coordination strategy $\alpha^*$ for the coordinator game.
The coordinator game is constructed under the mean-field setting, i.e., both teams have an infinite number of agents and all agents in each team apply the same strategy.
We analyze the performance guarantees for $\alpha^*$ in the \emph{finite-population} game and compare the worst-case performance of this coordinator-induced Blue team strategy to the original max-min optimization in~\eqref{eqn:origin-optimization} where agents have the flexibility to apply non-identical strategies within each team.

To this end, let the Red team apply non-identical team strategies $\psi^{N_2} = (\psi^{N_2}_1, \ldots \psi^{N_2}_{N_2}) \in \Psi^{N_2}$. 
Recall that a Blue coordination strategy induces an identical Blue team strategy (see Remark~\ref{rmk:equivalent-policy}).
We denote the finite-population performance under the identical Blue team strategy induced by $\alpha^*$ and the Red team strategy $\psi^{N_2}$ as 
\begin{equation*}
    J_{0}^{N, \alpha^*, \psi^{N_2}}(\bfx^{N_1}_0, \bfy^{N_2}_0) = \mathbb{E}_{\alpha^*, \psi^{N_2}}\Big[\sum_{t =0}^T r^\rho_t(\M^{N_1}_t, \N^{N_2}_t) \vert \bfX^{N_1}_0 = \bfx^{N_1}_0, \bfY^{N_2}_0 = \bfy^{N_2}_0\Big].
\end{equation*}
Through dynamic programming, we can compute the above-induced value through a backward propagation scheme 
\begin{subequations}
    \begin{align}
        J_{T}^{N, \alpha^*, \psi^{N_2}}(\bfx^{N_1}_T, \bfy^{N_2}_T) 
        &= r^\rho_T(\mu^{N_1}_T, \nu^{N_2}_T) \\
        J_{t}^{N, \alpha^*, \psi^{N_2}}(\bfx^{N_1}_t, \bfy^{N_2}_t) 
        &= r^\rho_t(\mu^{N_1}_t, \nu^{N_2}_t) + \mathbb{E}_{\alpha^*, \psi^{N_2}}\Big[J_{t+1}^{N, \alpha^*, \psi^{N_2}}(\bfX^{N_1}_{t+1}, \bfY^{N_2}_{t+1})  \big \vert  \bfX^{N_1}_{t}= \bfx^{N_1}_{t}, \bfY^{N_2}_{t}=\bfy^{N_2}_{t}\;  \Big], \\
        & \qquad \qquad\qquad \qquad \qquad \qquad\qquad \qquad \qquad\qquad\qquad \qquad\qquad\qquad\qquad \forall t = 0, \ldots, T-1, \nonumber
    \end{align}
\end{subequations}
where $\mu_t^{N_1} = \empMu{\bfx^{N_1}_t}$ and $\nu_t^{N_2} = \empNu{\bfy^{N_2}_t}$ are the \ED{}s corresponding to the joint states $\bfx^{N_1}_t$ and $\bfy^{N_2}_t$, respectively.

We have the following main result regarding the performance guarantees for the optimal Blue coordination strategy. 

\begin{theorem}
    \label{thm:performance-guarantees}
    The optimal Blue coordination strategy $\alpha^*$ obtained from~\eqref{eqn:optimal-blue-policy} induces an $\epsilon$-optimal Blue team strategy. 
    Formally, for all $\bfx^{N_1} \in \X^{N_1}, \bfy^{N_2} \in \Y^{N_2}$,
    \begin{equation}
        \label{eqn:performance-bounds}
        \underline{J}^{N*}(\bfx^{N_1}, \bfy^{N_2}) \geq \min_{\psi^{N_2} \in \Psi^{N_2}} J^{N,\alpha^*,\psi^{N_2}} (\bfx^{N_1}, \bfy^{N_2}) \geq \underline{J}^{N*}(\bfx^{N_1}, \bfy^{N_2}) - \residue 
    \end{equation}
    where $\subN = \min \{N_1, N_2\}$.
\end{theorem}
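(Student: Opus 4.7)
The left inequality is immediate: the identical Blue team strategy induced by $\alpha^*$ (Remark~\ref{rmk:equivalent-policy}) is a particular member of $\Phi^{N_1}$, so $\underline{J}^{N*} \geq \min_{\psi^{N_2}} J^{N,\alpha^*,\psi^{N_2}}$ by definition of the max-min value. The bulk of the work is the right inequality, which I plan to prove by sandwiching both sides around the coordinator lower value $\lowervalue_{\cor,0}^{\rho*}(\mu_0,\nu_0)$. This rests on two backward inductions showing that both quantities lie within $\residue$ of this common value.

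For the first induction, I would show that for every Red strategy $\psi^{N_2}$ and every $(t,\bfx,\bfy)$,
\[
    J_t^{N,\alpha^*,\psi^{N_2}}(\bfx,\bfy) \;\geq\; \lowervalue_{\cor,t}^{\rho*}\big(\empMu{\bfx},\empNu{\bfy}\big) - \xi_t,
\]
with $\xi_T=0$ and $\xi_t$ obeying an additive recurrence. The inductive step unfolds the finite-population Bellman recursion and lower-bounds $\mathbb{E}[\lowervalue_{\cor,t+1}^{\rho*}(\M_{t+1}^{N_1},\N_{t+1}^{N_2})]$ as follows. Because Blue applies the identical policy $\pi_t^* = \alpha_t^*(\empMu{\bfx},\empNu{\bfy})$, Lemma~\ref{lmm:mf-aprx} gives $\mathbb{E}\dtv{\M_{t+1}^{N_1},\mu_{t+1}^*}\leq \tfrac{|\X|}{2\sqrt{N_1}}$ for $\mu_{t+1}^* = \empMu{\bfx}\,F_t^\rho(\empMu{\bfx},\empNu{\bfy},\pi_t^*)$; because Red's team policy $\psi^{N_2}_t$ may be non-identical, Theorem~\ref{thm:R-set-rich} produces a $\nu_{t+1}^{\mathrm{apprx}}\in\R_{\nu,t}^\rho(\empMu{\bfx},\empNu{\bfy})$ satisfying $\mathbb{E}\dtv{\N_{t+1}^{N_2},\nu_{t+1}^{\mathrm{apprx}}}\leq \tfrac{|\Y|}{2\sqrt{N_2}}$. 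The Lipschitz bound of Theorem~\ref{thm:value-L-cont} converts these concentration estimates into a value-function gap, and the coordinator Bellman equation~\eqref{eqn:lower-j-ab-t-rset} closes the step since $\pi_t^*$ attains the outer max and $\nu_{t+1}^{\mathrm{apprx}}$ is feasible for the inner min.

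The second induction establishes the dual inequality: for any (possibly non-identical) Blue strategy $\phi^{N_1}$,
\[
    \min_{\psi^{N_2}} J_t^{N,\phi^{N_1},\psi^{N_2}}(\bfx,\bfy) \;\leq\; \lowervalue_{\cor,t}^{\rho*}\big(\empMu{\bfx},\empNu{\bfy}\big) + \xi_t'.
\]
Here Theorem~\ref{thm:R-set-rich} is applied on the Blue side to find a $\mu_{t+1}^{\mathrm{apprx}}\in\R_{\mu,t}^\rho(\empMu{\bfx},\empNu{\bfy})$ approximating the \ED{} induced by $\phi^{N_1}_t$, and an explicit Red response is exhibited as the identical Red team strategy whose local policy is the coordinator minimizer
\[
    \sigma_t^{\dagger}\in\argmin_{\sigma\in\Sigma_t}\lowervalue_{\cor,t+1}^{\rho*}\big(\mu_{t+1}^{\mathrm{apprx}},\,\empNu{\bfy}\,G_t^\rho(\empMu{\bfx},\empNu{\bfy},\sigma)\big).
\]
Lemma~\ref{lmm:mf-aprx} then controls $\mathbb{E}\dtv{\N_{t+1}^{N_2},\empNu{\bfy}G_t^\rho(\cdot,\cdot,\sigma_t^{\dagger})}$, and Theorem~\ref{thm:value-L-cont} converts it into the required upper bound. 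Evaluating both inductions at $t=0$ yields $\min_{\psi^{N_2}} J^{N,\alpha^*,\psi^{N_2}} \geq \lowervalue_{\cor,0}^{\rho*} - \xi_0 \geq \underline{J}^{N*} - (\xi_0+\xi_0')$, which matches the $\residue$ term.

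The main obstacle lies in the second induction: the constructed Red response must respect the mean-field sharing information structure, so $\sigma_t^{\dagger}$ must be expressible as a function of the common information $(\empMu{\bfx},\empNu{\bfy})$ alone, using knowledge of $\phi^{N_1}$ (which is admissible because the inner $\min_{\psi^{N_2}}$ is computed with $\phi^{N_1}$ fixed). A subordinate technicality is verifying that the recurrences for $\xi_t$ and $\xi_t'$ telescope to a finite constant depending only on $T$, the $L_{J,t}$ from Theorem~\ref{thm:value-L-cont}, and $|\X|,|\Y|$ — times $1/\sqrt{\subN}$ — rather than blowing up with the horizon; this follows once each inductive step adds at most $L_{J,t+1}\big(\tfrac{|\X|}{2\sqrt{N_1}}+\tfrac{|\Y|}{2\sqrt{N_2}}\big)$ to the previous error.
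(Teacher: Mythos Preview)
Your plan is correct and matches the paper's proof, which likewise sandwiches both sides around $\lowervalue_{\cor}^{\rho*}$ via two backward inductions (the paper's Lemmas~\ref{lmm:coordination-esp-nash} and~\ref{lmm:cor-apprx-err}). Your first induction is verbatim Lemma~\ref{lmm:coordination-esp-nash}.

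The one noteworthy difference is the framing of the second induction. Rather than fixing $\phi^{N_1}$ and constructing an explicit Red response $\sigma_t^\dagger$, the paper (Lemma~\ref{lmm:cor-apprx-err}) proves the weaker statement $\underline{J}_t^{N*}\le \lowervalue_{\cor,t}^{\rho*}+\residue$ directly by writing the Bellman recursion for $\underline{J}_t^{N*}$, restricting the inner $\min_{\psi^{N_2}_t}$ to identical Red policies $\min_{\psi_t\in\Psi_t}$ (which only loosens the upper bound), and then passing to $\min_{\nu_{t+1}\in\R_{\nu,t}^\rho}$. This cleanly sidesteps the information-structure obstacle you flagged: no global Red strategy is ever constructed, and because the recursion is evaluated at a fixed $(\bfx_t,\bfy_t)$, the $\mu_{t+1}^{\mathrm{apprx}}$ supplied by Theorem~\ref{thm:R-set-rich}---which does depend on the full joint state $\bfx_t$, not merely its \ED{}---enters only as a known constant inside the stage optimization rather than as data that a Red policy must access. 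Your explicit $\sigma_t^\dagger$ is therefore unnecessary, and as written it is \emph{not} admissible under mean-field sharing: knowing $\phi^{N_1}$ does not suffice, since two joint states with the same \ED{} can induce different $\pi_{\apprx,t}$ in~\eqref{eqn:apprx-local-policy} and hence different $\mu_{t+1}^{\mathrm{apprx}}$. Adopting the paper's value-level argument removes this difficulty without changing any of your estimates.
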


\begin{proof}
    We first prove the first inequality in~\eqref{eqn:performance-bounds}.
    \begin{align*}
        \underline{J}^{N*}(\bfx^{N_1}, \bfy^{N_2}) &= \max_{\phi^{N_1} \in \Phi^{N_1}} \min_{\psi^{N_2} \in \Psi^{N_2}} J^{N,\phi^{N_1},\psi^{N_2}}(\bfx^{N_1}, \bfy^{N_2}) \\
        &\stackrel{\text{(i)}}{\geq} \max_{\phi \in \Phi} \min_{\psi^{N_2} \in \Psi^{N_2}} J^{N,\phi,\psi^{N_2}}(\bfx^{N_1}, \bfy^{N_2})
        \stackrel{\text{(ii)}}{=}\max_{\alpha \in \A} \min_{\psi^{N_2} \in \Psi^{N_2}} J^{N,\alpha,\psi^{N_2}}(\bfx^{N_1}, \bfy^{N_2})\\
        &\geq \min_{\psi^{N_2} \in \Psi^{N_2}} J^{N,\alpha^*,\psi^{N_2}}(\bfx^{N_1}, \bfy^{N_2}),
    \end{align*}
    where inequality (i) is a result of $\Phi \subseteq \Phi^{N_2}$, and equality (ii) is due to the one-to-one correspondence between coordination strategies and identical team strategies (see Remark~\ref{rmk:equivalent-policy}).
    
    For the second inequality in~\eqref{eqn:performance-bounds}, we break it down into two lemmas:
    First, Lemma~\ref{lmm:coordination-esp-nash} states that $\min_{\psi^{N_2}\in \psi^{N_2}} J^{N, \alpha^*, \psi^{N_2}} \geq \lowervalue_{\cor}^{\rho*} - \epsilon_1$;
    and Lemma~\ref{lmm:cor-apprx-err} shows that $\lowervalue_{\cor}^{\rho*} \geq \lowervalue^{N*} - \epsilon_2$;
    finally, it is shown that both error terms are of order $\residue$.
    Combining the two lemmas, we obtain the desired result.
\end{proof}

\begin{remark}
    Recall that $\alpha^*$ is solved at the \emph{infinite-population} limit under the restriction that both teams apply \emph{identical} team strategies. 
    Theorem~\ref{thm:performance-guarantees} states that
    the \emph{identical} Blue team strategy induced by $\alpha^*$ is still $\epsilon$-optimal, even if 
    (i) it is deployed in a \emph{finite-population} game and 
    (ii) the opponent team employs \emph{non-identical} strategies to exploit. 
\end{remark}

\begin{remark}
    Continuity Assumptions~\ref{assmpt:lipschitiz-dynamics} and~\ref{assmpt:lipschitiz-rewards} are necessary to translate the infinite-population performance back to the finite-population game. 
    See Appendix~\ref{appdx-sec:discontinuity} for a discontinuous example where the infinite-population game value is significantly different from that of the finite-population problem.
\end{remark}

\vspace{+0.2in}

\begin{lemma}
    \label{lmm:coordination-esp-nash}
    For all joint states $\bfx^{N_1} \in \X^{N_1}$ and $\bfy^{N_2} \in \Y^{N_2}$, the optimal Blue coordination strategy $\alpha^*$ in~\eqref{eqn:optimal-blue-policy} guarantees
    \begin{equation}
        \label{eqn:deviation-performance-bound}
        \min_{\psi^{N_2}\in \psi^{N_2}} J^{N, \alpha^*, \psi^{N_2}}(\bfx^{N_1}, \bfy^{N_2}) \geq \lowervalue_{\cor}^{\rho*}(\mu^{N_1}, \nu^{N_2}) - \residue,
    \end{equation}
    where $\mu^{N_1}=\empMu{\bfx^{N_1}}$ and $\nu^{N_2} = \empNu{\bfy^{N_2}}$ are the corresponding \ED{}s.
\end{lemma}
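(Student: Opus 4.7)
I would prove~\eqref{eqn:deviation-performance-bound} by a backward induction on $t$, establishing the stronger pointwise bound
\[
\min_{\psi^{N_2} \in \Psi^{N_2}} J_{t}^{N, \alpha^*, \psi^{N_2}}(\bfx_{t}^{N_1}, \bfy_{t}^{N_2}) \;\geq\; \lowervalue_{\cor, t}^{\rho *}\bigl(\empMu{\bfx_{t}^{N_1}},\empNu{\bfy_{t}^{N_2}}\bigr) - \delta_t
\]
for every joint-state realization, with $\delta_T = 0$ (both sides equal $r_T^\rho$) and a recursion $\delta_t = \delta_{t+1} + \tfrac{L_{J,t+1}}{2}\bigl(|\X|/\sqrt{N_1}+|\Y|/\sqrt{N_2}\bigr)$. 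Summing the $T$ increments gives $\delta_0 \leq T\cdot (\max_t L_{J,t+1})\cdot (|\X|+|\Y|)/(2\sqrt{\subN}) = \residue$, which is the claimed residue.

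\textbf{Inductive step.} Fix $t<T$ and condition on $(\bfx_t^{N_1},\bfy_t^{N_2})$ with empirical distributions $(\mu_t^{N_1},\nu_t^{N_2})$. By Remark~\ref{rmk:equivalent-policy}, Blue's coordination prescription $\pi_t^{*}=\alpha_t^{*}(\mu_t^{N_1},\nu_t^{N_2})$ is applied identically across all Blue agents, and the target mean-field $\mu_{t+1}^{*}= \mu_t^{N_1} F_t^\rho(\mu_t^{N_1},\nu_t^{N_2},\pi_t^{*})$ lies in $\R_{\mu,t}^{\rho}(\mu_t^{N_1},\nu_t^{N_2})$ and attains the outer $\max$ in~\eqref{eqn:lower-j-ab-t-rset}. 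For any (possibly non-identical) Red policy $\psi_t^{N_2}$, Theorem~\ref{thm:R-set-rich} supplies a $\tilde\nu_{t+1} \in \R_{\nu,t}^{\rho}(\mu_t^{N_1},\nu_t^{N_2})$, measurable with respect to $(\bfx_t^{N_1},\bfy_t^{N_2},\psi_t^{N_2})$, satisfying $\expct{\dtv{\N_{t+1}^{N_2},\tilde\nu_{t+1}}}{\psi_t^{N_2}}\leq |\Y|/(2\sqrt{N_2})$; Lemma~\ref{lmm:mf-aprx} applied to $\pi_t^{*}$ yields the Blue-side analogue $\expct{\dtv{\M_{t+1}^{N_1},\mu_{t+1}^{*}}}{\pi_t^{*}}\leq |\X|/(2\sqrt{N_1})$.

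\textbf{Gluing.} Applying the induction hypothesis pointwise and then taking expectations conditioned on $(\bfx_t^{N_1},\bfy_t^{N_2})$,
\[
\expct{J_{t+1}^{N,\alpha^*,\psi^{N_2}}(\bfX_{t+1}^{N_1},\bfY_{t+1}^{N_2})}{\alpha^{*},\psi^{N_2}} \;\geq\; \expct{\lowervalue_{\cor,t+1}^{\rho *}(\M_{t+1}^{N_1},\N_{t+1}^{N_2})}{\alpha^{*},\psi^{N_2}} - \delta_{t+1}.
\]
Lipschitz continuity of $\lowervalue_{\cor,t+1}^{\rho *}$ (Theorem~\ref{thm:value-L-cont}) combined with Jensen's inequality and the two total-variation bounds above replaces the random pair $(\M_{t+1}^{N_1},\N_{t+1}^{N_2})$ by the deterministic pair $(\mu_{t+1}^{*},\tilde\nu_{t+1})$ at an additive cost of $\tfrac{L_{J,t+1}}{2}(|\X|/\sqrt{N_1}+|\Y|/\sqrt{N_2})$. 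Because $\tilde\nu_{t+1}\in \R_{\nu,t}^{\rho}(\mu_t^{N_1},\nu_t^{N_2})$ and $\mu_{t+1}^{*}$ is the optimizer in the reachability formulation~\eqref{eqn:lower-j-ab-t-rset},
\[
\lowervalue_{\cor,t+1}^{\rho *}(\mu_{t+1}^{*},\tilde\nu_{t+1}) \geq \min_{\nu' \in \R_{\nu,t}^{\rho}(\mu_t^{N_1},\nu_t^{N_2})} \lowervalue_{\cor,t+1}^{\rho *}(\mu_{t+1}^{*},\nu') = \lowervalue_{\cor,t}^{\rho *}(\mu_t^{N_1},\nu_t^{N_2}) - r_t^\rho(\mu_t^{N_1},\nu_t^{N_2}).
\]
Adding $r_t^\rho(\mu_t^{N_1},\nu_t^{N_2})$ to both sides closes the inductive step; since the bound is uniform in $\psi^{N_2}$, it persists after taking $\min_{\psi^{N_2}}$.

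\textbf{Main obstacle.} The delicate point is the measurability and order-of-operations bookkeeping in the gluing step: $\lowervalue_{\cor,t+1}^{\rho *}$ depends only on the next empirical distributions, whereas $J_{t+1}^{N,\alpha^*,\psi^{N_2}}$ depends on the full next joint state, so the induction hypothesis must be applied \emph{inside} the conditional expectation before the Lipschitz bound is pulled outside via Jensen. A secondary subtlety is that Red's worst-case $\psi^{N_2}$ can be arbitrarily history-dependent; fortunately, establishing the bound for every fixed $\psi^{N_2}$ lets the outer $\min_{\psi^{N_2}}$ inherit it directly, provided one never swaps that minimum with the expectation over the randomness induced by the finite-population dynamics.
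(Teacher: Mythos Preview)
Your proposal is correct and follows essentially the same approach as the paper: backward induction using Lemma~\ref{lmm:mf-aprx} for the Blue deviation, Theorem~\ref{thm:R-set-rich} to produce a Red approximating mean-field in $\R_{\nu,t}^{\rho}$, Theorem~\ref{thm:value-L-cont} for the Lipschitz transfer, and the reachability formulation~\eqref{eqn:lower-j-ab-t-rset} to close the step. The only cosmetic difference is that the paper fixes an arbitrary $\psi^{N_2}$ before the induction and carries it throughout, rather than stating the hypothesis with the outer $\min_{\psi^{N_2}}$ as you do.
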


\begin{proof}
    The proof is constructed based on induction.
    Fix an arbitrary Red team strategy $\psi^{N_2} \in \Psi^{N_2}$.
    
    \textit{Base case:}
    At the terminal timestep $T$, since there is no decision to be made, both value functions are equal to the terminal reward and are thus the same.
    %
    %
    Formally, for all $\bfx^{N_1}_T \in \X^{N_1}$ and $\bfy^{N_2}_T \in \Y^{N_2}$,
    \begin{equation*}
        J_{T}^{N, \alpha^*, \psi^{N_2}}(\bfx^{N_1}_T, \bfy^{N_2}_T) = \lowervalue_{\cor,T}^{\rho*}(\mu^{N_1}_T, \nu^{N_1}_T)
        = r^\rho_T(\mu^{N_1}_T, \nu^{N_2}_T),
    \end{equation*} 
    where $\mu^{N_1}_T = \empMu{\bfx^{N_1}_T}$ and $\nu^{N_2}_T = \empNu{\bfy^{N_2}_T}$.
    For simplicity, we do not emphasize the correspondence between the joint states and the \ED{}s for the rest of the proof, as it is clear from the context.
    
    \textit{Inductive hypothesis: }
    Assume that at $t+1$, the following holds for all joint states  $\bfx^{N_1}_{t+1} \in \X^{N_1}$ and $\bfy^{N_2}_{t+1} \in \Y^{N_2}$:
    
    \begin{equation}
        J_{t+1}^{N, \alpha^*, \psi^{N_2}} (\bfx^{N_1}_{t+1}, \bfy^{N_2}_{t+1}) \geq 
        \lowervalue_{\cor, t+1}^{\rho*}(\mu^{N_1}_{t+1}, \nu^{N_2}_{t+1}) - \residue.
    \end{equation}
    
    \textit{Induction: }
    At timestep $t$, consider an arbitrary pair of joint states $\bfx^{N_1}_{t} \in \X^{N_1}$ and $\bfy^{N_2}_{t} \in \Y^{N_2}$, and their corresponding \ED{}s $\mu^{N_1}_t$ and $\nu^{N_2}_t$. 
    Define $\mu_{t+1}^{*}$ as
    \begin{equation*}
        \mu_{t+1}^* = \mu^{N_1}_{t} F(\mu^{N_1}_{t}, \nu^{N_2}_{t}, \alpha_t^*).
    \end{equation*}
    Note that, from the optimality of $\alpha_t^*$, we have
    \begin{equation}
        \label{eqn:mu-star-def}
        \mu_{t+1}^* \in \argmax_{\mu_{t+1} \in \mathcal{R}_{\mu,t}^\rho(\mu^{N_1}_{t}, \nu^{N_2}_{t})}~ \min_{\nu_{t+1} \in \mathcal{R}_{\nu,t}^\rho(\mu^{N_1}_{t}, \nu^{N_2}_{t})}
        \lowervalue_{\cor,t+1}^{\rho*}(\mu_{t+1}, \nu_{t+1}).
    \end{equation}

    Furthermore, from Theorem~\ref{thm:R-set-rich}, there exists a $\nu_{\apprx, t+1} \in \R_{\nu,t}^{\rho}(\mu^{N_1}_{t}, \nu^{N_2}_{t})$ for the Red team policy $\psi_t^{N_2}$ such that
    \begin{equation}
        \label{eqn:red-apprx-nu}
        \expct{\dtv{\N^{N_2}_{t+1}, \nu_{\apprx,t+1}}\big \vert \bfX_t^{N_1} =\bfx^{N_1}_t, \bfY_t^{N_2}=\bfy^{N_2}_t }{\psi_t^{N_2}} \leq \frac{|\Y|}{2}\sqrt{\frac{1}{N_2}}.
    \end{equation}

    For notational simplicity, we drop the conditions $\bfX_t^{N_1} =\bfx^{N_1}_t$ and $\bfY_t^{N_2}=\bfy^{N_2}_t$ in the following derivations. 
    Then, for all joint states $\bfx^{N_1}_t \in \X^{N_1}$ and $\bfy^{N_2}_t \in \Y^{N_2}$, we have
    \begin{flalign}
        & J_{t}^{N, \alpha^*, \psi^{N_2}}(\bfx^{N_1}_t, \bfy^{N_2}_t)
        =r^\rho_t(\mu^{N_1}_t, \nu^{N_2}_t) + \mathbb{E}_{\alpha^*, \psi^{N_2}}\Big[J_{t+1}^{N, \alpha^*, \psi^{N_2}}(\bfX^{N_1}_{t+1}, \bfY^{N_2}_{t+1}) \Big]\\
        &\stackrel{\text{(i)}}{\geq}  r^\rho_t(\mu^{N_1}_t, \nu^{N_2}_t) + \mathbb{E}_{\alpha^*, \psi^{N_2}}\Big[\lowervalue_{\cor, t+1}^{\rho*}(\M^{N_1}_{t+1}, \N^{N_2}_{t+1}) \Big] -\residue
        \label{eqn:main-2}\\
        &= r^\rho_t(\mu^{N_1}_t, \nu^{N_2}_t) -\residue \\
        & \qquad + \mathbb{E}_{\alpha^*, \psi^{N_2}}\Big[\lowervalue_{\cor, t+1}^{\rho*}(\M^{N_1}_{t+1}, \N^{N_2}_{t+1}) 
        - \lowervalue_{\cor, t+1}^{\rho*}(\mu_{t+1}^*, \nu_{\apprx,t+1})  + \lowervalue_{\cor, t+1}^{\rho*}(\mu_{t+1}^*, \nu_{\apprx,t+1}) \Big] \nonumber\\
        &\stackrel{\text{(ii)}}{\geq}  r^\rho_t(\mu^{N_1}_t, \nu^{N_2}_t) +
        \lowervalue_{\cor, t+1}^{\rho*}(\mu_{t+1}^*, \nu_{\apprx,t+1}) -\residue\label{eqn:main-3} \\
        &\qquad \qquad \qquad \qquad \qquad 
        - L_{J, t+1}  \underbrace{\mathbb{E}_{\alpha^*}\Big[\dtv{\M_{t+1}^{N_1},{\mu}^*_{t+1}}\Big]}_{=\mathcal{O}(\frac{1}{\sqrt{N_1}}) \text{ due to Lemma~\ref{lmm:mf-aprx}}}  
        - L_{J, t+1} \underbrace{\mathbb{E}_{ \psi^{N_2}}\Big[\dtv{ \N^{N_2}_{t+1},\nu_{\apprx,t+1}}\Big]} _{=\mathcal{O}(\frac{1}{\sqrt{N_2}}) \text{ due to \eqref{eqn:red-apprx-nu}}}
        \nonumber \\
        & \stackrel{\text{(iii)}}{=}  r^\rho_t(\mu^{N_1}_t, \nu^{N_2}_t) +
        \lowervalue_{\cor, t+1}^{\rho*}(\mu_{t+1}^*, \nu_{\apprx,t+1}) -\residue \label{eqn:main-5}\\
        &\stackrel{\text{(iv)}}{\geq}  r^\rho_t(\mu^{N_1}_t, \nu^{N_2}_t) + \min_{\nu_{t+1}\in \mathcal{R}_{\nu,t}^\rho(\mu^{N_1}_t, \nu^{N_2}_t)}
        \lowervalue_{\cor,t+1}^{\rho*}(\mu^*_{t+1}, \nu_{t+1}) -\residue\label{eqn:main-6}\\
        &\stackrel{\text{(v)}}{=}  r^\rho_t(\mu^{N_1}_t, \nu^{N_2}_t) +
        \max_{\mu_{t+1} \in \mathcal{R}_{\mu,t}^\rho(\mu^{N_1}_t, \nu^{N_2}_t)}~ \min_{\nu_{t+1}\in \mathcal{R}_{\nu,t}^\rho(\mu^{N_1}_t, \nu^{N_2}_t)}
        \lowervalue_{\cor,t+1}^{\rho*}(\mu_{t+1}, \nu_{t+1}) -\residue \label{eqn:main-7}\\
        &= \lowervalue_{\cor,t}^{\rho*}(\mu^{N_1}_t, \nu^{N_2}_t) -\residue. \label{eqn:main-8}
    \end{flalign}
    For inequality (i), we used the inductive hypothesis; 
    for inequality (ii), we utilized the Lipschitz continuity of the coordinator value function in Theorem~\ref{thm:value-L-cont}; 
    equality (iii) is due to  Lemma~\ref{lmm:mf-aprx} and Theorem~\ref{thm:R-set-rich}, and the fact that the two error terms with Lipschitz constant $L_{J, t+1}$ are bounded by 
    $\residue$;
    inequality (iv) is due to the fact that $\nu_{\apprx,t+1}$ is in the reachable set;
    equality (v) comes from the definition of $\mu^*_{t+1}$ in~\eqref{eqn:mu-star-def},
    and the final equality in~\eqref{eqn:main-8} is simply the definition of $\lowervalue^{\rho*}_{\cor,t}$, which completes the induction.

    Since the Red team strategy $\psi^{N_2} \in \Psi^{N_2}$ is arbitrary, we have that, for all joint states $\bfx^{N_1} \in \X^{N_1}$ and $\bfy^{N_2} \in \Y^{N_2}$,
    \begin{align*}
         \min_{\psi^{N_2}\in \psi^{N_2}} J^{N, \alpha^*, \psi^{N_2}}&(\bfx^{N_1} , \bfy^{N_2}) = \min_{\psi^{N_2}\in \psi^{N_2}} J^{N, \alpha^*, \psi^{N_2}}_{0}(\bfx^{N_1}, \bfy^{N_2}) \\
         &\geq \lowervalue_{\cor,0}^{\rho*}(\mu, \nu) - \residue
         =\lowervalue_{\cor}^{\rho*}(\mu^{N_1}, \nu^{N_2}) - \residue.
    \end{align*}
    
\end{proof}

\begin{lemma}
    \label{lmm:cor-apprx-err}
    The following inequality holds for all joint states $\bfx^{N_1} \in \X^{N_1}$ and $\bfy^{N_2} \in \Y^{N_2}$, 
    \begin{equation}
        \label{eqn:cor-apprx-err-bound}
        \lowervalue_{\cor}^{\rho*}(\mu^{N_1} , \nu^{N_1}) \geq \lowervalue^{N*}(\bfx^{N_1} , \bfy^{N_2} )- \residue,
    \end{equation}
    where $\mu^{N_1}=\empMu{\bfx^{N_1}}$ and $\nu^{N_2} = \empNu{\bfy^{N_2}}$.
\end{lemma}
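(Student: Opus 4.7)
The plan is to run a backward induction on time that mirrors Lemma~\ref{lmm:coordination-esp-nash} but in the opposite direction: rather than fix $\alpha^*$ and lower-bound the finite-population value by $\lowervalue_\cor^{\rho*}$, I upper-bound the finite-population max-min envelope $\lowervalue^{N*}_t$ by $\lowervalue_{\cor,t}^{\rho*}+c_t\,\residue$ for every $t$ and every joint state, with $c_T=0$ and $c_t$ growing as an accumulation of Lipschitz constants through the horizon. Specializing to $t=0$ then delivers the claim since $\lowervalue^{N*}=\lowervalue^{N*}_0$ and $\lowervalue_\cor^{\rho*}=\lowervalue_{\cor,0}^{\rho*}$.

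Writing the finite-population lower value via its dynamic-programming recursion
\begin{equation*}
  \lowervalue^{N*}_t(\bfx^{N_1}_t,\bfy^{N_2}_t) = \max_{\phi^{N_1}_t\in\Phi^{N_1}_t}\min_{\psi^{N_2}_t\in\Psi^{N_2}_t}\Big\{\,r^\rho_t(\mu^{N_1}_t,\nu^{N_2}_t) + \expct{\lowervalue^{N*}_{t+1}(\bfX^{N_1}_{t+1},\bfY^{N_2}_{t+1})}{\phi^{N_1}_t,\psi^{N_2}_t}\Big\},
\end{equation*}
with base case $\lowervalue^{N*}_T=r^\rho_T$, substituting the inductive hypothesis into the conditional expectation, and invoking the Lipschitz continuity of $\lowervalue_{\cor,t+1}^{\rho*}$ from Theorem~\ref{thm:value-L-cont}, the inductive step reduces to establishing
\begin{equation*}
\max_{\phi^{N_1}_t}\min_{\psi^{N_2}_t}\expct{\lowervalue_{\cor,t+1}^{\rho*}(\M^{N_1}_{t+1},\N^{N_2}_{t+1})}{\phi^{N_1}_t,\psi^{N_2}_t} \leq \max_{\mu_{t+1}\in\mathcal{R}^\rho_{\mu,t}(\mu^{N_1}_t,\nu^{N_2}_t)}\min_{\nu_{t+1}\in\mathcal{R}^\rho_{\nu,t}(\mu^{N_1}_t,\nu^{N_2}_t)}\lowervalue_{\cor,t+1}^{\rho*}(\mu_{t+1},\nu_{t+1}) + \residue .
\end{equation*}
The right-hand side, after adding $r^\rho_t(\mu^{N_1}_t,\nu^{N_2}_t)$, is exactly $\lowervalue_{\cor,t}^{\rho*}(\mu^{N_1}_t,\nu^{N_2}_t)$ by the reachability-based recursion~\eqref{eqn:lower-j-ab-t-rset}, so the induction closes once this single display is in hand.

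The two reachability approximations that make this display work go in opposite directions. For the outer $\max$, Theorem~\ref{thm:R-set-rich} states that any (possibly non-identical) $\phi^{N_1}_t$ induces an $\M^{N_1}_{t+1}$ lying within $\frac{|\X|}{2\sqrt{N_1}}$ in expected total variation of some $\mu^*_{t+1}(\phi^{N_1}_t)\in \mathcal{R}^\rho_{\mu,t}(\mu^{N_1}_t,\nu^{N_2}_t)$, so the max over Blue team policies is dominated by the max over $\mu_{t+1}\in\mathcal{R}^\rho_{\mu,t}$ after paying one Lipschitz residual. For the inner $\min$, I use the complementary direction: by Remark~\ref{rmk:equivalent-policy}, every target $\nu_{t+1}\in\mathcal{R}^\rho_{\nu,t}(\mu^{N_1}_t,\nu^{N_2}_t)$ is realized by an \emph{identical} Red team strategy built from the local policy that generates it, and Lemma~\ref{lmm:mf-aprx} guarantees this identical strategy concentrates $\N^{N_2}_{t+1}$ within $\frac{|\Y|}{2\sqrt{N_2}}$ of $\nu_{t+1}$ in expectation. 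Because Red minimizes, restricting the inner min to this class of identical strategies is a valid upper bound. Combining both approximations with the Lipschitz constant $L_{J,t+1}$ produces the required $\residue$ slack and yields a recursion of the form $c_t = c_{t+1}+L_{J,t+1}(|\X|+|\Y|)/2$, which stays $\mathcal{O}(1)$ over the finite horizon.

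The main obstacle is aligning these two asymmetric reachability statements so that swapping ``max over team strategies'' with ``max over reachable mean-fields'' (and symmetrically for min) costs only a single $\residue$ per stage. Theorem~\ref{thm:R-set-rich} is purely existential on Blue's side---Blue cannot freely pick a point in the reachable set, only the finite-population trajectory does---whereas on Red's side one must explicitly exhibit an identical team strategy realizing a prescribed target $\nu_{t+1}$, via the construction in Remark~\ref{rmk:equivalent-policy} followed by Lemma~\ref{lmm:mf-aprx}. Making these one-sided approximations collapse cleanly into the symmetric reachability-based recursion~\eqref{eqn:lower-j-ab-t-rset} is the delicate step; the remaining accounting of Lipschitz constants is routine and parallels the chain of inequalities~\eqref{eqn:main-2}--\eqref{eqn:main-8} in the proof of Lemma~\ref{lmm:coordination-esp-nash}.
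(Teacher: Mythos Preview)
Your proposal is correct and follows essentially the same approach as the paper's proof: backward induction, applying the inductive hypothesis to replace $\lowervalue^{N*}_{t+1}$ by $\lowervalue_{\cor,t+1}^{\rho*}$, restricting the Red minimization to identical team policies (step (ii) in the paper), then using Theorem~\ref{thm:R-set-rich} for the Blue side and Lemma~\ref{lmm:mf-aprx} for the Red side together with the Lipschitz continuity of $\lowervalue_{\cor,t+1}^{\rho*}$ to collapse the max-min over team policies into the reachability-based recursion~\eqref{eqn:lower-j-ab-t-rset}. Your identification of the asymmetry---Theorem~\ref{thm:R-set-rich} existentially for Blue, explicit identical-strategy realization plus Lemma~\ref{lmm:mf-aprx} for Red---matches the paper's steps (iv)--(vi) exactly.
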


\begin{proof}
    We prove the lemma through an inductive argument. 
    
    \textit{Base case:}
    At the terminal timestep $T$, the two value functions are the same. Thus, we have, for all joint states $\bfx^{N_1}_T \in \bfX^{N_1}$ and $\bfy^{N_2}_T \in \bfY^{N_2}$, that
    \begin{equation*}
        \lowervalue_{\cor,T}^{\rho*}(\mu^{N_1}_T , \nu^{N_1}_T)
         = \underline{J}_{T}^{N*}(\bfx^{N_1}_T , \bfy^{N_2}_T) = r^\rho_t (\mu^{N_1}_T , \nu^{N_1}_T).
    \end{equation*}

    \textit{Inductive hypothesis:}
    Assume that, at time step $t+1$, the following holds for all $\bfx^{N_1}_{t+1} \in \bfX^{N_1}$ and $\bfy^{N_2}_{t+1} \in \bfY^{N_2}$,
    \begin{equation}
        \lowervalue_{\cor, t+1}^{\rho*}(\mu^{N_1}_{t+1} , \nu^{N_1}_{t+1}) \geq 
        \underline{J}_{t+1}^{N*} (\bfx^{N_1}_{t+1} , \bfy^{N_2}_{t+1})- \residue.
    \end{equation}

    \textit{Induction:}
    Consider arbitrary $\bfx^{N_1}_{t} \in \bfX^{N_1}$ and $\bfy^{N_2}_{t} \in \bfY^{N_2}$.
    For each Blue team policy $\phi_t^{N_1} \in \Phi_t^{N_1}$, Theorem~\ref{thm:R-set-rich} allows us to define $\mu_{\apprx, t+1}^{\phi^{N_1}_t} \in \R_{\mu}(\mu_t^{N_1}, \nu_t^{N_2})$ such that
    \begin{equation}
        \label{eqn:blue-apprx-mu}
        \expct{\dtv{\M^{N_1}_{t+1}, \mu^{\phi^{N_1}_t}_{\apprx,t+1}}\big \vert \bfX_t^{N_1} =\bfx^{N_1}_t, \bfY_t^{N_2}=\bfy^{N_2}_t }{\phi_t^{N_1}} \leq \frac{|\X|}{2}\sqrt{\frac{1}{N_1}}.
    \end{equation}

    For an identical Red team policy $\psi_t \in \Psi_t$, denote $\nu^{\psi_t}_{t+1} = \nu^{N_2}_t G^\rho_t(\mu^{N_1}_t, \nu^{N_2}_t, \psi_t)$. Then, we have
    \begin{align*}
        \underline{J}_{t}^{N*} &(\bfx^{N_1}_{t} , \bfy^{N_2}_{t}) 
        = \max_{\phi^{N_1}_t \in \Phi^{N_1}_t} \min_{\psi^{N_2}_t \in \Psi^{N_2}_t} r^\rho_{t} (\mu^{N_1}_t, \nu^{N_2}_t) + \mathbb{E}_{\phi^{N_1}_t,\psi_t^{N_2}} \Big[\underline{J}_{t+1}^{N*} (\bfX^{N_1}_{t+1} , \bfY^{N_2}_{t+1}) \Big]\\
        & \stackrel{\text{(i)}}{\leq}   r^\rho_{t} (\mu^{N_1}_t, \nu^{N_2}_t) + \max_{\phi^{N_1}_t \in \Phi^{N_1}_t} \min_{\psi_t^{N_2} \in \Psi^{N_2}_t}\mathbb{E}_{\phi^{N_1}_t,\psi_t^{N_2}} \Big[\lowervalue_{\cor, t+1}^{\rho*}(\M^{N_1}_{t+1}, \N^{N_2}_{t+1}) \Big] +\residue \\
        & \stackrel{\text{(ii)}}{\leq}  r^\rho_{t} (\mu^{N_1}_t, \nu^{N_2}_t) + \max_{\phi^{N_1}_t \in \Phi^{N_1}_t} \; \min_{\psi_t \in \Psi_t} \; \mathbb{E}_{\phi^{N_1}_t,\psi_t} \Big[\lowervalue_{\cor, t+1}^{\rho*}(\M^{N_1}_{t+1}, \N^{N_2}_{t+1}) \Big] +\residue\\
        &=  r^\rho_{t} (\mu^{N_1}_t, \nu^{N_2}_t) + \residue 
        + \max_{\phi^{N_1}_t \in \Phi^{N_1}_t} \; \min_{\psi_t \in \Psi_t} \; \mathbb{E}_{\phi^{N_1}_t,\psi_t} \Big[\lowervalue_{\cor, t+1}^{\rho*}(\M^{N_1}_{t+1}, \N^{N_2}_{t+1}) \\
        & \qquad \qquad \qquad \qquad \qquad \qquad \qquad \qquad \qquad 
        - \lowervalue_{\cor, t+1}^{\rho*}(\mu_{\apprx, t+1}^{\phi^{N_1}_t}, \nu_{t+1}^{\psi_t}) +  \lowervalue_{\cor, t+1}^{\rho*}(\mu_{\apprx, t+1}^{\phi^{N_1}_t}, \nu_{t+1}^{\psi_t})\Big]\\
        &\stackrel{\text{(iii)}}{\leq} r^\rho_{t} (\mu^{N_1}_t, \nu^{N_2}_t) +\max_{\phi^{N_1}_t \in \Phi^{N_1}_t} \; \min_{\psi_t \in \Psi_t} \lowervalue_{\cor, t+1}^{\rho*}(\mu_{\apprx, t+1}^{\phi^{N_1}_t}, \nu_{t+1}^{\psi_t})         +\residue\\
        & \qquad \qquad \qquad \qquad \qquad \qquad\qquad \qquad
        + L_{J, t+1}  \mathbb{E}_{\phi^{N_1}_t,\psi_t}\Big[\dtv{\M_{t+1}^{N_1}, \mu_{\apprx, t+1}^{\phi^{N_1}_t}} + \dtv{\N_{t+1}^{N_2}, \nu_{t+1}^{\psi_t}}\Big] \\
        &\stackrel{\text{(iv)}}{\leq} r^\rho_{t} (\mu^{N_1}_t, \nu^{N_2}_t) + \max_{\phi^{N_1}_t \in \Phi^{N_1}_t} \; \min_{\psi_t \in \Psi_t} \lowervalue_{\cor, t+1}^{\rho*}(\mu_{\apprx, t+1}^{\phi^{N_1}_t}, \nu_{t+1}^{\psi_t}) +\residue \\
        &\stackrel{\text{(v)}}{=} r^\rho_{t} (\mu^{N_1}_t, \nu^{N_2}_t) + \max_{\phi^{N_1}_t \in \Phi^{N_1}_t} \min_{\nu_{t+1} \in \R_{\nu,t}^\rho (\mu^{N_1}_t, \nu^{N_2}_t)} \lowervalue_{\cor, t+1}^{\rho*}(\mu_{\apprx, t+1}^{\phi^{N_1}_t}, \nu_{t+1})+ \residue
        \\
        &\stackrel{\text{(vi)}}{\leq} r^\rho_{t} (\mu _{t}, \nu_{t}) + \max_{\mu_{t+1} \in \R_{\mu,t}^\rho (\mu^{N_1}_t, \nu^{N_2}_t)} \min_{\nu_{t+1} \in \R_{\nu,t}^\rho (\mu^{N_1}_t, \nu^{N_2}_t)} \lowervalue_{\cor, t+1}^{\rho*}(\mu_{t+1}, \nu_{t+1}) + \residue
        \\
        &= \lowervalue_{\cor, t}^{\rho*}(\mu_{t}, \nu_{t}) +\residue.
    \end{align*}
    For inequality (i), we used the inductive hypothesis; 
    for inequality (ii), we reduced the optimization domain of the Red team to the identical policy space;
    inequality (iii) is a result of the Lipschitz continuity;
    for inequality (iv), we use Lemma~\ref{lmm:mf-aprx} and Theorem~\ref{thm:R-set-rich};
    equality (v) holds, since for all $\mu_{t+1}$ in the reachable set, there is an identical Red team strategy that induces it, and vice versa.  Consequently, switching from optimizing with the identical Red team policies to the reachable set does not change the minimization domain;
    inequality(vi) is due to the fact that $\mu_{\apprx, t+1}^{\phi^{N_1}_t}$ is always in the reachable set by construction.
\end{proof}

\section{Numerical Example}
\label{sec:example}
In this section, we provide two numerical examples. 
The first one illustrates a scenario where the lower and upper coordinator game values are different. 
The second one verifies the performance guarantees in the finite-population game.
For both examples, the states spaces for the Blue and Red team are $\X=\{x^1,x^2\}$ and $\Y=\{y^1,y^2\}$, and the action spaces are $\U = \{u^1, u^2\}$ and $\V = \{v^1, v^2\}$. 
The two-state state spaces allow the \emph{joint} mean-fields $(\mu_t, \nu_t)$ to be fully characterized solely by $\mu_t(x^1)$ and $ \nu_t(y^1)$.

The code that implements the two examples can be found at \hyperlink{https://github.com/scottyueguan/MFTG}{https://github.com/scottyueguan/MFTG}.

\subsection{Numerical Example 1}
\vspace{-0.05in}

We consider a following two-stage example with a population ratio $\rho=0.6$. The \emph{time-invariant} transition kernel for the Blue agents is defined as follows:
\begin{small}
\begin{alignat*}{2}
&f^\rho(x^1|x^1, u^1, \mu, \nu) = 0.5 \big(1 + \big(\rho \mu(x^1) - (1-\rho) \nu(y^1)\big)\big), ~~
&& f^\rho(x^2|x^1, u^1, \mu, \nu) = 0.5 \big(1 - \big(\rho \mu(x^1) - (1-\rho) \nu(y^1)\big)\big),\\
&f^\rho(x^1|x^1, u^2, \mu, \nu) = 0.5 \big(1 - 0.3\big(\rho \mu(x^1) - (1-\rho) \nu(y^1)\big)\big), ~~
&& f^\rho(x^2|x^1, u^2, \mu, \nu) = 0.5 \big(1 + 0.3\big(\rho \mu(x^1) - (1-\rho) \nu(y^1)\big)\big),\\
&f^\rho(x^1|x^2, u^1, \mu, \nu) = 0.5 \big(1 - \big(\rho \mu(x^2) - (1-\rho) \nu(y^2)\big)\big), ~~
&&f^\rho(x^2|x^2, u^1, \mu, \nu) = 0.5 \big(1 + \big(\rho \mu(x^2) - (1-\rho) \nu(y^2)\big)\big), \\
&f^\rho(x^1|x^2, u^2, \mu, \nu) = 0.5 \big(1 + 0.3\big(\rho \mu(x^2) - (1-\rho) \nu(y^2)\big)\big), ~~
&&f^\rho(x^2|x^2, u^2, \mu, \nu) = 0.5 \big(1 -0.3\big(\rho \mu(x^2) - (1-\rho) \nu(y^2)\big)\big).
\end{alignat*}
\end{small}
The time-invariant Red transition kernel is similarly defined as
\begin{small}
\begin{alignat*}{2}
&g^\rho(y^1|y^1, v^1, \mu, \nu) = 0.5 \big(1 + \big((1-\rho) \nu(y^1)-\rho \mu(x^1) \big)\big), ~~
&& g^\rho(y^2|y^1, v^1, \mu, \nu) = 0.5 \big(1 - \big((1-\rho) \nu(y^1)-\rho \mu(x^1) \big)\big),\\
&g^\rho(y^1|y^1, v^2, \mu, \nu) = 0.5 \big(1 - 0.3\big((1-\rho) \nu(y^1)-\rho \mu(x^1) \big)\big), ~~
&& g^\rho(y^2|y^1, v^2, \mu, \nu) = 0.5 \big(1 + 0.3\big((1-\rho) \nu(y^1)-\rho \mu(x^1) \big)\big),\\
&g^\rho(y^1|y^2, v^1, \mu, \nu) = 0.5 \big(1 - \big((1-\rho) \nu(y^2)-\rho \mu(x^2) \big)\big), ~~
&& g^\rho(y^2|y^2, v^1, \mu, \nu) = 0.5 \big(1 + \big((1-\rho) \nu(y^2)-\rho \mu(x^2) \big)\big),\\
&g^\rho(y^1|y^2, v^2, \mu, \nu) = 0.5 \big(1 + 0.3\big((1-\rho) \nu(y^2)-\rho \mu(x^2) \big)\big), ~~
&& g^\rho(y^2|y^2, v^2, \mu, \nu) = 0.5 \big(1 -0.3\big((1-\rho) \nu(y^2)-\rho \mu(x^2) \big)\big).
\end{alignat*}
\end{small}

\vspace{-0.2in}
For simplicity, we only consider non-zero reward at the terminal time step $T=2$. Specifically,
\begin{align*}
    &r^\rho_0(\mu, \nu)= r^\rho_1(\mu, \nu) = 0 \quad \forall \mu \in \P(\X), \nu \in \P(\Y), \\
    &r^\rho_2(\mu, \nu) = \mu(x^2).
\end{align*}
In other words, the objective for the Blue team is to maximize its population fraction on state $x^2$ at $t=2$. While the Red distribution does not play a role in the reward structure, the Red agents need to strategically distribute themselves to influence the Blue agent's transitions in order to minimize the terminal reward. 

\begin{figure}[b!]
    \vspace{-0.1in}
    \centering
    \includegraphics[width=\linewidth]{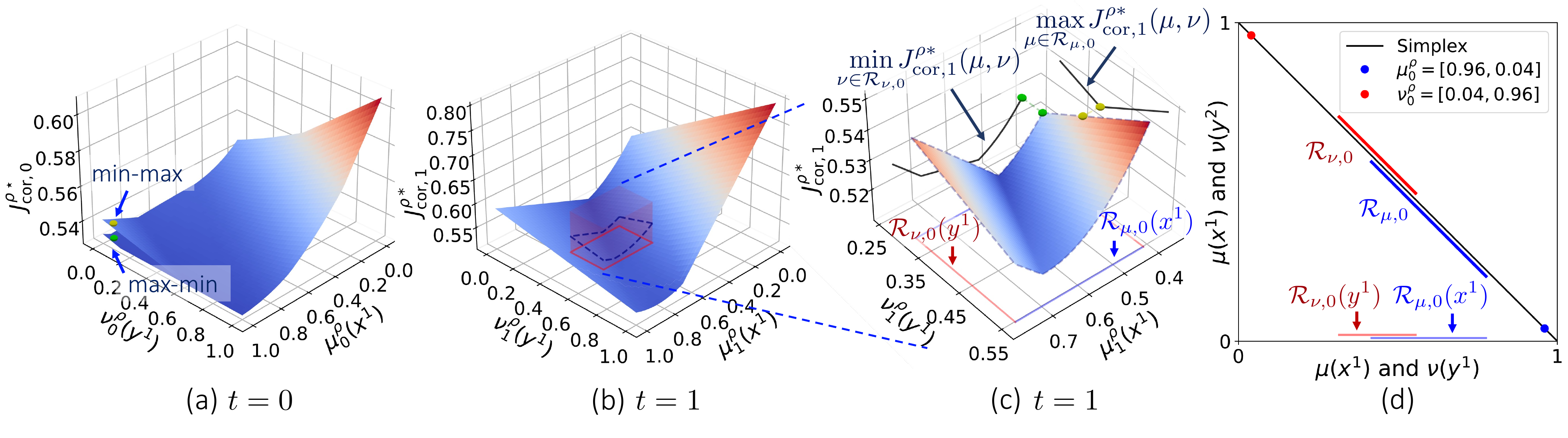}
    \vspace{-0.3in}
    \caption{Subplots (a)-(c) present the game value computed via discretization. 
    The x- and y-axes correspond to $\mu^\rho_t(x^1)$ and $\nu^\rho_t(y^1)$, respectively.
    Subplot (d) illustrates the reachable sets starting from $\mu_0=[0.96, 0.04]$ and $\nu_0 = [0.04, 0.96]$.
    }
    \label{fig:simple-example-values}
\end{figure}

The game values for the coordinator game are computed through discretization, where we uniformly mesh the two-dimensional simplices $\P(\X)$ and $\P(\Y)$ into 500 bins%
\footnote{One can easily provide an error bound on the difference between the discretized value and the true optimal value using the Lipschitz constants. See~\citep{guan:2023chasing} for an example.}.
The numerically solved values are presented in~\figref{fig:simple-example-values}. 

While the game value $J^{\rho*}_{\cor,1}$ exists at time $t=1$, it is not convex-concave, and hence the best-response correspondences in the coordinator game is not convex-valued. 
Consequently, the upper (max-min) and lower (min-max) game values at the previous step $t\hspace{-0.02in}=\hspace{-0.02in}0$ differs, as observed in subplot (a).
Specifically, at $\mu^\rho_0=[0.96, 0.04]$ and $\nu^\rho_0=[0.04, 0.96]$, we have the lower value $\underline{J}^{\rho*}_{\cor,0} = 0.5298$ and the upper value $\bar{J}^{\rho*}_{\cor,0} = 0.5384$, which are visualized as the green and yellow points.
This discrepancy in the game values implies the absence of a Nash equilibrium in this coordinator game. 

Based on~\eqref{eqn:lower-j-ab-t-rset}, the optimization domains for computing $J^{\rho*}_{\cor,0}$ are $\mathcal{R}_{\mu,0}(\mu_0^\rho,\nu_0^\rho)$ for the maximization and $\mathcal{R}_{\nu,0}(\mu_0^\rho,\nu_0^\rho)$ for the minimization, both of which are plotted in (d) and visualized as the box in subplots (b) and (c). 
Subplot (c) presents a zoom-in for the optimization $\max_{\mathcal{R}_{\mu,0}} \min_{\mathcal{R}_{\nu,0}} J^{\rho*}_{\cor,1}$ and its min-max counterpart.
The marginal functions are also plotted, from which the max-min (green point) and min-max values (yellow point) at $t=0$ can be directly obtained.

\begin{figure}[t]
    \centering
    \vspace{-0.2in}
    \includegraphics[width=0.9\linewidth]{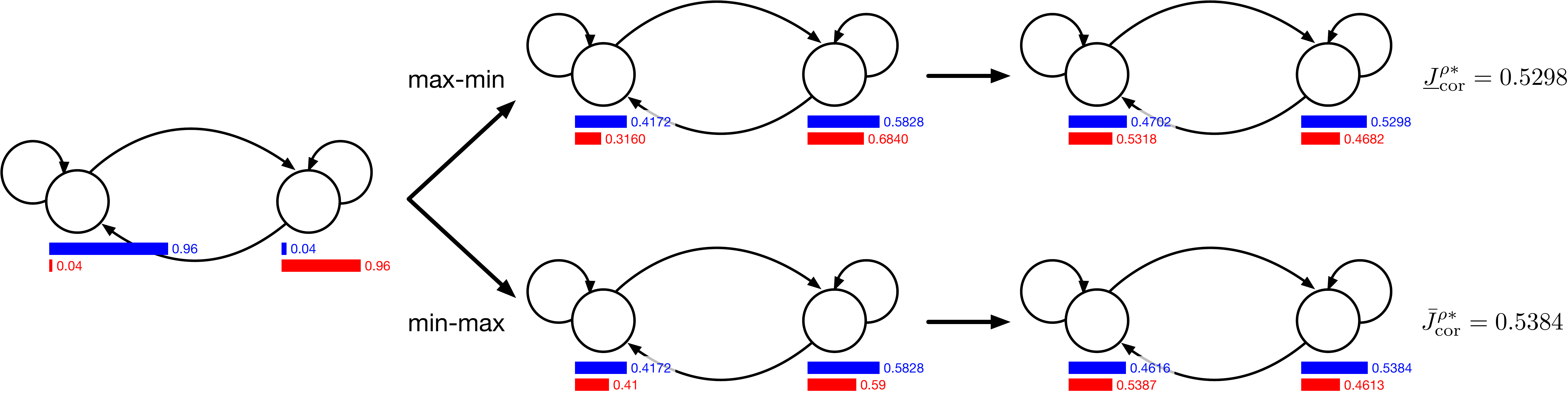}
    \caption{Max-min and min-max trajectory of the coordinator game.}
    \label{fig:example-trajectories}
    \vspace{-0.15in}
\end{figure}

Figure~\ref{fig:example-trajectories} presents the max-min and min-max trajectory of the coordinator game. 
Note that the max-min solution corresponds to the highest worst case performance for the Blue coordinator, where the Blue coordinator announces its first move from $[0.96, 0.04]$ to $[0.4172, 0.5828]$ and then the Red coordinator exploits the announced move. 
If instead, the Red coordinator announces its first move toward $[0.3160, 0.6840]$ in the max-min trajectory, the Blue coordinator can exploit that move by achieving $[0.776, 0.224]$ at $t=1$ and ultimately receive a higher reward of 0.5442.

\subsection{Numerical Example 2}

It is generally challenging to verify the suboptimality bound in Theorem~\ref{thm:performance-guarantees}, since computing the true optimal performance of a finite-population team game is intractable. 
However, for the following simple example, we can construct the optimal team strategies even for large but still finite-population teams.

Consider a ZS-MFTG instance with a terminal time $T=2$ and population ratio $\rho=0.375$.
The (minimizing) Red team's objective is to maximize its presence at state $y^1$ at $t=2$, which translates to 
\begin{align*}
    &r^\rho_0(\mu, \nu)= r^\rho_1(\mu, \nu) = 0 \quad \forall \mu \in \P(\X), \nu \in \P(\Y), \\
    &r^\rho_2(\mu, \nu) = - \nu(y^1)
\end{align*}

The Blue transition kernels are time-invariant, deterministic and independent of the \MF{}s, and are defined as 
\begin{equation}
    \label{eqn:example2-blue-dynamics} 
    \begin{alignedat}{3}
        &  f_t^\rho(x^1|x^1, u^1, \mu, \nu) = 1.0, \quad
        && f_t^\rho(x^2|x^1, u^1, \mu, \nu) = 0.0, \qquad
        && \forall \; \mu \in \P(\X), \nu \in \P(\Y), t\in \{0,1\}, \\
        &  f_t^\rho(x^1|x^1, u^2, \mu, \nu) = 0.0, \quad
        && f_t^\rho(x^2|x^1, u^2, \mu, \nu) = 1.0, \qquad
        && \forall \; \mu \in \P(\X), \nu \in \P(\Y),t\in \{0,1\}, \\
        &  f_t^\rho(x^1|x^2, u^1, \mu, \nu) = 0.0, \quad
        && f_t^\rho(x^2|x^2, u^1, \mu, \nu) = 1.0, \qquad
        && \forall \; \mu \in \P(\X), \nu \in \P(\Y),t\in \{0,1\}, \\
        &  f_t^\rho(x^1|x^2, u^2, \mu, \nu) = 1.0, \quad
        && f_t^\rho(x^2|x^2, u^2, \mu, \nu) = 0.0, \qquad
        && \forall \; \mu \in \P(\X), \nu \in \P(\Y), t\in \{0,1\}.
    \end{alignedat}
\end{equation}
Under the above transition kernel, a Blue agent can freely move tween the two nodes.
Specifically, action $u^1$ instructs a Blue agent to remain at its current state, while action $u^2$ directs it to transition to the other state.
One can verify that the above Blue transition kernel ensures that the Blue reachable set covers the whole simplex regardless of the initial team distributions, i.e., $\mathcal{R}_{\mu, t}(\mu, \nu) = \P(\X)$ for all $\mu \in \P(\X)$, $\nu \in \P(\Y)$ and $t\in \{0,1\}$.

The Red transition kernels are time dependent and are defined as 
\begin{alignat}{3}
    &   g_0^\rho(y^1|y^1, v, \mu, \nu) = 1.0, \quad
    &&  g_0^\rho(y^2|y^1, v, \mu, \nu) = 0.0,  ~~ 
    && \forall \; v \in \V, \mu \in \P(\X), \nu \in \P(\Y),
    \label{eqn:example2-red-dynamics-t0-1} \\
    &   g_0^\rho(y^1|y^2, v, \mu, \nu) = 0.0, \quad
    &&  g_0^\rho(y^2|y^2, v, \mu, \nu) = 1.0, ~~ 
    && \forall \; v \in \V, \mu \in \P(\X), \nu \in \P(\Y), 
    \label{eqn:example2-red-dynamics-t0-2}\\
    &   g_1^\rho(y^1|y^1, v, \mu, \nu) = 1.0, \quad
    &&  g_1^\rho(y^2|y^1, v, \mu, \nu) = 0.0, ~~ 
    && \forall \; v \in \V, \mu \in \P(\X), \nu \in \P(\Y),
    \label{eqn:example2-red-dynamics-t1-2}\\
    &   g_1^\rho(y^1|y^2, v^1, \mu, \nu) = 0.0, \quad
    &&  g_1^\rho(y^2|y^2, v^1, \mu, \nu) = 1.0, ~~  
    && \qquad \quad \forall \; \mu \in \P(\X), \nu \in \P(\Y),
    \label{eqn:example2-red-dynamics-t1-1}
\end{alignat}
and the transitions from state $y^2$ using action $v^2$ at $t=1$ are given by
\begin{equation}
    \label{eqn:example2-red-dynamics-t1-3}
    \begin{alignedat}{2}
        & g_1^\rho(y^2|y^2, v^2, \mu, \nu) = 1-\min\Big\{ 5 \Big((\mu(x^1)-\frac{1}{\sqrt{2}})^2 + (\mu(x^2)-(1-\frac{1}{\sqrt{2}}))^2\Big), 1\Big\}, ~~~~
        &&\forall \nu \in \P(\Y),\\
        & g_1^\rho(y^2|y^2, v^2, \mu, \nu) =\min\Big\{ 5 \Big((\mu(x^1)-\frac{1}{\sqrt{2}})^2 + (\mu(x^2)-(1-\frac{1}{\sqrt{2}}))^2\Big), 1\Big\}, ~~~~
        &&\forall \nu \in \P(\Y).
    \end{alignedat}
\end{equation}

The Red transitions in \eqref{eqn:example2-red-dynamics-t0-1} and~\eqref{eqn:example2-red-dynamics-t0-2} implies that all Red agents are frozen and can not change their states with any action at $t=0$.
Similarly, all Red agents on state $y^1$ are frozen at $t=1$ as depicted in~\eqref{eqn:example2-red-dynamics-t1-2}.
The Red agents at state $y^2$ at $t=1$ can choose to either deterministically stay at state $y^2$ using action $v^1$ as in~\eqref{eqn:example2-red-dynamics-t1-1} or try to move toward state $y^1$ using action $v^2$. 
As in~\eqref{eqn:example2-red-dynamics-t1-3}, the probability of transitioning from $y^2$ to $y^1$ is controlled by the Blue team's distribution $\mu$ at $t=1$. 
If the Blue team can perfectly match the target distribution $[1/\sqrt{2}, 1- 1/\sqrt{2}]$, then no Red agent can transition from $y^2$ to $y^1$. 
Otherwise, the more the Blue team deviates from the target distribution, the more likely a Red agent can transition from $y^2$ to $y^1$.

\vspace{-0.1in}

\paragraph{Infinite-population case.}
Since all Red agents are frozen at $t=0$ and the Red sub-population at state 
$y^2$ is again frozen at $t=1$, only the actions of Red agents at state $y^1$ at $t=1$ have an impact on the game outcome. 
As a result, the above setup leads to a dominant optimal Red team strategy:
all Red agents at $y^2$ use action~$v^2$ at $t=1$ and try to transition to state $y^1$ for the sake of maximizing the team's presence at $y^1$, regardless of the Blue team's distribution.
Note that this is the \emph{dominant} optimal strategy against all Blue team strategies in both finite and infinite-population scenarios. 
Specifically, 
\begin{equation}
    \psi^{N_2 *}_{j, 1}(v^2|y^2, \mu, \nu) = \psi^{\rho *}_1(v^2|y^2, \mu, \nu) = 1.0, ~~
    \forall \mu \in \P(\X), \nu \in \P(\Y), j \in [N_2].
\end{equation}

On the other hand, the Blue team should try to match the target distribution $[1/\sqrt{2}, 1-1/ \sqrt{2}]$ to minimize the portion of Red agents transitioning from $y^2$ to $y^1$ at $t=1$.
Since the Blue reachable set at $t=0$ covers the whole simplex, the target distribution can always be achieved at $t=1$ starting from any initial Blue team distribution in the \emph{infinite-population} case. 
One Blue coordination strategy that achieves the given target distribution is
\begin{equation}
\label{eqn:example2-coord-strate}
    \alpha_0 (\mu, \nu) = \left \{
    \begin{array}{cc}
        \pi_0^1, \qquad & \text{if } \mu_0(x^1) < \frac{1}{\sqrt{2}}, \\
        \pi_0^2, \qquad & \text{if } \mu_0(x^1) \geq \frac{1}{\sqrt{2}},
    \end{array}
    \right .
\end{equation}
where,
\begin{alignat*}{2}
    &\pi_0^1(u^1|x^1) =1,                               \qquad 
    && \pi_0^1(u^2|x^1) =0, \\
    &\pi_0^1(u^1|x^2) =\frac{1-1/\sqrt{2}}{\mu_0(x^2)}, \qquad 
    && \pi_0^1(u^2|x^2) =\frac{1/\sqrt{2}-\mu_0(x^1)}{\mu_0(x^2)},\\
    &\pi_0^2(u^1|x^1) =\frac{1/\sqrt{2}}{\mu_0(x^1)},   \qquad && \pi_0^2(u^2|x^1) =\frac{1-1/\sqrt{2}-\mu_0(x^2)}{\mu_0(x^1)}, \\
    &\pi_0^2(u^1|x^2) =1, \qquad && \pi_0^2(u^2|x^2) =0.
\end{alignat*}

The following Figure~\figref{fig:example-2-value} presents the coordinator game value, which is solved via discretization. 
In this example, the upper and lower game coincides, and thus the game value exists.
The black line in the middle subplot marks the game value with $\mu^\rho_1(x^1) = 1/\sqrt{2}$, when the Blue team perfectly achieves the target distribution. 
The values on the black line dominate and give the highest performance the Blue team can achieve given any $\nu_1^\rho(y^1)$, which aligns with our analysis which states that the Blue team should match the target distribution. 

\begin{figure}[b]
\vspace{-0.2in}
    \centering
    \includegraphics[width=0.85\linewidth]{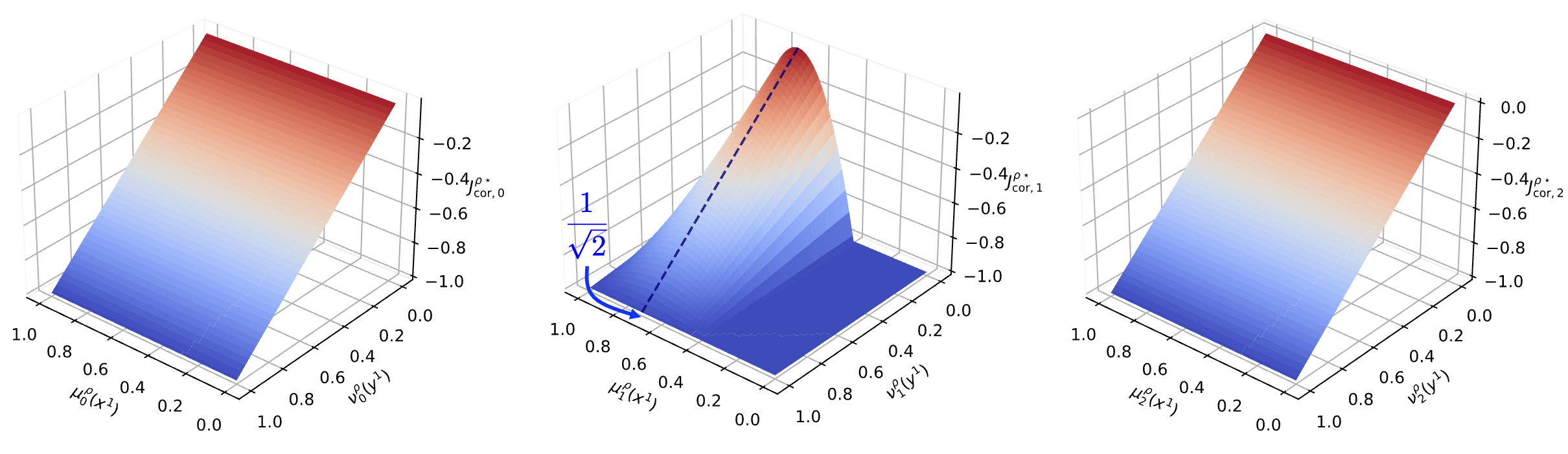}
    \vspace{-0.2in}
    \caption{The values of the coordinator game at time $t= 0,1$ and 2. 
    The blue line in the middle figure marks the value at $\mu^\rho_1(x^1) = 1/\sqrt{2}$, when the Blue team perfectly achieves the target distribution.
    }
    \label{fig:example-2-value}
\end{figure}

In the infinite-population coordinator game, the Blue team can always achieve the target distribution regardless of its initial distribution and thus completely block the migration of Red agents from $y^2$ to $y^1$. 
Consequently, only the Red agents that are initialized directly at $y^1$ can be counted toward the reward if the Blue team plays rationally, 
and thus the game value is simply given by $J^{\rho*}_0= - \nu^\rho_0(y^1)$, which matches the game value plotted in~\figref{fig:example-2-value}.

\paragraph{Finite-population case.}
The Red team's optimal strategy remains the same as the infinite-population case. 
However, the Blue team cannot achieve the \emph{irrational} target distribution with finite number of agents. 
While the Blue team can still match the target distribution \emph{in expectation} using a (stochastic) identical team strategy, the following analysis shows that a non-identical deterministic Blue team strategy achieves a better performance.

Consider a Blue team with three agents and all Blue agents are on node 1, i.e., $\mu^{3}_0 = [1,0]$.
The optimal Blue coordination strategy prescribes that all Blue agents pick $u^1$ (``stay") with probability $1/\sqrt{2}$ and $u^2$ (``move to $x^2$") with probability $(1\!-\!1/\sqrt{2})$ to reach the target distribution \emph{in expectation}.
Such action selection leads to the following four possible outcomes of the next Blue team \ED{} $\mu^3_1$:
$\mathbb{P}([1, 0]) = 0.354$,
$\mathbb{P}([2/3, 1/3]) = 0.439$,
$\mathbb{P}([1/3, 2/3]) = 0.182$,
and $\mathbb{P}([0, 1]) = 0.025$.
In expectation, these empirical distributions lead to a transition probability of 0.518 for a Red team agent moving from $y^2$ to $y^1$. 
Consequently, we have the worst-case performance of the optimal Blue coordinator strategy as 
$\min_{\psi^{N_2}} J^{3,\alpha^*,\psi^{N_2}} = -\nu_0(y^1) - 0.518\nu_0(y^2)$.

Next, consider the non-identical deterministic Blue team strategy, under which Blue agents 1 and 2 apply action $u^1$  and Blue agent 3 applies $u^2$.
This Blue team strategy deterministically leads to $\M_1^3=[2/3, 1/3]$ at $t=1$, and the resultant Red team transition probability from $y^2$ to $y^1$ is 0.016.
Clearly, the non-identical Blue team strategy significantly outperforms the identical mixed team strategy in this \emph{three-agent} case. 
Furthermore, this Blue team strategy is optimal over the entire non-identical Blue team strategy set, resulting in a finite-population optimal game value $J^{3*} = -\nu_0(y^1) - 0.016\nu_0(y^2)$. 

We repeat the above computation for multiple Blue team size $N_1$ and plot the suboptimality gap as the blue line in \figref{fig:deviation},
which verifies the $\mathcal{O}(1/\sqrt{\underline{N}})$ decrease rate predicted by Theorem~\ref{thm:performance-guarantees}.

Specifically, the optimal value for the finite-population game is computed based on the following:
\begin{itemize}
    \item At $t=0$, the Blue team uses a non-identical deterministic team strategy and deterministically achieves the optimal \ED{} $\mu^{N_1}_1$, which minimize the two-norm distance to the target distribution. 
    Such optimal Red \ED{} is constructed as $[\frac{n}{N_1}, \frac{N_1 - n}{N1}]$ where 
        $n = \argmin_{0 \leq n\leq N_1 } (\frac{n}{N_1} - \frac{1}{\sqrt{2}})^2$.
    \item 
    At $t=1$, all Red agents apply $v^2$, 
    and the distribution of the next Red \ED{}s $\nu^{N_2}_2$ can be computed based on the optimal $\mu^{N_1}_1$ achieved at $t=1$.
    \item 
    The expected game value is then computed based on the distribution of the Red \ED{} $\nu_2^{N_2}$.
\end{itemize}

\begin{figure}[b]
    \centering
    \includegraphics[width=0.55\linewidth]{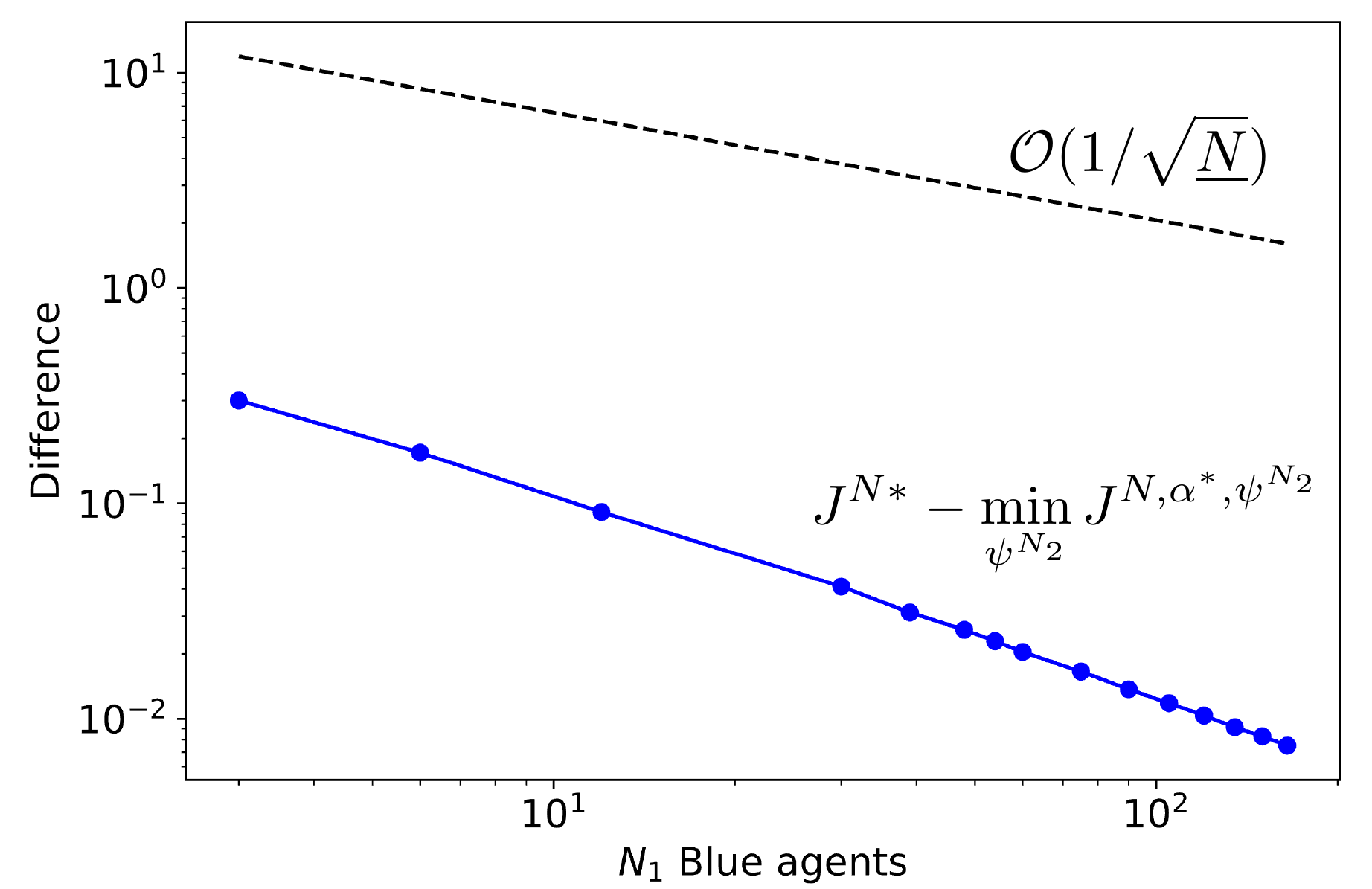}
    \caption{Difference between game values with initial distributions $\mu_0 = [1,0]$ and $\nu_0 = [0.6, 0.4]$.
    }
    \label{fig:deviation}
\end{figure}

\section{Conclusion}

In this work, we introduced a discrete zero-sum mean-field team game to model the behaviors of competing large-population teams. 
We developed a dynamic programming approach that approximately solves this large-population game at its infinite-population limit where only identical team strategies are considered. 
Our analysis demonstrated that the identical strategies constructed are $\epsilon$-optimal within the general class of non-identical team strategies when deployed in the original finite-population game.
The derived performance guarantees are verified through numerical examples.
Future work will investigate the LQG setup of this problem and explore machine-learning techniques to address more complex zero-sum mean-field team problems.
Additionally, we aim to generalize our results to the infinite-horizon discounted case and problems with heterogeneous sub-populations.

\section*{Acknowledgments:} 
The authors acknowledge fruitful discussions with Dr. Ali Pakniyat.

\clearpage
\begin{appendices}

\section{Special Cases and Examples}
\label{appdx:special-case}

\subsection{Pairwise State-Coupled Mean-Field Team Games}
In Section~\ref{sec:formulation}, we introduced the pairwise state-coupled dynamics and the averaged state-dependent reward as a special case for the proposed mean-field team game. 
Specifically, the pairwise state-coupled transition kernels are defined as follows
\begin{equation}
\tag{\ref{eqn:psc-MFTG-dynamics}}
\begin{aligned}
    f_t\big(x^{N_1}_{i,t+1}\big \vert x^{N_1}_{i,t}, u^{N_1}_{i,t}, \bfx^{N_1}_{-i,t}, \bfy^{N_2}_{t}) &= 
    \frac{ \sum_{k=1}^{N_1}f_{1,t}(x^{N_1}_{i,t+1}\big \vert x^{N_1}_{i,t}, u^{N_1}_{i,t}, x^{N_1}_{k,t} )}{N} +
    \frac{ \sum_{k=1}^{N_2}f_{2,t}(x^{N_1}_{i,t+1}\big \vert x^{N_1}_{i,t}, u^{N_1}_{i,t}, y^{N_2}_{k,t} )}{N}\\
    g_t\big(y^{N_2}_{j,t+1}\big \vert y^{N_2}_{j,t}, v^{N_2}_{j,t}, \bfx^{N_1}_{t}, \bfy^{N_2}_{-j, t}) &= 
    \frac{ \sum_{k=1}^{N_1}g_{1,t}(y^{N_2}_{j,t+1}\big \vert y^{N_2}_{j,t}, v^{N_2}_{j,t}, x^{N_1}_{k,t} )}{N} +
    \frac{ \sum_{k=1}^{N_2}g_{2,t}(y^{N_2}_{j,t+1}\big \vert y^{N_2}_{j,t}, v^{N_2}_{j,t}, y^{N_2}_{k,t} )}{N}.
\end{aligned}
\end{equation}

Through some algebraic manipulations, the Blue transition kernel can be re-written in the weakly-coupled form in~\eqref{eqn:dynamics-blue}:
\begin{align*}
&f_t \big(x^{N_1}_{i,t+1}\big \vert x^{N_1}_{i,t}, u^{N_1}_{i,t}, \bfx^{N_1}_{-i,t}, \bfy^{N_2}_{t}) \\
&= \frac{N_1}{N}\frac{ \sum_{k=1}^{N_1}f_{1,t}(x^{N_1}_{i,t+1}\big \vert x^{N_1}_{i,t}, u^{N_1}_{i,t}, x^{N_1}_{k,t} )}{N_1} +
    \frac{N_2}{N}\frac{ \sum_{k=1}^{N_2}f_{2,t}(x^{N_1}_{i,t+1}\big \vert x^{N_1}_{i,t}, u^{N_1}_{i,t}, y^{N_2}_{k,t} )}{N_2} \nonumber\\  
    &=  
    \rho\frac{ \sum_{k=1}^{N_1}\sum_{x\in\X} \indicator{x}{x^{N_1}_{k,t}}f_{1,t}(x^{N_1}_{i,t+1}\big \vert x^{N_1}_{i,t}, u^{N_1}_{i,t}, x^{N_1}_{k,t} )}{N_1} +
    (1-\rho)\frac{ \sum_{k=1}^{N_2}\sum_{y\in\Y} \indicator{y}{y^{N_2}_{k,t}}f_{2,t}(x^{N_1}_{i,t+1}\big \vert x^{N_1}_{i,t}, u^{N_1}_{i,t}, y^{N_2}_{k,t} )}{N_2} \nonumber \\  
    &= 
    \rho\frac{\sum_{x\in\X}  \sum_{k=1}^{N_1} \indicator{x}{x^{N_1}_{k,t}}f_{1,t}(x^{N_1}_{i,t+1}\big \vert x^{N_1}_{i,t}, u^{N_1}_{i,t}, x )}{N_1} +
    (1-\rho)\frac{ \sum_{y\in\Y} \sum_{k=1}^{N_2} \indicator{y}{y^{N_2}_{k,t}}f_{2,t}(x^{N_1}_{i,t+1}\big \vert x^{N_1}_{i,t}, u^{N_1}_{i,t}, y )}{N_2} \nonumber \\
    & =
    \rho\sum_{x\in \X}\mu^{N_1}_t(x) f_{1,t}(x^{N_1}_{i,t+1}\big \vert x^{N_1}_{i,t}, u^{N_1}_{i,t}, x ) +
    (1-\rho)\sum_{y\in \Y} \nu_t^{N_2}(y)f_{2,t}(x^{N_1}_{i,t+1}\big \vert x^{N_1}_{i,t}, u^{N_1}_{i,t}, y ) \nonumber\\
    &\triangleq f^\rho_t\big(x^{N_1}_{i,t+1}\big \vert x^{N_1}_{i,t}, u^{N_1}_{i,t},  \mu^{N_1}_t, \nu^{N_2}_t).   \nonumber
\end{align*}
The Red transition kernel can be similarly transformed into a weakly-coupled form.

The averaged state-dependent reward is defined as 
\begin{equation}
    \tag{\ref{eqn:team-rewards-eg}}
    \begin{aligned}
        r_t(\bfx_t^{N_1},\bfy_t^{N_2}) &= \frac{1}{N} \sum_{k=1}^{N_1}r_{1,t}(x^{N_1}_{k,t} ) - 
        \frac{1}{N} \sum_{k=1}^{N_2}r_{2,t}(y^{N_2}_{k,t} ).
    \end{aligned}
\end{equation}
One can also transform the above reward structure to a weakly-coupled form as in~\eqref{eqn:reward}
\begin{align*}
    r_t(\bfx_t^{N_1},\bfy_t^{N_2})
    &= \frac{N_1}{N} \frac{\sum_{k=1}^{N_1}r_{1,t}(x^{N_1}_{k,t}) }{N_1} + \frac{N_2}{N} \frac{\sum_{k=1}^{N_2}r_{2,t}(y^{N_2}_{k,t} )}{N_2}\\
    &= \frac{N_1}{N} \frac{\sum_{k=1}^{N_1} 
    \sum_{x\in\X} \indicator{x}{x^{N_1}_{k,t}}r_{1,t}(x^{N_1}_{k,t}) }{N_1} + \frac{N_2}{N} \frac{\sum_{k=1}^{N_2} \sum_{y \in \Y} \indicator{y}{y^{N_2}_{k,t}}r_{2,t}(y^{N_2}_{k,t} )}{N_2} \\
     &= \rho \sum_{x\in \X} \mu^{N_1}_t(x) r_{1,t}(x) - (1-\rho) \sum_{y\in \Y} \nu^{N_2}_t(y) r_{2,t}(y) \\
     &\triangleq r^\rho_t(\mu_t^{N_1}, \nu_t^{N_2}).
\end{align*}

\begin{proposition}
    \label{appdx-prop:psc-MFTG-dynamics-lip}
    The transition kernels in~\eqref{eqn:psc-MFTG-dynamics} satisfy
    \begin{alignat*}{2}
        &\sum_{x' \in \X} \abs{f^\rho_t(x'|x,u,\mu, \nu) - f^\rho_t(x'|x,u,\mu', \nu')} \leq 2 \Big(\dtv{\mu, \mu'} + \dtv{\nu, \nu'}\Big) ~~~&&\forall x \in \X, u \in \U, \\
        &\sum_{y' \in \Y} \abs{g^\rho_t(y'|y,v,\mu, \nu) - g^\rho_t (y'|y,v,\mu', \nu')} \leq 2 \Big(\dtv{\mu, \mu'} + \dtv{\nu,\nu'}\Big)~~~&& \forall y \in \Y, v \in \V.
    \end{alignat*}
\end{proposition}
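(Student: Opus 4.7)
The plan is a direct triangle-inequality argument using the weakly-coupled representation of $f^\rho_t$ derived in the preceding paragraphs. Recall we showed
\[
f^\rho_t(x'\vert x,u,\mu,\nu) = \rho \sum_{\tilde x \in \X}\mu(\tilde x)\,f_{1,t}(x'\vert x,u,\tilde x) + (1-\rho) \sum_{\tilde y \in \Y}\nu(\tilde y)\,f_{2,t}(x'\vert x,u,\tilde y).
\]
So $f^\rho_t(\,\cdot\,\vert x,u,\mu,\nu)$ is an affine function of the pair $(\mu,\nu)$ with coefficient matrices given by the values of the pairwise kernels $f_{1,t}$ and $f_{2,t}$. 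This linear structure is what will make the argument essentially mechanical.

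First, I would subtract the two instances and split via the triangle inequality into a $\mu$-difference term and a $\nu$-difference term:
\[
\sum_{x'} \big|f^\rho_t(x'\vert x,u,\mu,\nu) - f^\rho_t(x'\vert x,u,\mu',\nu')\big| \le \rho\, A + (1-\rho)\, B,
\]
where $A = \sum_{x'}\big|\sum_{\tilde x}(\mu(\tilde x)-\mu'(\tilde x))f_{1,t}(x'\vert x,u,\tilde x)\big|$ and $B$ is the analogous expression involving $\nu-\nu'$ and $f_{2,t}$. Next, I would push the absolute value inside, swap the order of summation, and exploit that $f_{1,t}(\cdot\vert x,u,\tilde x)$ and $f_{2,t}(\cdot\vert x,u,\tilde y)$ are probability distributions over $\X$, so $\sum_{x'} f_{1,t}(x'\vert x,u,\tilde x) = 1$ and likewise for $f_{2,t}$. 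This collapses $A$ to $\|\mu-\mu'\|_1 = 2\,\dtv{\mu,\mu'}$ and $B$ to $2\,\dtv{\nu,\nu'}$, yielding
\[
\sum_{x'}\big|f^\rho_t(x'\vert x,u,\mu,\nu) - f^\rho_t(x'\vert x,u,\mu',\nu')\big| \le 2\rho\, \dtv{\mu,\mu'} + 2(1-\rho)\,\dtv{\nu,\nu'}.
\]
Bounding $2\rho \le 2$ and $2(1-\rho) \le 2$ gives the stated constant $L_{f_t}=2$. The Red kernel $g^\rho_t$ is treated identically, just swapping the roles of $\X$/$\Y$ and of $f_{i,t}$/$g_{i,t}$.

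No step is really difficult: the entire argument hinges on (i) the affinity of $f^\rho_t$ in $(\mu,\nu)$ and (ii) the row-sum-one property of the pairwise kernels. The only mild care required is being deliberate about the swap of sum and absolute value and about the $\tfrac12$ factor in the total variation definition, so the final constant comes out cleanly as $2$ rather than $1$ or $4$. A brief remark that one could sharpen the constant to $\max\{2\rho, 2(1-\rho)\}$ if desired might be worth including, but is inessential for the claim stated.
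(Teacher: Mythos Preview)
Your proposal is correct and follows essentially the same route as the paper: both use the affine representation of $f^\rho_t$ in $(\mu,\nu)$, split by the triangle inequality into a $\mu$-term and a $\nu$-term, collapse each via the row-sum-one property of the pairwise kernels to obtain $2\rho\,\dtv{\mu,\mu'}+2(1-\rho)\,\dtv{\nu,\nu'}$, and then relax to the uniform constant~$2$. The paper likewise handles $g^\rho_t$ by symmetry.
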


\begin{proof}
    Due to symmetry, we only show the result for the Blue transition kernel. 
    From the definition, we have, for all $x \in \X$ and $u\in \U$, that 
    \begin{align*}
        \sum_{x' \in \X} \big \vert f^\rho_t(&x'|x,u,\mu, \nu) - f^\rho_t(x'|x,u,\mu', \nu')\big \vert  = \sum_{x' \in \X} \Big \vert \rho \sum_{z \in \X} \mu(z) f_{1,t}(x'|x,u,z) + (1-\rho) \sum_{y \in \Y} \nu(y) f_{2,t}(x'|x,u,y)\\
        & \qquad \qquad \qquad \qquad\qquad \quad\qquad \quad\qquad ~~~ 
        -  \rho \sum_{z \in \X} \mu(z) f_{1,t}(x'|x,u,z) - (1-\rho) \sum_{y \in \Y} \nu(y) f_{2,t}(x'|x,u,y) \Big \vert \\
        &\leq \rho  \sum_{x'\in \X} \sum_{z \in \X} f_{1,t}(x'|x,u,z) \abs{\mu(z) - \mu'(z)} + 
        (1-\rho) \sum_{x'\in \X}  \sum_{y \in \Y} f_{2,t}(x'|x,u,y) \abs{\nu(y) - \nu'(y)}\\
        &=  \rho \sum_{z \in \X} \abs{\mu(z) - \mu'(z)} +  (1-\rho) \sum_{y \in \Y} \abs{\nu(y) - \nu'(y)} = 2\rho \; \dtv{\mu, \mu'} + 2(1-\rho) \;\dtv{\nu, \nu'} \\
        &\leq 2 \Big(\dtv{\mu, \mu'} + \dtv{\nu, \nu'}\Big) .
    \end{align*}
\end{proof}

\begin{proposition}
    \label{appdx-prop:psc-MFTG-reward-lip}
    For all $\mu, \mu' \in \P(\X)$, $\nu, \nu' \in \P(\Y)$ and $t \in \{0, \ldots, T\}$, the reward structure in~\eqref{eqn:team-rewards-eg} satisfies 
    \begin{equation*}
        \abs{r^\rho_t(\mu, \nu) - r^\rho_t(\mu', \nu')} \leq (2 \max \{r_{1,\max}, r_{2,\max}\}) \Big(\dtv{\mu,\mu'} + \dtv{\nu,\nu'}\Big),
    \end{equation*}
    where $r_{1,\max} = \max_{x, t} \abs{r_{1,t}(x)}$ and $r_{2,\max} = \max_{y,t} \abs{r_{2,t}(y)}$.
\end{proposition}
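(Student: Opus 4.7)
The plan is to mirror the argument used in Proposition~\ref{appdx-prop:psc-MFTG-dynamics-lip}, exploiting the fact that the reward in~\eqref{eqn:team-rewards-eg} is an \emph{affine} (in fact linear) functional of $(\mu,\nu)$. Because the reward splits cleanly into a Blue part and a Red part, the Lipschitz bound will follow from a single application of the triangle inequality on each part, together with the elementary identity $\sum_{e\in\E}\abs{p(e)-q(e)} = 2\,\dtv{p,q}$.

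First I would write the difference as
\begin{equation*}
r^\rho_t(\mu,\nu) - r^\rho_t(\mu',\nu')
= \rho\sum_{x\in\X}\bigl(\mu(x)-\mu'(x)\bigr) r_{1,t}(x)
- (1-\rho)\sum_{y\in\Y}\bigl(\nu(y)-\nu'(y)\bigr) r_{2,t}(y),
\end{equation*}
which is immediate from the definition of $r^\rho_t$. Then the triangle inequality yields
\begin{equation*}
\abs{r^\rho_t(\mu,\nu) - r^\rho_t(\mu',\nu')}
\leq \rho\sum_{x\in\X}\abs{\mu(x)-\mu'(x)}\abs{r_{1,t}(x)}
+ (1-\rho)\sum_{y\in\Y}\abs{\nu(y)-\nu'(y)}\abs{r_{2,t}(y)}.
\end{equation*}

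Next I would bound $\abs{r_{1,t}(x)}$ by $r_{1,\max}$ (and likewise for $r_{2,t}$), pull these scalars out of each sum, and apply the identity $\sum_{x}\abs{\mu(x)-\mu'(x)} = 2\,\dtv{\mu,\mu'}$. This gives
\begin{equation*}
\abs{r^\rho_t(\mu,\nu) - r^\rho_t(\mu',\nu')}
\leq 2\rho\, r_{1,\max}\,\dtv{\mu,\mu'} + 2(1-\rho)\,r_{2,\max}\,\dtv{\nu,\nu'}.
\end{equation*}
Finally, using $\rho,(1-\rho)\in[0,1]$ and replacing both constants by $\max\{r_{1,\max},r_{2,\max}\}$ yields the claimed bound. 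No step here is delicate; the only ``choice'' is to cap $\rho$ and $1-\rho$ by $1$ rather than keep the tighter constants $2\rho\, r_{1,\max}$ and $2(1-\rho)\,r_{2,\max}$, which is done purely to state a single Lipschitz constant $L_r = 2\max\{r_{1,\max},r_{2,\max}\}$ compatible with Assumption~\ref{assmpt:lipschitiz-rewards}. I do not anticipate any obstacle: the entire argument is a direct linearity-plus-triangle-inequality computation parallel to the dynamics case.
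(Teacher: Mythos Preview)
Your proposal is correct and follows essentially the same approach as the paper: write the difference, apply the triangle inequality, bound $\abs{r_{i,t}}$ by $r_{i,\max}$, and convert the $\ell^1$ sums to total variation. The only cosmetic difference is that you keep the factors $\rho$ and $1-\rho$ explicit and bound them by $1$ at the end, whereas the paper's proof implicitly absorbs them immediately; your version is slightly tighter along the way but arrives at the same stated constant.
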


\begin{proof}
    Note that
    \begin{align*}
        \abs{r^\rho_t(\mu,\nu)-r^\rho_t(\mu',\nu')} &= \abs{\sum_{x} r_{1,t}(x) (\mu(x)- \mu'(x)) - \sum_{y} r_{2,t}(y) (\nu(y)- \nu'(y))}\\
        &\leq \abs{\sum_{x} r_{1,t}(x) (\mu(x)- \mu'(x))} + \abs{\sum_{y} r_{2,t}(y) (\nu(y)- \nu'(y))}\\
        &\leq \sum_{x} \abs{r_{1,t}(x)} \abs{\mu(x)- \mu'(x)} + \sum_{y} \abs{r_{2,t}(y)}\abs{\nu(y)- \nu'(y)}\\
        &\leq r_{1,\max} \sum_{x} \abs{\mu(x)- \mu'(x)} + r_{2,\max}\sum_{y} \abs{\nu(y)- \nu'(y)}\\
        &\leq \max\{r_{1,\max}, r_{2,\max}\} \big(2\dtv{\mu, \mu'} + 2\dtv{\nu, \nu'}\big).
    \end{align*}
\end{proof}

\subsection{Discontinuous Reward Example}    \label{appdx-sec:discontinuity}

Through this example, we demonstrate the necessity of Assumption~\ref{assmpt:lipschitiz-rewards}, 
i.e., the reward function needs to be continuous with respect to the mean-fields as in~\eqref{eqn:lipschitz-rewards}.
Consider a MFTG with terminal time $T=1$ over the following deterministic system. 

\begin{figure}[b]
    \centering
    \includegraphics[width=0.5\linewidth]{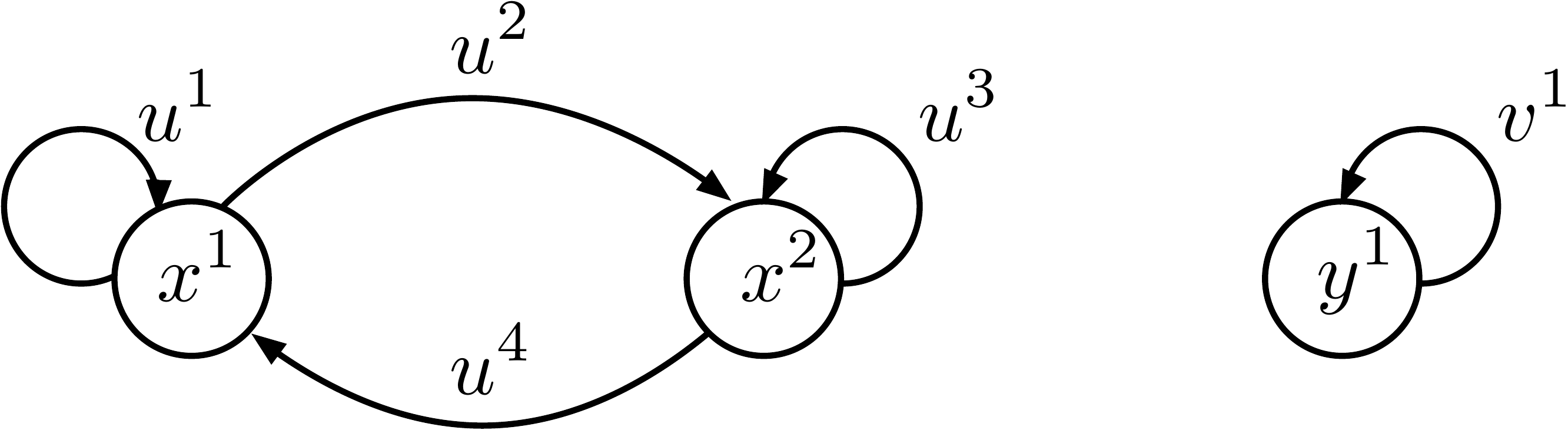}
    \caption{A deterministic system with two Blue states (left) and one Red state (right).}
    \label{fig:example-reward-cont}
\end{figure}

The Blue agent's transition kernel are given by 
\begin{alignat*}{2}
    f^\rho_0(x^1|x^1, u^1, \mu, \nu) = 1, \qquad f^\rho_0(x^2|x^1, u^1, \mu, \nu) = 0,\\
    f^\rho_0(x^1|x^1, u^2, \mu, \nu) = 0, \qquad f^\rho_0(x^2|x^1, u^1, \mu, \nu) = 1,\\
    f^\rho_0(x^1|x^2, u^3, \mu, \nu) = 0, \qquad f^\rho_0(x^2|x^2, u^3, \mu, \nu) = 1,\\
    f^\rho_0(x^1|x^2, u^4, \mu, \nu) = 1, \qquad f^\rho_0(x^2|x^2, u^4, \mu, \nu) = 0,\\
\end{alignat*}
which holds for all distributions $\mu \in \P(\X)$ and $\nu \in \P(\Y)$.

The Red agent's transition kernel is given by
\begin{alignat*}{2}
    g^\rho_0(y^1|y^1, v^1, \mu, \nu) = 1, \quad \forall \mu \in \P(\X), ~\nu \in \P(\Y).
\end{alignat*}

Given $\mu \in \P(\X)$ and $\nu \in \P(\Y)$, the \emph{discontinuous} reward is given by
\begin{align*}
    r_0^\rho(\mu, \nu) &= 0, \\
    r_1^\rho(\mu, \nu) &= \mathds{1}\Big(\mu = (1 / \sqrt{3}, 1-1 / \sqrt{3})\Big).
\end{align*}

As the dynamics are decoupled and the reward does not depend on the Red distribution, the game is essentially a single-team problem. We choose this degenerate example for its simplicity.
Since the reward at time $t=0$ is always zero, the objective of the Blue team is to achieve the desired distribution $(1 / \sqrt{3}, 1-1 / \sqrt{3})$ at time $t=1$ to receive one unit of reward.
Since $t=0$ is the only time step that the Blue agents select action, a Blue coordination strategy only consists of a Blue coordination policy at time $t=0$. 
In this case, an optimal Blue coordination strategy can be
\begin{equation*}
    \alpha^*_0 (\mu, \nu) = \left \{
    \begin{array}{cc}
        \pi_0^1, \qquad & \text{if } \mu_0(x^1) < \frac{1}{\sqrt{3}} \\
        \pi_0^2, \qquad & \text{if } \mu_0(x^1) \geq \frac{1}{\sqrt{3}}
    \end{array}
    \right .
    ,
\end{equation*}
where,
\begin{alignat*}{2}
    &\pi_0^1(u^1|x^1) =1,                               \qquad && \pi_0^1(u^2|x^1) =0, \\
    &\pi_0^1(u^3|x^2) =\frac{1-1/\sqrt{3}}{\mu_0(x^2)}, \qquad && \pi_0^1(u^4|x^2) =\frac{1/\sqrt{3}-\mu_0(x^1)}{\mu_0(x^2)},\\
    &\pi_0^2(u^1|x^1) =\frac{1/\sqrt{3}}{\mu_0(x^1)},   \qquad && \pi_0^2(u^2|x^1) =\frac{1-1/\sqrt{3}-\mu_0(x^2)}{\mu_0(x^1)}, \\
    &\pi_0^2(u^3|x^2) =1, \qquad && \pi_0^2(u^4|x^2) =0.
\end{alignat*}
The proposed local policy is well-defined, since for the case where $\mu_0(x^1) < 1/\sqrt{3}$, it is implied that $\mu_0(x^2) >0$, and similarly, for the case where $\mu_0(x^1) > 1/\sqrt{3}$, we automatically have $\mu_0(x^1) > 0$.

One can verify that the above coordination strategy will achieve the mean-field $\mu_1 = (\sqrt{3}, 1-1 / \sqrt{3})$ from all $\mu_0 \in \P(\X)$ under the mean-field dynamics~\eqref{eqn:mf-dynamics-local}.
Consequently, we have
\begin{equation*}
    J^{\rho*}_{\cor} (\mu_0, \nu_0) = 1, \quad \forall \mu_0\in \P(\X), \nu_0 \in \P(\Y).
\end{equation*}

However, we know that for all $N_1$, the \ED{} must be rational, and thus $\M^{N_1} \neq (\sqrt{3}, 1-1 / \sqrt{3})$ for all $N_1$ almost surely, which leads to 
\begin{equation*}
    J^{N*} (\bfx_0^{N_1}, \bfy^{N_2}_0) = 0, \quad \forall \bfx_0^{N_1} \in \X^{N_1}, \bfy^{N_2}_0 \in \Y^{N_2},
\end{equation*}
which implies that the performance of the coordination strategy achieved at the infinite-population limit fails to translate back to the finite-population game, and hence Theorem~\ref{thm:performance-guarantees} no longer holds.

One can construct similar examples to illustrate the necessity of the transition kernels being continuous with respect to the mean-fields.

\subsection{Counter-Example for Information State}
\label{appdx-sec:infomration-example}

For simplicity, we use a single-team example to illustrate why the \ED{} cannot be an information state when different agents are allowed to apply different strategies, even under the weakly-coupled dynamics~\eqref{eqn:dynamics-blue} and rewards~\eqref{eqn:reward}.
Consider the following system with state spaces $\X = \{x^1, x^2\}$ and $\Y = \varnothing$, and action spaces $\U = \{u^1, u^2\}$ and $\V = \varnothing$.

\begin{figure}
    \centering
    \includegraphics[width=0.3\linewidth]{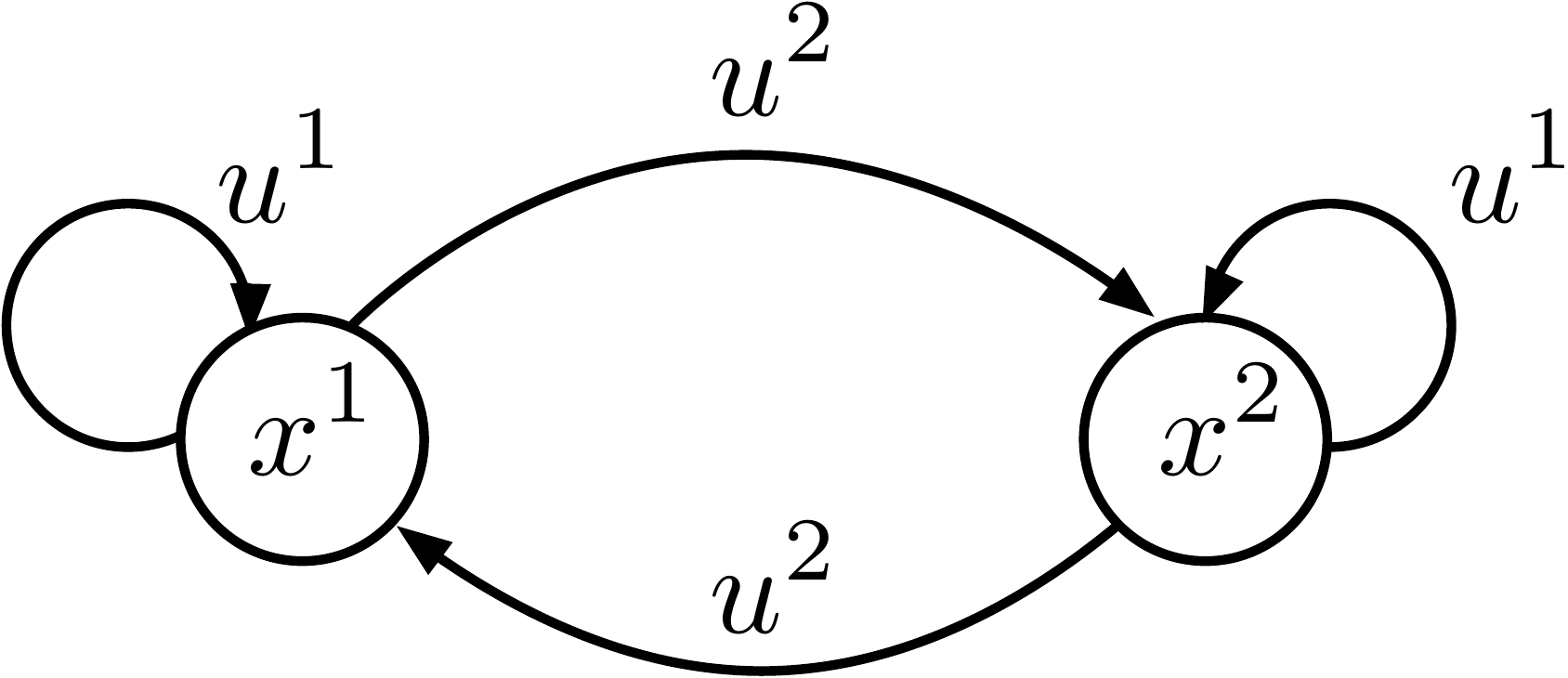}
    \caption{A counter-example showing \ED{} cannot serve as an information state.}
    \label{fig:enter-label}
\end{figure}

The transition kernel is time-invariant and is given by
\begin{alignat*}{2}
    & f_t(x^1|x^1, u^1, \mu) = 1.0, \qquad   && f_t(x^2|x^1, u^2, \mu) = 1.0, \qquad
    \raisebox{-.7\normalbaselineskip}[0pt][0pt]{$\forall \; \mu \in \P(\X), \; t \in \{0,1\}$.}
    \\
    & f_t(x^2|x^2, u^1, \mu) = 1.0, \qquad   && f_t(x^1|x^2, u^2, \mu) = 1.0, \\
\end{alignat*}

In words, each agent's transition is deterministic and decoupled from the other agents. 
At each state, the agent can select action $u^1$ to stay at its current state or use action $u^2$ to move to the other state.

We consider a one-stage scenario with the reward functions
\begin{align*}
    r_0(\mu) &= 0, \qquad \forall \mu \in \P(\X), \\
    r_1(\mu) &= \mu(x^1).
\end{align*}

To illustrate that the \ED{} cannot serve as an information state, consider a two-agent team and the following non-identical team strategy:
\begin{alignat*}{2}
    & \phi^2_{1,0}(u^2|x^1, \mu) = 1, \qquad &&\phi^2_{1,0}(u^1|x^2, \mu) = 1, ~~ \forall \mu \in \P(\X),\\
    & \phi^2_{2,0}(u^1|x^1, \mu) = 1, \qquad &&\phi^2_{2,0}(u^1|x^2, \mu) = 1, ~~ \forall \mu \in \P(\X).
\end{alignat*}
In words, agent 1 selects $u^2$ on state $x^1$ and selects $u^1$ on state $x^2$, while agent 2 selects $u^1$ on both states $x^1$ and $x^2$.

We now consider two initial configurations with the same \ED{}.
\begin{itemize}
    \item 
    $\X^2_{1,0} = x^1$ and $\X^2_{2,0} = x^2$, i.e., agent 1 is initially at~$x^1$ and agent~2 at~$x^2$.
    The \ED{} is thus $\M^2_{0}=[0.5, 0.5]$.
    If the two agents follow the above non-identical team strategy, then agent 1 uses action $u^2$ and transits to $x^2$, while agent 2 chooses action $u^1$ and stays on $x^2$, leading to $\M^2_1=[0, 1]$.
    The resultant reward value is then 0. 
    \item 
    $\X^2_{1,0} = x^2$ and $\X^2_{2,0} = x^1$, i.e., agent 1 is initially at~$x^2$ and agent~2 at~$x^1$.
    The \ED{} is again $\M^2_{0}=[0.5, 0.5]$.
    If the two agents follow the above non-identical team strategy, then both agents select action $u^1$ and stay at their current states, leading to $\M^2_1 = [0.5, 0.5]$ and a reward value of 0.5. 
\end{itemize}

From this example, we have shown that the values can be different under the same team strategy given different initial conditions that correspond to the same \ED{} since the \ED{} does not differentiate the ordering of the agents.
Clearly, the \ED{} alone is not enough to characterize the future evolution of the game, nor the value function. 
Consequently, the \ED{} is not an information state and we need the joint state information in the finite-population game to properly construct the value functions when different agents apply different strategies.

\section{Proof of the Mean-Field Approximation Results}

\subsection{Modified \texorpdfstring{$\ell^2$}{l2} Weak Law of Large Numbers}
The following lemma is a modified version of the $\ell^2$ weak law of large numbers~\citep{chung2001course} adapted to accommodate the notion of \ED{}s used in this work.
This lemma will be used extensively in the later proofs. 

\begin{lemma}
    \label{lmm:l2-apprx}
    Let $X_1, \ldots, X_N$ be $N$ independent random variables (vectors) taking values in a finite set $\X$.
    Suppose $X_i$ is distributed according to $p_i \in \mathcal{P}(\X)$, such that $\mathbb{P}(X_i = x) = p_i(x)$.
    Define the \ED{} as
    \begin{equation*}
        \M^N(x) = \frac{1}{N}\sum_{i=1}^N \indicator{x}{X_i}.
    \end{equation*}
    and let $\mu(x) = \frac{1}{N }\sum_{i=1}^N p_i(x)$.
    Then, the following inequality holds
        \begin{equation}
        \mathbb{E}\left[\dtv{\M^N, \mu}\right]  \leq \frac{1}{2} \sqrt{\frac{|\X|}{N}}.    
    \end{equation}
\end{lemma}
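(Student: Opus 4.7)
The plan is to bound the total variation distance by going through each coordinate of the probability vectors and applying a variance bound, then aggregating the coordinate bounds via the Cauchy–Schwarz inequality.

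First I would expand the total variation using its definition: $\dtv{\M^N, \mu} = \tfrac{1}{2}\sum_{x \in \X} |\M^N(x) - \mu(x)|$. Taking expectations and exchanging sum and expectation (legal since $\X$ is finite), the task reduces to bounding $\mathbb{E}[|\M^N(x) - \mu(x)|]$ for each fixed $x$. By construction $\M^N(x)$ is an average of $N$ independent Bernoulli random variables $\indicator{x}{X_i}$ with means $p_i(x)$, so $\mathbb{E}[\M^N(x)] = \mu(x)$ and
\[
\mathrm{Var}\bigl(\M^N(x)\bigr) \;=\; \frac{1}{N^2}\sum_{i=1}^N p_i(x)\bigl(1-p_i(x)\bigr) \;\leq\; \frac{1}{N^2}\sum_{i=1}^N p_i(x) \;=\; \frac{\mu(x)}{N}.
\]

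Next I would apply Jensen's inequality to convert the variance bound into an $L^1$ bound: $\mathbb{E}[|\M^N(x) - \mu(x)|] \leq \sqrt{\mathrm{Var}(\M^N(x))} \leq \sqrt{\mu(x)/N}$. Summing over $x$ then gives $\mathbb{E}[\dtv{\M^N,\mu}] \leq \frac{1}{2\sqrt{N}} \sum_{x \in \X} \sqrt{\mu(x)}$. The final step is Cauchy–Schwarz applied to the vectors $(\sqrt{\mu(x)})_{x\in\X}$ and $(1)_{x\in\X}$, which yields $\sum_{x\in\X} \sqrt{\mu(x)} \leq \sqrt{|\X|\sum_{x\in\X}\mu(x)} = \sqrt{|\X|}$, since $\mu$ is a probability measure. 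Combining gives the claimed bound $\tfrac{1}{2}\sqrt{|\X|/N}$.

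There is no real obstacle here; the only subtlety is making sure the random variables $\indicator{x}{X_i}$ for different $i$ are genuinely independent (which follows directly from the independence of the $X_i$) so that variances add, and noting that the bound holds without any assumption that the $p_i$ are identical (i.e., the variables need not be i.i.d.), which is what makes the lemma suitable for the subsequent coupling arguments in the paper.
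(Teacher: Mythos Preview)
Your proof is correct and uses the same ingredients as the paper (variance bound for the indicator variables, Jensen's inequality, and Cauchy--Schwarz), just applied in a slightly different order: the paper first bounds $\mathbb{E}\big[\|\M^N-\mu\|_2^2\big]\leq 1/N$, applies Jensen once to the $\ell^2$ norm, and then converts to $\ell^1$ via the norm inequality $\|v\|_1\leq\sqrt{|\X|}\,\|v\|_2$, whereas you apply Jensen coordinate-wise and then Cauchy--Schwarz to $\sum_x\sqrt{\mu(x)}$. The two routes are equivalent and yield the identical constant.
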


\begin{proof}
    Consider the random variables $X_{x, i} = \indicator{x}{X_i}$, which are independent across $i$ with mean $p_i(x)$.
    Since $X_{x,i}$ only takes a value of 0 and 1, we have $\mathbb{E}[X_{x,i}^2] = \mathbb{E}[X_{x,i}] = p_i(x)$, and $\mathrm{Var}(X_{x, i}) = p_i(x) - p_i(x)^2$.
    It follows that
    \begin{align*}
        \expectation{\lVert \M^N - \mu \rVert_2^2 } &= \expectation{\sum_{x\in \X} \abs{\M^N(x) - \mu(x)}^2}
        = \mathbb{E}\Big[\sum_{x \in \X}\Big\vert\frac{1}{N}\sum_{i=1}^N (X_{x,i}-p_i(x))\Big \vert ^2\Big]\\
        &= \frac{1}{N^2} \sum_{x\in \X} \mathrm{Var}\Big(\sum_{i=1}^N X_{x,i}\Big)
        \stackrel{(i)}{=} \frac{1}{N^2} \sum_{x \in \X}  \sum_{i=1}^N\mathrm{Var}(X_{x,i})\\
        &= \frac{1}{N^2} \sum_{x \in \X} \sum_{i=1}^N  (p_i(x) - p_i(x)^2)
        \leq \frac{1}{N^2} \sum_{i=1}^N \Big(\sum_{x \in \X} p_i(x)\Big) \\
        &\leq \frac{1}{N},
    \end{align*}
    where $(i)$ leverages the fact that $X_{x,i}$ are independent across $i$. 
    By Jensen's inequality, we have $\expectation{\lVert \M^N - \mu \rVert_2} \leq \frac{1}{\sqrt{N}}$.
    Furthermore, due to equivalence of norms in the finite dimensional space $\X$, we have $\norm{\M^N - \mu}_1 \leq \sqrt{|\X|} \norm{{\M^N - \mu}}_2$ almost surely, where $|\X|$ is the cardinality of the set $\X$. 
    It then follows that 
    \begin{align*}
         \mathbb{E}[\dtv{\M^N, \mu}] = \frac{1}{2} \expectation{\norm{\M^N - \mu}_1} \leq \frac{1}{2} \sqrt{\frac{|\X|}{N}}.
    \end{align*}

\end{proof}

\begin{corollary}
    \label{cor:l2-weak-law-iid}
    Let $X_1, \ldots, X_N$ be $N$ independent and identically distributed random variables (vectors) taking values in a finite set $\X$.
    Suppose $X_i$ is distributed according to $p \in \mathcal{P}(\X)$, such that $\mathbb{P}(\x_i = x) = p(x)$.
    Define the \ED{} as
    \begin{equation*}
        \M^N(x) = \frac{1}{N}\sum_{i=1}^N \indicator{x}{X_i}.
    \end{equation*}
    Then,
    \begin{equation}
        \mathbb{E}\left[\dtv{\M^N, p}\right] =\frac{1}{2} \sqrt{\frac{|\X|}{N}}.    
    \end{equation}
\end{corollary}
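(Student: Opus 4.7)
The plan is to obtain this result as an immediate specialization of the preceding Lemma~\ref{lmm:l2-apprx}. Since the $X_i$ are i.i.d. with common law $p$, I would simply set $p_i = p$ for every $i \in \{1, \ldots, N\}$ in that lemma. Then the averaged marginal satisfies
\[
\mu(x) \;=\; \frac{1}{N}\sum_{i=1}^{N} p_i(x) \;=\; \frac{1}{N}\cdot N\,p(x) \;=\; p(x),
\]
so the conclusion $\mathbb{E}[\dtv{\M^N,\mu}] \le \tfrac{1}{2}\sqrt{|\X|/N}$ of Lemma~\ref{lmm:l2-apprx} reduces directly to $\mathbb{E}[\dtv{\M^N,p}] \le \tfrac{1}{2}\sqrt{|\X|/N}$, which is the desired bound.

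There is no real obstacle here; the corollary is a one-line consequence of the lemma. The only subtlety worth flagging is that the statement as written uses an equality sign, whereas the derivation (and the statement of Lemma~\ref{lmm:l2-apprx}) yields an inequality, and the bound is in general not tight: for instance, when $p$ is a Dirac mass at some $x_0 \in \X$, every $X_i$ equals $x_0$ almost surely, so $\M^N = p$ and $\mathbb{E}[\dtv{\M^N,p}] = 0 < \tfrac{1}{2}\sqrt{|\X|/N}$. I would therefore interpret the ``$=$'' in the corollary's statement as ``$\le$'' and leave a brief remark to that effect; no separate argument is required beyond invoking Lemma~\ref{lmm:l2-apprx} with identical marginals.
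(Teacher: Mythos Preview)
Your proposal is correct and matches the paper's treatment: the corollary is stated immediately after Lemma~\ref{lmm:l2-apprx} with no separate proof, as it is precisely the specialization $p_i = p$ that you describe. Your observation that the ``$=$'' should be ``$\leq$'' is also apt---this is evidently a typographical slip in the paper, and your Dirac-mass counterexample confirms the bound is not an equality in general.
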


\subsection{Proof of Lemma~\ref{lmm:mf-aprx}}
\label{appdx:prop-mf-apprx}

The following lemma is used to support the proof of Lemma~\ref{lmm:mf-aprx}.
\begin{lemma}
    \label{appdx-lmm:sum}
    For all $a_1, \ldots, a_n \in \mathbb{R}_{\geq 0}$, the following inequality holds:
    \begin{equation*}
        \sum_{i=1}^n a_i \leq \sqrt{n} \left( \sqrt{\sum_{i=1}^n a_i^2} \right) .
    \end{equation*}
\end{lemma}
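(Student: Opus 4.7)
The plan is to recognize this as a direct consequence of the Cauchy-Schwarz inequality applied to the vectors $(a_1,\ldots,a_n)$ and $(1,1,\ldots,1)$ in $\mathbb{R}^n$. Explicitly, I would write
\[
\sum_{i=1}^n a_i \;=\; \sum_{i=1}^n a_i \cdot 1 \;\leq\; \sqrt{\sum_{i=1}^n a_i^2}\,\sqrt{\sum_{i=1}^n 1^2} \;=\; \sqrt{n}\,\sqrt{\sum_{i=1}^n a_i^2},
\]
which is exactly the claimed bound. The nonnegativity hypothesis is not even needed for this direction, but it is consistent with the application in Lemma~\ref{lmm:mf-aprx}.

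Alternatively, one could derive the inequality from the power mean / QM--AM inequality, which states that the arithmetic mean is bounded above by the quadratic mean: $\frac{1}{n}\sum_i a_i \leq \sqrt{\frac{1}{n}\sum_i a_i^2}$; rearranging yields the claim. Either route is a one-line proof, so there is no real obstacle here — the lemma is purely an auxiliary algebraic fact used later to convert an $\ell^1$ sum of deviations into an $\ell^2$ bound (ultimately giving the $\mathcal{O}(1/\sqrt{N})$ rate in the mean-field approximation). I would therefore keep the proof to a single display invoking Cauchy--Schwarz, with a brief remark that the $1$-vector is the natural choice because we wish to bound a plain sum by its Euclidean norm scaled by $\sqrt{n}$.
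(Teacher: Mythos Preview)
Your proof is correct and essentially identical to the paper's: both apply Cauchy--Schwarz to the vectors $(a_1,\ldots,a_n)$ and $(1,\ldots,1)$, yielding the same one-line display.
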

\begin{proof}
    Cauchy-Schwarz inequality directly implies
    \begin{equation*}
        \sum_{i=1}^n a_i = \sum_{i=1}^n |a_i| \cdot 1 \leq \sqrt{\sum_{i=1}^n a_i^2} \sqrt{\sum_{i=1}^n 1^2} = \sqrt{n} \sqrt{\sum_{i=1}^n a_i^2}.
    \end{equation*}
\end{proof}

Now, we restate Lemma~\ref{lmm:mf-aprx} and present its proof.
\mfp*

\begin{proof}
    Due to symmetry, we only prove the result for the Blue team.
    Let  $N_{1,t}^x = N_1 \M^{N_1}_t(x)$ 
    be the number of Blue agents that are at state $x$ at time $t$ in the finite-population system. 
    With a slight abuse of notation, we use $[N_{1, t}^x]$ to denote the set of the Blue agents currently at state $x$.
    We first focus on the sub-population of Blue agents that are at state $x$.
    Since all Blue agents in $[N_{1, t}^x]$ apply $\phi_t$ and randomize independently, 
    the next states for these $N_{1,t}^x$ agents are independent and identically distributed conditioned on the joint states $\bfX^{N_1}_t$ and $\bfY^{N_2}_t$.
    For an agent~$i$ in this sub-population $[N_{1,t}^x]$ the conditional distribution of its next state $X^{N_1}_{i,t+1}$ is given by
    \begin{equation}
        \label{appdx-eqn-001}
        \mathbb{P}\left[X_{i,t+1}^{N_1} = x' \big\vert X_{i,t}^{N_1}, \bfX^{N_1}_t, \bfY^{N_2}_t\right] = \sum_{u} f_t^\rho(x'|x, u, \M^{N_1}_t, \N^{N_1}_t) \phi_t(u|x, \M^{N_1}_t, \N^{N_1}_t) \triangleq \M_{t+1}^{x}(x'),
    \end{equation}
    where $\M^{N_1}_t = \empMu{\bfX^{N_1}_t}$ and $\N^{N_2}_t = \empNu{\bfY^{N_2}_t}$.
    Define the \ED{} of these $N_{1,t}^x$ Blue agents at time $t+1$ as
    \begin{equation*}
        \M_{t+1}^{N_{1,t}^x}(x') = \frac{1}{N_{1,t}^x} \sum_{i \in [N_{1, t}^x]} \indicator{x'}{X_{i,t+1}^{N_1}}.
    \end{equation*}
    Then, we have the following inequality due to Corollary~\ref{cor:l2-weak-law-iid}:
        \begin{equation}
            \label{appdx-eqn:apprx-err-001}
            \mathbb{E}_{\phi_t} \left[\dtv{\M^{N_{1,t}^{x}}_{t+1}, \M^{x}_{t+1}}\Big \vert \bfX^{N_1}_t, \bfY^{N_2}_t\right] \leq \frac{1}{2} \sqrt{\frac{|\X|}{N_{1,t}^x}}.
        \end{equation}
    
    Note that we can compute the \ED{} of the whole Blue team  based on the \ED{} of sub-populations of Blue agents on different states via
    \begin{equation}
        \label{appdx-eqn:001}
        \begin{aligned}
        \M_{t+1}^{N_1}(x') &= \frac{1}{N_1}\sum_{i =0}^{N_1} \indicator{x'}{X_{i,t+1}^{N_1}} = \frac{1}{N_1} \sum_{x\in \X} \sum_{i\in [N_{1, t}^x]}\indicator{x'}{X_{i,t+1}^{N_1}} = \sum_{x\in\X} \frac{N_{1,t}^x}{N_1} \M_{t+1}^{N_{1,t}^x}(x').
        \end{aligned}
    \end{equation}

    Similarly, we can relate the propagated mean-field of the whole team with the ones of each sub-population via
    \begin{equation}
    \label{appdx-eqn:002}
        \begin{aligned}
            \M^\rho_{t+1} (x') &= \sum_{x \in \X} \left[\sum_{u \in \U} f^\rho_t(x'|x, u, \M^{N_1}_t, \N^{N_2}_t )\; \phi_t(u|x,\M^{N_1}_t, \N^{N_2}_t) \right] \M^{N_1}_t(x) \\
            &=\sum_{x \in \X} \M^x_{t+1}(x') \M_t^{N_1}(x) 
            = \sum_{x \in \X} \frac{N_{1,t}^{x}}{N_1} \M^x_{t+1}(x').
        \end{aligned}
    \end{equation}

    Combining~\eqref{appdx-eqn:apprx-err-001}, \eqref{appdx-eqn:001}, and~\eqref{appdx-eqn:002}, we have
    \begin{align*}
        \expct{\dtv{\M^{N_1}_{t+1}, \M^{\rho}_{t+1}}\big\vert \bfX^{N_1}_t, \bfY^{N_2}_t}{\phi_t}
        &=\expct{\frac{1}{2}\sum_{x'}\abs{\M_{t+1}^{N_1}(x')- \M^\rho_{t+1}(x')} \vert\; \bfX^{N_1}_t, \bfY^{N_2}_t}{\phi_t} \\
        &= \expct{\frac{1}{2}\sum_{x'}\abs{\sum_{x} \frac{N_{1,t}^{x}}{N_1} \M_{t+1}^{N^x_{1,t}}(x')- \sum_{x} \frac{N_{1,t}^{x}}{N_1} \M^x_{t+1}(x')}\Big\vert\; \bfX^{N_1}_t, \bfY^{N_2}_t}{\phi_t}\\
        &\leq \sum_{x} \frac{N_{1,t}^{x}}{N_1} \expct{\frac{1}{2}\sum_{x'}\abs{\M_{t+1}^{N^x_{1,t}}(x')- \M^x_{t+1}(x')} \big\vert\; \bfX^{N_1}_t, \bfY^{N_2}_t}{\phi_t}
        \\
        &= \sum_{x} \frac{N_{1,t}^{x}}{N_1}\expct{\dtv{\M_{t+1}^{N^x_{1,t}}, \M^x_{t+1}}\big\vert \bfX^{N_1}_t, \bfY^{N_2}_t}{\phi_t} \\
        &\leq \sum_{x} \frac{N_{1,t}^{x}}{2N_1}\sqrt{\frac{|\X|}{N_{1,t}^x}}
        \leq \frac{|\X|}{2} \sqrt{\frac{1}{N_1}},
    \end{align*}
    where the last inequality is a result of Lemma~\ref{appdx-lmm:sum}.%
    \footnote{Let $a_x = \sqrt{N_{1,t}^{x}}$ for each $x \in \X$, it follows that $a_x \geq 0$ and $\sum_{x} (a_x)^2 = N_1$ almost surely, which allows the application of Lemma~\ref{appdx-lmm:sum}.}
\end{proof}

\subsection{Proof of Lemma~\ref{lmm:mf-apprx-team-policy}}
\label{appdx:mf-reachability}
\mfa*

\begin{proof}
Note that if $\M^{N_1}_t(x) = 0$, then the state $x$ has no contribution to the next mean-field propagated via~\eqref{eqn:blue-apprx-mf}. 
The second case in~\eqref{eqn:apprx-local-policy} ensures that the constructed policy is well-defined.
Without loss of generality, we assume that $\M^{N_1}_t(x) > 0$  for all $x \in \X$.

Let $N_{1, t}^x = N_1 \M^{N_1}_t(x)$ be the number of Blue agents that are on state $x$ at time $t$ in the finite-population system. 
With a slight abuse of notation, we use $[N_{1, t}^x]$ to denote the set of the Blue agents on state $x$.
Then, for agent $i \in [N_{1, t}^x]$ applying policy $\phi_{i,t}$, its conditional distribution of its next state $X^{N_1}_{i,t+1}$ is given by
\begin{equation*}
    p_{i}^x (x') = \mathbb{P}(X^{N_1}_{i, t+1} = x' | \bfX^{N_1}_t, \bfY^{N_2}_t) = \sum_{u \in \U} f^\rho_t(x'|u, \M^{N_1}_t, \N^{N_2}_t) \phi_{i, t}(u|x, \M^{N_1}_t, \N^{N_2}_t)
\end{equation*}
Define the \ED{} of these $N^x_{1,t}$ Blue agents at time $t+1$ as 
\begin{equation}
    \M^{N_{1,t}^x}_{t+1}(x') = \frac{1}{N^{x}_{1,t}} \sum_{i \in [N^{x}_{1,t}]}\indicator{x'}{X_{i,t+1}^{N_1}}.
\end{equation}

On the other hand, the mean-field propagated from state $x$ under local policy $\pi_{\apprx,t}$ can be expressed as 
\begin{equation}
    \begin{aligned}
        \M^{x}_{\apprx, t+1} (x') &= \sum_{u} f_t^\rho(x'|x,u, \M^{N_1}_t, \N^{N_2}_t) \pi_{\apprx, t}(u|x) \\
        &=\sum_{u} f_t^\rho(x'|x,u, \M^{N_1}_t, \N^{N_2}_t) \frac{\sum_{i \in [N^x_{1,t}]}\phi_{i,t}(u|x, \M^{N_1}_t, \N^{N_2}_t)}{N^x_{1,t}}\\
        &=\frac{\sum_{i \in [N^x_{1,t}]}p^x_i(x')}{N^x_{1,t}}.
    \end{aligned}  
\end{equation}

From Lemma~\ref{lmm:l2-apprx}, we have that 
\begin{equation}
    \expct{\dtv{\M^{N_{1,t}^x}_{t+1}, \M_{\apprx, t+1}^x}\big\vert \bfX^{N_1}_t, \bfY^{N_2}_t}{} \leq \frac{1}{2} \sqrt{\frac{|\X|}{N_{1,t}^x}}.
\end{equation}

The rest of the analysis is similar to the proof of Lemma~\ref{lmm:mf-aprx}.
Notice that 
\begin{equation*}
    \begin{aligned}
        \M_{t+1}^{N_1}(x') &=  \sum_{x\in\X} \frac{N_{1,t}^x}{N_1} \M_{t+1}^{N_{1,t}^x}(x'), \\
        \M_{\apprx, t+1}(x') &=  \sum_{x\in\X} \frac{N_{1,t}^x}{N_1} \M_{\apprx, t+1}^{x}(x').
    \end{aligned}
\end{equation*}
It then follows
\begin{align*}
        \expct{\dtv{\M_{t+1}^{N_1}, \M_{\apprx, t+1}}\big\vert \bfX^{N_1}_t, \bfY^{N_2}_t}{} 
        &\leq \sum_{x} \frac{N_{1,t}^x}{N_1}\expectation{\dtv{\M_{t+1}^{N_{1,t}^x}, \M_{\apprx, t+1}^{x}}\big\vert \bfX^{N_1}_t, \bfY^{N_2}_t} \\
        & \leq \frac{|\X|}{2} \sqrt{\frac{1}{N_1}},
    \end{align*}
    where the last inequality results from Lemma~\ref{appdx-lmm:sum}.

\end{proof}

\section{Proof of the Continuity Results}
\label{appdx-sec:Continuity}

In this section, we present the proofs for Lemma~\ref{lmm:R-L-cont} and Theorem~\ref{thm:value-L-cont}. We start with several preliminary results.

\subsection{Preliminary results} 
\label{appdx-sec:Continuity-Prelim}

\begin{lemma}
    \label{appdx-lmm:max-diff-set}
    Let $\xi: \A \times \B \to \mathbb{R}$ be a Lipschitz continuous function in its second argument, such that $\abs{\xi(a, b) - \xi(a,b')} \leq L_\xi \norm{b-b'}$ for all $a \in \A$ and $b,b'\in \B$.
   Let compact sets $Q, Q' \subseteq \B$ such that $\distH{Q, Q'} \leq \epsilon$.
   Then, for all $a \in \A$,
    \begin{subequations}
    \begin{align}
        \abs{\min_{b\in Q} \xi(a,b) - \min_{b' \in Q'} \xi(a,b')} &\leq L_\xi \epsilon. \label{eqn:lemma10-1}
        \\
        \abs{\max_{b\in Q} \xi(a,b) - \max_{b' \in Q'} \xi(a,b')} &\leq L_\xi \epsilon. \label{eqn:lemma10-2}
    \end{align}
    \end{subequations}
\end{lemma}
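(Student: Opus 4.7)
The plan is to prove both inequalities via the standard one-sided Hausdorff bound argument, treating the two cases (min and max) symmetrically since $\xi(a,\cdot)$ is Lipschitz.

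First I would fix an arbitrary $a \in \A$ and focus on~\eqref{eqn:lemma10-1}. By compactness of $Q$ and continuity of $\xi(a,\cdot)$ (which follows from its Lipschitz property), the minimum $\min_{b \in Q} \xi(a,b)$ is attained at some $b^\star \in Q$. The definition of Hausdorff distance gives $\sup_{b \in Q} \inf_{b' \in Q'} \norm{b-b'} \leq \distH{Q,Q'} \leq \epsilon$, and since $Q'$ is compact the infimum is attained, so there exists $b' \in Q'$ with $\norm{b^\star - b'} \leq \epsilon$. Then
\begin{equation*}
\min_{b' \in Q'} \xi(a,b') \leq \xi(a,b') \leq \xi(a,b^\star) + L_\xi \norm{b^\star - b'} \leq \min_{b \in Q} \xi(a,b) + L_\xi \epsilon.
\end{equation*}
Swapping the roles of $Q$ and $Q'$ (using the other half of the Hausdorff distance definition) yields the reverse inequality, establishing~\eqref{eqn:lemma10-1}.

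For~\eqref{eqn:lemma10-2}, the argument is essentially identical with inequalities flipped: pick a maximizer $b^\star \in \arg\max_{b\in Q}\xi(a,b)$, find a nearby $b' \in Q'$ with $\norm{b^\star-b'}\leq\epsilon$, and bound
\begin{equation*}
\max_{b' \in Q'}\xi(a,b') \geq \xi(a,b') \geq \xi(a,b^\star) - L_\xi \epsilon = \max_{b\in Q}\xi(a,b) - L_\xi\epsilon,
\end{equation*}
then swap the sets for the other direction. There is no real obstacle here; the only subtlety is ensuring that the infima defining the Hausdorff distance are attained, which is guaranteed by compactness of $Q$ and $Q'$ together with continuity of the norm. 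The whole argument is short and should take only a few lines in the final write-up.
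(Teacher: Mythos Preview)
Your proposal is correct and follows essentially the same argument as the paper: fix $a$, pick the minimizer in one set, use the Hausdorff bound to find a nearby point in the other set, apply the Lipschitz estimate, then swap roles. The only cosmetic difference is that the paper derives~\eqref{eqn:lemma10-2} from~\eqref{eqn:lemma10-1} via $\max_{b\in Q}\xi(a,b)=-\min_{b\in Q}(-\xi(a,b))$, whereas you repeat the argument directly for the maximizer; both are equally valid.
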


\begin{proof}
    We start with the proof for~\eqref{eqn:lemma10-1}. 
    Fix $a\in \A$ and consider the minimizer $b^* \in \argmin_{b \in Q} \xi(a,b)$. 
    From the assumption on the distance between $Q$ and $Q'$, there exists a $b^{* \prime} \in Q'$ such that 
    $\norm{b^{*\prime} - b^*} \leq \epsilon$.
    Since $\xi$ is Lipschitz with respect to $b$, it follows that 
    \begin{equation}
        \abs{\xi(a, b^*) - \xi(a, b^{*\prime})} \leq L_\xi \epsilon.
    \end{equation}
    Then, 
    \begin{align*}
        \min_{b\in Q} \xi(a,b) = \xi(a, b^*) \geq \xi(a, b^{*\prime}) - L_\xi \epsilon \geq \min_{b'\in Q'} \xi(a,b') - L_\xi \epsilon.
    \end{align*}
    Similarly, one can show that $\min_{b'\in Q'} \xi(a,b') \geq \min_{b\in Q} \xi(a,b) -  L_\xi \epsilon$, and thus we have~\eqref{eqn:lemma10-1}.

    The result~\eqref{eqn:lemma10-1} can be easily extended to the maximization case in~\eqref{eqn:lemma10-2} by making use of the fact that $\max_{b\in Q} \xi(a,b) = - \min_{b\in Q} -\xi(a,b)$.
\end{proof}

\begin{lemma}
    \label{appdx-lmm:min-lip}
    Let $\zeta:\A \times \B \to \mathbb{R}$ be a Lipschitz continuous function in its first argument, such that $\abs{\zeta(a,b) - \zeta(a',b)}\leq L_\zeta \norm{a-a'}$ for all $a,a' \in \A$ and $b\in \B$.
    Then,  for all compact sets $Q \subseteq \B$,
    \begin{equation*}
        \abs{\min_{b \in Q} \zeta(a,b) - \min_{b'\in Q} \zeta(a',b')} \leq L_\zeta \norm{a-a'}.
    \end{equation*}
\end{lemma}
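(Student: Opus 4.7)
The plan is to use the standard ``swap the minimizer'' trick. Fix $a, a' \in \A$ and a compact set $Q \subseteq \B$. Since $\zeta$ is Lipschitz in its first argument for every fixed $b$, it is in particular continuous in $b$ whenever we restrict attention to the one-argument functions $\zeta(a, \cdot)$ and $\zeta(a', \cdot)$ (actually, to make the minima attained, I would first note that Lipschitz continuity in $a$ at least ensures the pointwise values are real-valued; existence of minimizers on the compact $Q$ can be taken for granted from the context in which this lemma is applied, or the argument can be rephrased with $\inf$ and an $\varepsilon$-near minimizer with $\varepsilon \to 0$). So let $b^* \in \argmin_{b \in Q} \zeta(a, b)$ and $b^{*\prime} \in \argmin_{b' \in Q} \zeta(a', b')$.

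The first step is to use $b^{*\prime}$ as a feasible candidate for the minimization on the left:
\begin{equation*}
\min_{b \in Q} \zeta(a, b) \;\leq\; \zeta(a, b^{*\prime}) \;\leq\; \zeta(a', b^{*\prime}) + L_\zeta \norm{a - a'} \;=\; \min_{b' \in Q} \zeta(a', b') + L_\zeta \norm{a - a'},
\end{equation*}
where the middle inequality is the hypothesized Lipschitz continuity of $\zeta$ in its first argument with $b$ fixed.

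By symmetry, swapping the roles of $a$ and $a'$ (and using $b^*$ as the candidate point for the minimization at $a'$) yields
\begin{equation*}
\min_{b' \in Q} \zeta(a', b') \;\leq\; \min_{b \in Q} \zeta(a, b) + L_\zeta \norm{a - a'}.
\end{equation*}
Combining these two one-sided bounds gives the claim. There is no real obstacle here: the only subtle point worth flagging in the writeup is the existence of minimizers on $Q$ (handled either by compactness together with continuity inherited from Lipschitzness in the whole argument structure, or circumvented by replacing $\min$ with $\inf$ and taking $\varepsilon$-minimizers). This lemma is the scalar counterpart to Lemma~\ref{appdx-lmm:max-diff-set}, and together the two will feed into the inductive Lipschitz-continuity argument for the value functions in Theorem~\ref{thm:value-L-cont}.
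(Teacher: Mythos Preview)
Your proposal is correct and is essentially the same ``swap the minimizer'' argument the paper uses: pick a minimizer for one side, use it as a feasible point for the other, apply the Lipschitz bound in the first argument, and conclude by symmetry. The paper's writeup is slightly terser (it only names one minimizer and bounds the difference directly), but the content is identical.
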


\begin{proof}
    Since $Q$ is compact and $\zeta$ is continuous, the two minimization problems are well-defined.
    Next, let $b^* \in \argmin_{b \in Q} \zeta(a,b)$, it follows that
    \begin{equation*}
        \min_{b'\in Q} \zeta(a',b') - \min_{b \in Q} \zeta(a,b) = \min_{b'\in Q} \zeta(a',b') - \zeta(a,b^*) \leq \zeta(a',b^*) - \zeta(a,b^*) \stackrel{\mathrm{(i)}}{\leq} L_\zeta\norm{a-a'},
    \end{equation*}
    where (i) is due to the Lipschitz assumption.
    Similarly, one can show the other direction $\min_{b \in Q} \zeta(a,b) -\min_{b'\in Q} \zeta(a',b') \leq L_\zeta\norm{a-a'}$.
\end{proof}

\begin{lemma}
    \label{lmm:min-max-marginal-L-cont}
    Consider the compact-valued correspondences $\Gamma: \X \times \Y \rightsquigarrow \X$ and $\Theta: \X \times \Y \rightsquigarrow \Y$, which are Lipschitz continuous with Lipschitz constants $L_\Gamma$ and $L_\Theta$, respectively.
    Given a $L_g$-Lipschitz continuous
    real-valued function $g: \X \times \Y \to \mathbb{R}$, the max-min marginal function
    \begin{equation}
        \label{eqn:max-min-marginal-def}
        f(x,y) = \max_{p\in \Gamma(x,y)} \min_{q\in \Theta(x,y)} g(p,q)
    \end{equation}
    is Lipschitz continuous with Lipschitz constant $L_g (L_\Gamma + L_\Theta)$. 
\end{lemma}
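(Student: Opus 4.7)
The plan is to establish the Lipschitz property of $f$ by combining the two preliminary lemmas already stated, namely the set-variation bound in Lemma~\ref{appdx-lmm:max-diff-set} and the argument-variation bound in Lemma~\ref{appdx-lmm:min-lip}. The strategy is to introduce the inner marginal $h(p,x,y) = \min_{q \in \Theta(x,y)} g(p,q)$ and first show that $h$ is jointly Lipschitz: Lipschitz in $p$ (uniformly in $(x,y)$) via Lemma~\ref{appdx-lmm:min-lip}, and Lipschitz in $(x,y)$ (uniformly in $p$) via Lemma~\ref{appdx-lmm:max-diff-set} applied to the Hausdorff-Lipschitz family $\Theta(x,y)$. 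Concretely, one obtains
\begin{equation*}
|h(p,x,y) - h(p',x,y)| \le L_g \|p-p'\|, \qquad |h(p,x,y) - h(p,x',y')| \le L_g L_\Theta \,\|(x,y)-(x',y')\|.
\end{equation*}

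With these two uniform bounds in hand, the outer max is handled by a standard swap argument. Fix $(x,y)$ and $(x',y')$, and let $p^\ast$ achieve the maximum in $f(x,y)$. Since $\distH{\Gamma(x,y),\Gamma(x',y')} \le L_\Gamma \|(x,y)-(x',y')\|$, there is some $p' \in \Gamma(x',y')$ with $\|p^\ast - p'\| \le L_\Gamma \|(x,y)-(x',y')\|$. Then
\begin{equation*}
f(x,y) - f(x',y') \le h(p^\ast,x,y) - h(p',x',y') \le L_g\|p^\ast-p'\| + L_g L_\Theta \|(x,y)-(x',y')\|,
\end{equation*}
which is bounded by $L_g(L_\Gamma + L_\Theta)\|(x,y)-(x',y')\|$. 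The symmetric inequality follows by exchanging the roles of $(x,y)$ and $(x',y')$, yielding the claimed Lipschitz constant.

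The only non-routine step is the first one, verifying that $h(p,\cdot,\cdot)$ inherits $(L_g L_\Theta)$-Lipschitz continuity from the Hausdorff-Lipschitz family $\Theta(\cdot,\cdot)$; this is exactly where Lemma~\ref{appdx-lmm:max-diff-set} (minimization form) is used, with $\xi(p,q) := g(p,q)$ viewed as $L_g$-Lipschitz in $q$ uniformly in $p$. Compactness of $\Gamma(x,y)$ and $\Theta(x,y)$, guaranteed by the hypotheses, ensures that the $\max$ and $\min$ are attained, making the selection of $p^\ast$ (and the matching $p' \in \Gamma(x',y')$ from the Hausdorff distance) legitimate. I do not anticipate any obstacle beyond bookkeeping, since the lemmas do essentially all of the analytic work; the proposition is really a composition statement packaging them together, and the additive structure of the Lipschitz constant arises naturally from the triangle inequality $h(p^\ast,x,y) - h(p',x',y') = [h(p^\ast,x,y)-h(p',x,y)] + [h(p',x,y) - h(p',x',y')]$.
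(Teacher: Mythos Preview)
Your proposal is correct and follows essentially the same approach as the paper: define the inner marginal $h$, establish its Lipschitz continuity in $p$ via Lemma~\ref{appdx-lmm:min-lip} and in $(x,y)$ via Lemma~\ref{appdx-lmm:max-diff-set}, then handle the outer maximization by the same Hausdorff-matching argument. The only cosmetic difference is that the paper re-invokes Lemmas~\ref{appdx-lmm:max-diff-set} and~\ref{appdx-lmm:min-lip} abstractly at the outer level (splitting as a domain change plus a function change), whereas you unfold that step directly by picking the maximizer $p^\ast$ and a nearby $p'\in\Gamma(x',y')$; the computations are identical.
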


\begin{proof}
    Let $h(x, y, p) = \min_{q\in \Theta(x,y)}g(p,q) $. 
    Since $g(p,q)$ is continuous and $\Theta(x,y)$ is compact, the minimization is well-defined for each $p$.
    Fix $p \in \X$, and consider $x,x' \in \X$ and $ y,y'\in \Y$. 
    The Lipschitz continuity of $\Theta$ implies that
    \begin{equation}
        \distH{\Theta(x,y), \Theta(x',y')} \leq L_\Theta(\norm{x-x'}+ \norm{y-y'}).
        \label{eqn:marginal-3}
    \end{equation}
    For simplicity, let $\epsilon = (\norm{x-x'}+ \norm{y-y'})$.
    Leveraging Lemma~\ref{appdx-lmm:max-diff-set}, we have%
    \footnote{
    Relating to Lemma~\ref{appdx-lmm:max-diff-set}:
    Set $g$ as function $\xi$; 
    treat the argument $p$ as $a$ and the optimization variable $q$ as $b$;
    regard the sets $\Theta(x, y)$ and $\Theta(x', y')$ as $Q$ and $Q'$ respectively.}
    \begin{equation}
        \abs{h(x,y,p) - h(x',y',p)} \leq L_g L_\Theta \epsilon. \label{eqn:marginal-0}
    \end{equation}
    
    Next, fix $(x, y) \in \X \times \Y$ and consider $p, p' \in \Y$. It follows from Lemma~\ref{appdx-lmm:min-lip} that%
    \footnote{Relating to Lemma~\ref{appdx-lmm:min-lip}:
    Set $g$ as function $\zeta$;
    treat the arguments $p$ and $q$ as $a$ and $b$, respectively;
    regard the set $\Theta(x, y)$ as $Q$.
    }
    \begin{align}
        \label{eqn:marginal-2}
        \abs{h(x,y,p) - h(x,y,p')} = \abs{\min_{q \in \Theta(x,y)} g(p,q) - \min_{q' \in \Theta(x,y)}g(p', q')} \leq L_g \norm{p-p'}.
    \end{align}
    Finally, consider $x,x' \in \X$ and $ y,y'\in \Y$, it follows that
    %
    \begin{align}
        \big\vert f(x,y) &- f(x',y')\big\vert   = \abs{\max_{p \in \Gamma(x,y)} h(x,y,p)- \max_{p' \in \Gamma(x',y')} h(x',y',p')}\\
        &\leq \abs{\max_{p \in \Gamma(x,y)} h(x,y,p)- \max_{p \in \Gamma(x,y)} h(x',y',p)} + 
        \abs{\max_{p \in \Gamma(x,y)} h(x',y',p)- \max_{p' \in \Gamma(x',y')} h(x',y',p')} \label{eqn:marginal-1}\\
        &\leq L_g L_\Theta \epsilon + L_\Gamma L_g \epsilon
        = L_g(L_\Theta + L_\Gamma) \epsilon,
    \end{align}
    where the first term in~\eqref{eqn:marginal-1} is bounded using Lemma~\ref{appdx-lmm:min-lip} and the Lipschitz constant derived in~\eqref{eqn:marginal-0} %
    \footnote{
    Relating to Lemma~\ref{appdx-lmm:min-lip}:
    Set $h$ as function $\zeta$;
    treat the arguments $(x,y)$ as $a$ and the argument $p$ as $b$;
    regard the set $\Gamma(x, y)$ as the optimization domain $Q$.
    },
    and the second term in~\eqref{eqn:marginal-1} is bounded using Lemma~\ref{appdx-lmm:max-diff-set} and the Lipschitz constant in~\eqref{eqn:marginal-2}
    \footnote{
    Relating to Lemma~\ref{appdx-lmm:max-diff-set}:
    Set $h$ as function $\xi$; 
    treat the argument $(x', y')$ as $a$ and the optimization variable $p$ as $b$;
    regard the sets $\Gamma(x, y)$ and $\Gamma(x', y')$ as $Q$ and $Q'$ respectively.
    }. 
\end{proof}

Next, we present some results regarding the continuity of the reachability correspondences. 
We start by defining the pure local policies.

\begin{definition}[Pure local policy]
    \label{appdx-def:pure-policy}
    A Blue local policy $\pi_t \in \Pi_t$ is said to be pure if $\pi_t(u|x) \in \{0,1\}$ for all $u \in \U$ and $x \in \X$.
    We use $\hat{\Pi}_t = \{\hat{\pi}^k\}_{k=1}^{|\U|^{|\X|}}$ to denote the set of pure Blue policies, where $\hat{\pi}^k$ denotes the $k$-th pure Blue local policy. 
    The pure Red local policies are defined similarly.
\end{definition}

\begin{proposition}
    \label{appdx-lmm:Rset-convex-hull}
    The Blue reachable set is characterized as 
    \begin{equation}
        \label{appdx-eqn:Rset-characterization}
        \R^\rho_{\mu,t}(\mu_t, \nu_t) = \Co{\{\mu_t F_t^\rho(\mu_t,\nu_t, \hat{\pi}^k\}_{k=1}^{|\U|^{|\X|}}},
    \end{equation}
    where $\hat{\pi}^k$ are pure Blue local policies, and $\mathrm{Co}$ denotes the convex hull. 
\end{proposition}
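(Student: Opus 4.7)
The plan is to establish the two set inclusions separately, leveraging two simple observations: (i) the mean-field propagation map $\pi_t \mapsto \mu_t F_t^\rho(\mu_t,\nu_t,\pi_t)$ is linear in $\pi_t$, and (ii) the policy space $\Pi_t$ is itself the convex hull of the pure policies $\hat{\Pi}_t$. Combining these two facts gives the claim without analytic difficulty.

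First, I would verify linearity directly from the defining formula. From the matrix expression in Section~\ref{sec:coordinator-game}, we have
\begin{equation*}
    [\mu_t F_t^\rho(\mu_t,\nu_t,\pi_t)](x') = \sum_{x \in \X} \mu_t(x) \sum_{u \in \U} f_t^\rho(x'|x,u,\mu_t,\nu_t)\,\pi_t(u|x),
\end{equation*}
which is linear in $\pi_t$ viewed as a vector in $\mathbb{R}^{|\U|\cdot|\X|}$. Hence, for any nonnegative weights $\{\lambda_k\}$ with $\sum_k \lambda_k = 1$ and any policies $\{\pi_t^k\}$, setting $\pi_t = \sum_k \lambda_k \pi_t^k$ gives
\begin{equation*}
    \mu_t F_t^\rho(\mu_t,\nu_t,\pi_t) = \sum_k \lambda_k\, \mu_t F_t^\rho(\mu_t,\nu_t,\pi_t^k).
\end{equation*}

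Second, I would show that $\Pi_t = \Co{\hat{\Pi}_t}$. Since $\Pi_t = (\P(\U))^{|\X|}$ is a product of $|\X|$ probability simplices and each simplex $\P(\U)$ is the convex hull of its $|\U|$ point-mass vertices, the product is a polytope whose vertices are precisely the $|\U|^{|\X|}$ pure policies of Definition~\ref{appdx-def:pure-policy}. Explicitly, given $\pi_t \in \Pi_t$, enumerate the pure policies $\{\hat{\pi}^k\}$ by deterministic mappings $d_k:\X \to \U$, and set
\begin{equation*}
    \lambda_k = \prod_{x \in \X} \pi_t\big(d_k(x)\,\big|\,x\big).
\end{equation*}
A direct calculation (marginalizing over all $x' \neq x$) shows that $\lambda_k \geq 0$, $\sum_k \lambda_k = 1$, and $\sum_k \lambda_k \hat{\pi}^k(u|x) = \pi_t(u|x)$ for every $u \in \U$ and $x \in \X$.

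Third, combining the two observations establishes both inclusions. For $\R^\rho_{\mu,t}(\mu_t,\nu_t) \subseteq \Co{\{\mu_t F_t^\rho(\mu_t,\nu_t,\hat{\pi}^k)\}_k}$, any reachable $\mu_{t+1}^\rho$ arises from some $\pi_t = \sum_k \lambda_k \hat{\pi}^k \in \Pi_t$, and linearity expresses it as the corresponding convex combination of pure-policy mean-fields. For the reverse inclusion, given any convex combination $\sum_k \lambda_k \mu_t F_t^\rho(\mu_t,\nu_t,\hat{\pi}^k)$, the policy $\pi_t = \sum_k \lambda_k \hat{\pi}^k$ lies in $\Pi_t$ (since $\Pi_t$ is convex), and by linearity induces exactly this mean-field, so the point belongs to $\R^\rho_{\mu,t}(\mu_t,\nu_t)$.

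The main obstacle is essentially bookkeeping: enumerating pure policies consistently and verifying that the product-of-probabilities weights $\lambda_k$ above do recover an arbitrary $\pi_t$. There is no deeper analytical difficulty because everything reduces to linearity of the propagation together with the vertex structure of a product of simplices. The symmetric statement for the Red reachable set follows by an identical argument.
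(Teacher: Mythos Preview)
Your proposal is correct and follows essentially the same approach as the paper: establish linearity of $\pi_t \mapsto \mu_t F_t^\rho(\mu_t,\nu_t,\pi_t)$, use $\Pi_t = \Co{\hat{\Pi}_t}$, and combine these to get both inclusions. The only difference is that you supply an explicit formula $\lambda_k = \prod_{x} \pi_t(d_k(x)\mid x)$ for the convex coefficients, whereas the paper simply asserts $\Pi_t = \Co{\{\hat{\pi}^k\}_k}$ without writing out the weights.
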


\begin{proof}    
    We first show that the mean-field propagation rule is linear with respect to the policy. 
    Note that the set of admissible policies $\Pi_t$ can be characterized as the convex hull of the pure policies, i.e., $\Pi_t = \Co{\{\hat{\pi}^k\}_k}$.
    Consider arbitrary current mean-fields $\mu_t$ and $\nu_t$,
    and consider the policy $\pi_t = \lambda \pi_t^1 + (1-\lambda) \pi_t^2$ for some $\lambda \in (0,1)$ and $\pi_t^1, \pi_t^2 \in \Pi_t$.
    Then, the next mean-field induced by $\pi_t$ is given by
    \begin{align*}
        \mu_{t+1}(x') &= \sum_{x} \left[\sum_u f^\rho_t(x'|x, \mu_t, \nu_t) \pi_t(u|x)\right]\mu_t(x) \\
        &= \lambda \sum_{x} \left[\sum_u f^\rho_t(x'|x, \mu_t, \nu_t) \pi^1_t(u|x)\right]\mu_t(x) + (1-\lambda) \sum_{x} \left[\sum_u f^\rho_t(x'|x, \mu_t, \nu_t) \pi^2_t(u|x)\right]\mu_t(x),
    \end{align*}
    which implies that
    \begin{equation}
        \label{appd-eqn:linear-mf-dynamics}
        \mu_{t+1} = \mu_t F^\rho_t(\mu_t, \nu_t, \pi_t) = \lambda \mu_t F^\rho_t(\mu_t, \nu_t, \pi^1_t) + (1-\lambda) \mu_t F^\rho_t(\mu_t, \nu_t, \pi^2_t).
    \end{equation}

    For the $\subseteq$ direction in~\eqref{appdx-eqn:Rset-characterization}, consider $\mu_{t+1} \in \R^\rho_{\mu,t}(\mu_t, \nu_t)$. Then there exists a policy $\pi_t$ such that $\mu_{t+1} = \mu_{t}F^\rho_t(\mu_t, \nu_t, \pi_t)$.
    Since the admissible policy set is the convex hull of pure policies, we have $\pi_t = \sum_k \lambda_k \hat{\pi}^k$, where $\lambda_k \geq 0$ and $\sum_k \lambda_k = 1$.
    It follows directly from~\eqref{appd-eqn:linear-mf-dynamics} that
    \begin{equation}
        \mu_{t+1} = \sum_{k} \lambda_k \mu_{t}F^\rho_t(\mu_t, \nu_t, \hat{\pi}^k) \in \Co{\{\mu_t F_t^\rho(\mu_t,\nu_t, \hat{\pi}^k\}_{k=1}^{|\U|^{|\X|}}}.
    \end{equation}
    
    For the $\supseteq$ direction in~\eqref{appdx-eqn:Rset-characterization}, consider a point $\mu_{t+1} \in \Co{\{\mu_t F_t^\rho(\mu_t,\nu_t, \hat{\pi}^k)\}_{k=1}^{|\U|^{|\X|}}}$. 
    By definition and the linearity in~\eqref{appd-eqn:linear-mf-dynamics}, we have 
    \[\mu_{t+1} = \sum_{k} \lambda_k  \mu_t F_t^\rho(\mu_t,\nu_t, \hat{\pi}^k)
    =\mu_t F_t^\rho(\mu_t,\nu_t, \pi_t),\]
    where $\pi_t = \sum_k {\lambda_k \hat{\pi}^k} \in \Pi_t$, which implies $\mu_{t+1} \in \RMSet{t}{\rho}$    
\end{proof}

\begin{lemma}
    \label{appdx-lmm:vertex-perturb}
    Consider a pure local policy $\hat{\pi}_t \in \hat{\Pi}_t$ and arbitrary mean-fields $\mu_t, \mu'_t \in \P(\X)$ and $\nu_t, \nu'_t\in \P(\Y)$. The following bound holds: 
    \begin{equation*}
        \dtv{\mu_t F^\rho_t(\mu_t, \nu_t, \hat{\pi}_t), \mu_t' F^\rho_t(\mu'_t, \nu'_t, \hat{\pi}_t)} \leq \big(1+ \frac{1}{2} L_{f_t}\big) \big(\dtv{\mu_t, \mu'_t} + \dtv{\nu_t, \nu'_t}\big).
    \end{equation*}

\end{lemma}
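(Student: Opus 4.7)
The plan is to expand both propagated mean-fields componentwise, use the fact that a pure policy deterministically selects a single action $u^*(x)$ at each state, and then apply a standard ``add and subtract'' decomposition separating the perturbation into a piece attributable to the change in $\mu_t$ and a piece attributable to the change in the transition kernel arguments.

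Concretely, first I would denote $\mu_{t+1} = \mu_t F^\rho_t(\mu_t,\nu_t,\hat{\pi}_t)$ and $\mu'_{t+1} = \mu'_t F^\rho_t(\mu'_t,\nu'_t,\hat{\pi}_t)$, and observe that since $\hat{\pi}_t$ is pure there exists a map $u^*:\X \to \U$ with $\hat{\pi}_t(u^*(x)|x)=1$, so that
\begin{equation*}
\mu_{t+1}(x') = \sum_{x\in\X} f^\rho_t\bigl(x'\,\vert\,x,u^*(x),\mu_t,\nu_t\bigr)\,\mu_t(x),
\end{equation*}
and analogously for $\mu'_{t+1}(x')$. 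Then I would add and subtract $f^\rho_t(x'|x,u^*(x),\mu_t,\nu_t)\,\mu'_t(x)$ inside the sum to split $\mu_{t+1}(x')-\mu'_{t+1}(x')$ into (A) a ``mass'' term $\sum_x f^\rho_t(x'|x,u^*(x),\mu_t,\nu_t)\,[\mu_t(x)-\mu'_t(x)]$ and (B) a ``kernel'' term $\sum_x [f^\rho_t(x'|x,u^*(x),\mu_t,\nu_t)-f^\rho_t(x'|x,u^*(x),\mu'_t,\nu'_t)]\,\mu'_t(x)$.

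Next I would take $\tfrac{1}{2}\sum_{x'}|\cdot|$ on both sides, push the absolute value inside the sums via the triangle inequality, and interchange the order of summation. For term (A), the inner sum over $x'$ of $f^\rho_t(x'|x,u^*(x),\mu_t,\nu_t)$ equals $1$, leaving $\tfrac{1}{2}\sum_x|\mu_t(x)-\mu'_t(x)|=\dtv{\mu_t,\mu'_t}$. For term (B), Assumption~\ref{assmpt:lipschitiz-dynamics} gives $\sum_{x'}|f^\rho_t(x'|x,u^*(x),\mu_t,\nu_t)-f^\rho_t(x'|x,u^*(x),\mu'_t,\nu'_t)|\leq L_{f_t}(\dtv{\mu_t,\mu'_t}+\dtv{\nu_t,\nu'_t})$, and then $\sum_x \mu'_t(x)=1$ collapses the outer sum, yielding a contribution of $\tfrac{1}{2}L_{f_t}(\dtv{\mu_t,\mu'_t}+\dtv{\nu_t,\nu'_t})$.

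Combining the two pieces gives $\dtv{\mu_{t+1},\mu'_{t+1}}\leq \dtv{\mu_t,\mu'_t}+\tfrac{1}{2}L_{f_t}(\dtv{\mu_t,\mu'_t}+\dtv{\nu_t,\nu'_t})$, and bounding $\dtv{\mu_t,\mu'_t}\leq \dtv{\mu_t,\mu'_t}+\dtv{\nu_t,\nu'_t}$ in the first term delivers the claimed constant $1+\tfrac{1}{2}L_{f_t}$. The only subtle point is the interchange of summation in term (B), which requires carrying the $\mu'_t(x)$ weight through the absolute value and noting that it is nonnegative; nothing here is a real obstacle, so I expect the argument to be short and essentially a careful triangle-inequality bookkeeping exercise. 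The Red-side analogue follows by the symmetric argument using $L_{g_t}$.
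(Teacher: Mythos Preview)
Your proposal is correct and follows essentially the same add-and-subtract decomposition as the paper: both split the difference into a ``mass'' term (change $\mu_t\to\mu'_t$ with kernel frozen at $(\mu_t,\nu_t)$) bounded by $\dtv{\mu_t,\mu'_t}$, and a ``kernel'' term (change $(\mu_t,\nu_t)\to(\mu'_t,\nu'_t)$ with input frozen at $\mu'_t$) bounded by $\tfrac{1}{2}L_{f_t}(\dtv{\mu_t,\mu'_t}+\dtv{\nu_t,\nu'_t})$ via Assumption~\ref{assmpt:lipschitiz-dynamics}. The only cosmetic differences are that the paper phrases the first bound as ``$F^\rho_t$ is a stochastic matrix, hence non-expansive in $\ell^1$'' rather than your componentwise expansion, and the paper retains the weighted sum $\sum_u \hat{\pi}_t(u|x)$ throughout (so its argument in fact works for any local policy, not just pure ones) rather than invoking the deterministic selector $u^*(x)$.
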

\begin{proof}
    Using the triangle inequality, we have that
    \begin{align*}
        \dtv{\mu_t F^\rho_t(\mu_t, \nu_t, \hat{\pi}_t), \mu_t' F^\rho_t(\mu'_t, \nu'_t, \hat{\pi}_t)} &\leq \underbrace{\dtv{\mu_t F^\rho_t(\mu_t, \nu_t, \hat{\pi}_t), \mu_t' F^\rho_t(\mu_t, \nu_t, \hat{\pi}_t)}}_{A} \\
        & \qquad \qquad \qquad \qquad \qquad \qquad + \underbrace{\dtv{\mu'_t F^\rho_t(\mu_t, \nu_t, \hat{\pi}_t), \mu_t' F^\rho_t(\mu'_t, \nu'_t, \hat{\pi}_t)}}_{B}. 
    \end{align*}
    
    Since $F^\rho_t$ is a stochastic matrix, it is a non-expansive operator under the 1-norm, and hence
    $A \leq \dtv{\mu_t, \mu'_t}$. 
    Next, we bound the term $B$ using Assumption~\ref{assmpt:lipschitiz-dynamics}. 
    \begin{align*}
        B 
        &= \frac{1}{2} \sum_{x'} \abs{\sum_x \mu'_t(x) \Big(\sum_{u}\hat{\pi}_t(u|x) 
        \big(f_t^\rho(x'|x, \mu_t, \nu_t, u) - f_t^\rho(x'|x, \mu'_t, \nu'_t, u)\big) \Big)}\\
        &\leq  \frac{1}{2} \sum_{x',x,u} \mu_t'(x) \hat{\pi}_t(u|x) \big\vert f_t^\rho(x'|x, \mu_t, \nu_t, u) - f_t^\rho(x'|x, \mu'_t, \nu'_t, u)\big \vert \\
        &= \frac{1}{2} \sum_{x,u} \mu_t'(x) \hat{\pi}_t(u|x) \Big( \sum_{x'} \big\vert f_t^\rho(x'|x, \mu_t, \nu_t, u) - f_t^\rho(x'|x, \mu'_t, \nu'_t, u)\big \vert \Big)\\
        & \leq \frac{1}{2} \sum_{x,u} \mu_t'(x) \hat{\pi}_t(u|x) L_{f_t} \Big( \dtv{\mu_t, \mu'_t} + \dtv{\nu_t, \nu_t'}\Big) = \frac{1}{2} L_{f_t} \Big( \dtv{\mu_t, \mu'_t} + \dtv{\nu_t, \nu_t'} \Big)
     \end{align*}

    Combining the bounds, we have 
    \begin{equation*}
        \dtv{\mu_t F^\rho_t(\mu_t, \nu_t, \hat{\pi}_t), \mu_t' F^\rho_t(\mu'_t, \nu'_t, \hat{\pi}_t)} \leq \big(1+ \frac{1}{2} L_{f_t}\big) \dtv{\mu_t, \mu'_t} + \frac{1}{2}L_{f_t} \dtv{\nu_t, \nu'_t}.
    \end{equation*}
    
\end{proof}

\begin{lemma}
    \label{appdx-lmm:vertex-movement}
    Let two sets of points $\{x_i\}_{i=1}^N$ and $\{y_i\}_{i=1}^N$ that satisfy  $\norm{x_i - y_i} \leq \epsilon$ for all $i$. 
    Then,
    \begin{equation*}
        \distH{\Co{\{x_i\}_{i=1}^N}, \Co{\{y_i\}_{i=1}^N}} \leq \epsilon.
    \end{equation*}
\end{lemma}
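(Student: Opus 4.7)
The plan is to exploit the convex combination representation inherent in the convex hull and rely on the triangle inequality. By the definition of Hausdorff distance, it suffices to establish the two one-sided bounds $\sup_{p \in \Co\{x_i\}} \inf_{q \in \Co\{y_i\}} \norm{p-q} \leq \epsilon$ and its symmetric counterpart, so by the symmetry of the hypothesis I would only argue one direction explicitly.

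For the forward direction, I would take an arbitrary $p \in \Co\{x_i\}_{i=1}^N$ and, by Carathéodory/definition of convex hull, write $p = \sum_{i=1}^N \lambda_i x_i$ with $\lambda_i \geq 0$ and $\sum_i \lambda_i = 1$. The natural candidate $q \in \Co\{y_i\}_{i=1}^N$ is obtained by transporting the same convex weights to the perturbed points, namely $q = \sum_{i=1}^N \lambda_i y_i$. This $q$ is automatically in $\Co\{y_i\}$ because the weights $\lambda_i$ are the same convex coefficients. Then by the triangle inequality,
\begin{equation*}
\norm{p - q} = \Big\lVert \sum_{i=1}^N \lambda_i (x_i - y_i) \Big\rVert \leq \sum_{i=1}^N \lambda_i \norm{x_i - y_i} \leq \sum_{i=1}^N \lambda_i \epsilon = \epsilon,
\end{equation*}
which yields $\inf_{q \in \Co\{y_i\}} \norm{p-q} \leq \epsilon$. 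Taking the supremum over $p$ bounds one side of the Hausdorff distance by $\epsilon$.

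The reverse direction is obtained by swapping the roles of $\{x_i\}$ and $\{y_i\}$; the hypothesis $\norm{x_i - y_i} \leq \epsilon$ is symmetric, so the identical argument applies verbatim. Combining the two one-sided bounds gives $\distH{\Co\{x_i\}, \Co\{y_i\}} \leq \epsilon$.

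I do not anticipate any real obstacle here. The only mild subtlety is making sure that the matched convex combination $q$ actually lies in $\Co\{y_i\}$, but this is immediate because we are reusing the same nonnegative weights summing to one. No compactness or topological arguments are needed since each point in a convex hull admits a finite convex representation by construction.
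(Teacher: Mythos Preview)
Your proposal is correct and matches the paper's own proof essentially verbatim: pick a point in one hull, represent it as a convex combination, transport the same weights to the other set of generators, and bound the difference by the triangle inequality; then invoke symmetry for the reverse inclusion. The only cosmetic difference is that the paper does not invoke Carath\'eodory (unnecessary here since the hull of finitely many points is by definition the set of their convex combinations), but this is harmless.
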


\begin{proof}
    Consider a point $x \in \Co{\{x_i\}_{i=1}^N}$. 
    By definition, there exists a set of non-negative coefficients $\{\lambda_i\}_{i=1}^N$, such that $x = \sum_{i} \lambda_i x_i$ and $\sum_i \lambda_i = 1$.
    Using the same set of coefficients, define $y = \sum_i \lambda_i y_i  \in \Co{\{y_i\}_{i=1}^N}$.
    Then, 
    \begin{equation*}
        \norm{x-y} \leq \sum_{i} \lambda_i \norm{x_i - y_i} \leq \epsilon.
    \end{equation*}
    Thus, for a fixed $x \in \Co{\{x_i\}_{i=1}^N}$, we have
    $\inf_{y \in \Co{\{y_i\}_{i=1}^N}} \norm{x-y} \leq \epsilon.$
    Since $x \in \Co{\{x_i\}_{i=1}^N}$ is arbitrary, the above inequality implies 
    \[\sup_{x\in \Co{\{x_i\}_{i=1}^N}} \inf_{y \in \Co{\{y_i\}_{i=1}^N}} \norm{x-y} \leq \epsilon.\]
    Through a similar argument, we can conclude that $\sup_{y\in \Co{\{y_i\}_{i=1}^N}} \inf_{x \in \Co{\{x_i\}_{i=1}^N}} \norm{x-y} \leq \epsilon,$ which completes the proof.
\end{proof}

\subsection{Proof of Lemma~\ref{lmm:R-L-cont}}
\label{appdx-sec:Continuity-RSet}
\RLC*

\begin{proof}
    Due to symmetry, we only present the proof for the Blue reachability correspondence. 
    From Lemma~\ref{appdx-lmm:vertex-perturb}, we have that, for all pure local policies $\hat{\pi}^k_t \in \hat{\Pi}_t$,
        \begin{equation*}
        \dtv{\mu_t F^\rho_t(\mu_t, \nu_t, \hat{\pi}^k_t), \mu_t' F^\rho_t(\mu'_t, \nu'_t, \hat{\pi}_t)} \leq \big(1+ \frac{1}{2} L_{f_t}\big) \dtv{\mu_t, \mu'_t} + \frac{1}{2}L_{f_t} \dtv{\nu_t, \nu'_t}.
    \end{equation*}
    From Proposition~\ref{appdx-lmm:Rset-convex-hull}, we know that the reachable set can be characterized as the convex hull
    \begin{equation*}
        \R^\rho_{\mu,t}(\mu_t, \nu_t) = \Co{\{\mu_t F_t^\rho(\mu_t,\nu_t, \hat{\pi}^k\}_{k=1}^{|\U|^{|\X|}}}.
    \end{equation*}
    Leveraging Lemma~\ref{appdx-lmm:vertex-movement}, we can conclude that
    \begin{equation*}
        \distH{\mathcal{R}_{\mu,t}(\mu_t, \nu_t), \mathcal{R}_{\mu,t}(\mu'_t, \nu'_t))} \leq \big(1+ \frac{1}{2} L_{f_t}\big) \big(\dtv{\mu_t, \mu'_t} + \dtv{\nu_t, \nu'_t}\big).
    \end{equation*}

\end{proof}

\subsection{Proof of Theorem~\ref{thm:value-L-cont}}
\label{appdx-sec:Continuity-RSet-thm-2}

\vlc*
\begin{proof}
    The proof is given by induction.
    
    \textit{Base case: } At the terminal time $T$, the optimal coordinator value function is given by  
    \begin{equation*}
        \lowervalue_{\cor,T}^{\rho*}(\mu^\rho_T, \nu^\rho_T) = r^\rho_T(\mu^\rho_T, \nu^\rho_T).
    \end{equation*}
    From Assumption~\ref{assmpt:lipschitiz-rewards}, the Lipschitz constant for $\lowervalue_{\cor, T}^{\rho*}$ is $L_r$.
    
    \vspace{+5pt}
    \textit{Inductive hypothesis: }
    Suppose that the Lipschitz constant is given by~\eqref{eqn:value-L-constant} at time $t+1$.
    
    \vspace{+5pt}
    \textit{Induction: }
    Recall the definition of the optimal coordinator lower value at time $t$:
    \begin{equation*}
        \lowervalue_{\cor,t}^{\rho*}(\mu^\rho_t, \nu^\rho_t) = r_t(\mu^\rho_t, \nu^\rho_t) +
        \max_{\mu_{t+1}^\rho \in \mathcal{R}_{\mu,t}^\rho(\mu_t^\rho,\nu_t^\rho)}~ \min_{\nu_{t+1}^\rho \in \mathcal{R}_{\nu,t}^\rho(\mu_t^\rho,\nu_t^\rho)}
        \lowervalue_{\cor,t+1}^{\rho*}(\mu^\rho_{t+1}, \nu^\rho_{t+1}),
    \end{equation*}
    where the second term takes the form of a max-min marginal function defined in~\eqref{eqn:max-min-marginal-def}, where $\Gamma = \R^\rho_{\mu,t}$, $\Theta = \R^\rho_{\nu,t}$ and $g = \lowervalue_{\cor,t+1}^{\rho*}$, while the first term is Lipschitz continuous with Lipschitz constant $L_r$.
    Applying Lemma~\ref{lmm:min-max-marginal-L-cont}, it follows that, for all $\mu^\rho_t, \mu^{\rho\prime}_t \in \P(\X)$ and $\nu^\rho_t, \nu^{\rho\prime}_t \in \P(\Y)$,
    %
    \begin{align*}
        &\Big \vert \lowervalue_{\cor,t}^{\rho*}(\mu^\rho_t, \nu^\rho_t) - \lowervalue_{\cor,t}^{\rho*}(\mu^{\rho\prime}_t, \nu^{\rho\prime}_t)\Big \vert  \leq \abs{r_t(\mu^\rho_t, \nu^\rho_t) - r_t(\mu^{\rho\prime}_t, \nu^{\rho\prime}_t)} + \\
        & \abs{
        \max_{\mu_{t+1}^\rho \in \mathcal{R}_{\mu,t}^\rho(\mu_t^\rho,\nu_t^\rho)}~ \min_{\nu_{t+1}^\rho \in \mathcal{R}_{\nu,t}^\rho(\mu_t^\rho,\nu_t^\rho)}
        \lowervalue_{\cor,t+1}^{\rho*}(\mu^\rho_{t+1}, \nu^\rho_{t+1})
        -
        \max_{\mu_{t+1}^{\rho\prime} \in \mathcal{R}_{\mu,t}^{\rho\prime}(\mu_t^{\rho\prime},\nu_t^{\rho\prime})}~ \min_{\nu_{t+1}^{\rho\prime} \in \mathcal{R}_{\nu,t}^{\rho\prime}(\mu_t^{\rho\prime},\nu_t^{\rho\prime})}
        \lowervalue_{\cor,t+1}^{\rho*}(\mu^{\rho\prime}_{t+1}, \nu^{\rho\prime}_{t+1})}\\
        &\leq \bigg(L_r + (L_{\R^\rho_{\mu,t}} + L_{\R^\rho_{\nu,t}}) L_r \Big( 1+ \sum_{k=t+1}^{T-1} \prod_{\tau =t+1}^{k} (L_{\R^\rho_{\mu,\tau}} + L_{\R^\rho_{\nu,\tau}})\Big)\bigg) \Big(\dtv{\mu^\rho_t, \mu^{\rho\prime}_t} + \dtv{\nu^\rho_t, \nu^{\rho\prime}_t}\Big)\\
        & = \bigg(L_r \Big( 1+ (L_{\R^\rho_{\mu,t}} + L_{\R^\rho_{\nu,t}}) + \sum_{k=t+1}^{T-1} (L_{\R^\rho_{\mu,t}} + L_{\R^\rho_{\nu,t}}) \prod_{\tau =t+1}^{k}  (L_{\R^\rho_{\mu,\tau}} + L_{\R^\rho_{\nu,\tau}})\Big)\bigg) \Big(\dtv{\mu^\rho_t, \mu^{\rho\prime}_t} + \dtv{\nu^\rho_t, \nu^{\rho\prime}_t}\Big)\\
        &=\bigg(L_r \Big( 1+ (L_{\R^\rho{_\mu,t}} + L_{\R^\rho_{\nu,t}}) + \sum_{k=t+1}^{T-1} \prod_{\tau =t}^{k} (L_{\R^\rho_{\mu,\tau}} + L_{\R^\rho_{\nu,\tau}})\Big)\bigg) \Big(\dtv{\mu^\rho_t, \mu^{\rho\prime}_t} + \dtv{\nu^\rho_t, \nu^{\rho\prime}_t}\Big) \\
        &=\bigg(L_r \Big( 1+ \sum_{k=t}^{T-1} \prod_{\tau =t}^{k} (L_{\R^\rho_{\mu,\tau}} + L_{\R^\rho_{\nu,\tau}})\Big)\bigg) \Big(\dtv{\mu^\rho_t, \mu^{\rho\prime}_t} + \dtv{\nu^\rho_t, \nu^{\rho\prime}_t}\Big),
    \end{align*}
    which completes the induction.
\end{proof}

\section{Proof of Existence of Game Values}
\label{appdx-sec:game-value}

We first examine the convexity of the reachability correspondences. 
\begin{definition}[\citep{kuroiwa1996convexity}]
    Let $\X$, $\Y$ and $\Z$ be convex sets. 
    The correspondence $\Gamma: \X \times \Y \rightsquigarrow \Z$ is convex with respect to $\X$ if, for all $x_1, x_2 \in \X$, $y \in \Y$,  $z_1 \in \Gamma(x_1, y)$, $z_2 \in \Gamma(x_2, y)$, and $\lambda \in (0,1)$, there exists $z \in \Gamma(\lambda x_1 + (1-\lambda) x_2, y)$ such that 
    \begin{equation*}
        z = \lambda z_1 + (1-\lambda) z_2.
    \end{equation*}
\end{definition}

\begin{remark} \label{rmk:corespondence-convexity}
    The above definition of convexity is equivalent to  the following set inclusion
    \begin{equation}
        \label{eqn:convex-correspondence-def}
        \Gamma(\lambda x_1 + (1-\lambda) x_2, y) \supseteq \lambda \Gamma(x_1,y) + (1-\lambda) \Gamma(x_2, y),
    \end{equation}
    for all $\lambda \in (0,1)$, $x_1,x_2 \in \X$ and $y \in \Y$,
    where the Minkowski sum is defined as     
    \[
    \lambda \Gamma(x_1,y) + (1-\lambda) \Gamma(x_2, y) = \{z| z=\lambda z_1 + (1-\lambda) z_2 \, \text{ where } \, z_1 \in \Gamma(x_1, y), z_2 \in \Gamma(x_2, y)\}.
    \]
\end{remark}

\begin{definition}
    The correspondence $\Gamma: \X \times \Y \rightsquigarrow \Z$ is constant with respect to $\Y$ if
    \begin{equation}
        \label{eqn:constant-correspondence-def}
        \Gamma(x, y_1) = \Gamma(x, y_2) \quad \forall x \in \X, ~\forall y_1, y_2 \in \Y.
    \end{equation}
    
\end{definition}

We have the following concave-convex result for a max-min marginal function defined in~\eqref{eqn:max-min-marginal-def}. 
\begin{restatable}{lemma}{lcc}
    \label{lmm:concave-convex}
    Consider two compact-valued correspondences $\Gamma: \X \times \Y \rightsquigarrow  \X$ and $\Theta: \X \times \Y \rightsquigarrow \Y$. 
    Let $\Gamma$ be convex with respect to $\X$ and constant with respect to $\Y$, and let $\Theta$ be constant with respect to $\X$ and convex with respect to $\Y$.
    Let $g: \X \times \Y \to \mathbb{R}$ be concave-convex. 
    Then, the max-min marginal function $f(x, y) = \max_{p\in \Gamma(x,y)} \min_{q \in \Theta(x,y)} g(p,q)$ 
    is also concave-convex.
\end{restatable}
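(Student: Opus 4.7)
The plan is to decompose the claim into the two separate assertions: (i) $x \mapsto f(x,y)$ is concave for each fixed $y$, and (ii) $y \mapsto f(x,y)$ is convex for each fixed $x$. A convenient first step is to exploit the two ``constant'' assumptions to simplify notation: since $\Gamma$ is constant in $\Y$ and $\Theta$ is constant in $\X$, I can unambiguously write $\Gamma(x,y)=\Gamma(x)$ and $\Theta(x,y)=\Theta(y)$ throughout, which decouples the two optimizations in the definition of $f$. Compactness of $\Gamma(x)$ and $\Theta(y)$ together with continuity of $g$ (inherited from concave-convexity) ensures that all extrema below are attained.

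For (i), I would introduce the auxiliary function $h(p,y)=\min_{q\in\Theta(y)} g(p,q)$ and first observe that, for each fixed $y$, $h(\cdot,y)$ is concave in $p$: this is the standard fact that a pointwise infimum of concave functions is concave, applied to the family $\{g(\cdot,q):q\in\Theta(y)\}$. With concavity of $h(\cdot,y)$ in hand, the step $f(x,y)=\max_{p\in\Gamma(x)} h(p,y)$ becomes a maximum of a concave function over a correspondence that is convex in $x$. Concretely, for $x_1,x_2\in\X$ and $\lambda\in(0,1)$, pick maximizers $p_i^\star\in\Gamma(x_i)$, form $p^\star=\lambda p_1^\star+(1-\lambda)p_2^\star$, and invoke Remark~\ref{rmk:corespondence-convexity} to conclude $p^\star\in\Gamma(\lambda x_1+(1-\lambda)x_2)$. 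Then
\begin{equation*}
f(\lambda x_1+(1-\lambda)x_2,y)\;\geq\; h(p^\star,y)\;\geq\;\lambda h(p_1^\star,y)+(1-\lambda) h(p_2^\star,y)\;=\;\lambda f(x_1,y)+(1-\lambda)f(x_2,y),
\end{equation*}
which is precisely concavity in $x$.

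For (ii), the roles of the two correspondences and of concavity/convexity in $g$ swap. For each fixed $p$, set $\tilde h_p(y)=\min_{q\in\Theta(y)} g(p,q)$. Given $y_1,y_2$ and $\lambda\in(0,1)$, pick minimizers $q_i^\star\in\Theta(y_i)$, form $q^\star=\lambda q_1^\star+(1-\lambda)q_2^\star$, and use convexity of $\Theta$ in $y$ to conclude $q^\star\in\Theta(\lambda y_1+(1-\lambda)y_2)$. Convexity of $g$ in $q$ then gives $\tilde h_p(\lambda y_1+(1-\lambda)y_2)\leq g(p,q^\star)\leq \lambda\tilde h_p(y_1)+(1-\lambda)\tilde h_p(y_2)$, so $\tilde h_p(\cdot)$ is convex. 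Finally, $f(x,y)=\max_{p\in\Gamma(x)}\tilde h_p(y)$ is a pointwise supremum of the family of convex functions $\{\tilde h_p:p\in\Gamma(x)\}$ and is therefore convex in $y$, completing the argument.

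The only subtle step is to apply Remark~\ref{rmk:corespondence-convexity} with the correct direction of set inclusion: the Minkowski combination of points drawn from $\Gamma(x_1)$ and $\Gamma(x_2)$ lies in $\Gamma(\lambda x_1+(1-\lambda)x_2)$, not the other way around, and symmetrically for $\Theta$. This is exactly the direction one needs so that the convex combination of selected maximizers (resp.\ minimizers) remains feasible for the perturbed optimization problem. Aside from this book-keeping, the argument is a clean application of the standard ``inf of concave'' and ``sup of convex'' preservation principles, with the correspondence-convexity assumptions playing the role usually played by convexity of a fixed feasible set.
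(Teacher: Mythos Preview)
Your proof is correct and follows essentially the same approach as the paper's: both arguments use the convexity of the correspondences (via Remark~\ref{rmk:corespondence-convexity}) to ensure convex combinations of feasible points remain feasible, combined with the concavity/convexity of $g$ in its respective arguments. Your organization is slightly more modular—explicitly invoking the ``infimum of concave functions is concave'' and ``supremum of convex functions is convex'' principles—whereas the paper carries out these steps inline through a chain of inequalities, but the underlying logic is identical.
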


\begin{proof}
    Fix $y \in \Y$, and consider $x_1, x_2 \in \X$ and $\lambda \in (0,1)$. Denote $x = \lambda x_1 + (1-\lambda) x_2$. It follows that
    \begin{align*}
        f(\lambda x_1 + (1-\lambda)x_2, y ) &= \max_{p \in \Gamma(\lambda x_1 + (1-\lambda)x_2, y)} \; \min_{q\in \Theta(\lambda x_1 + (1-\lambda)x_2, y)} g(p,q)\\
        &\stackrel{\text{(i)}}{\geq} 
        \max_{p \in \lambda \Gamma(x_1, y) + (1-\lambda) \Gamma(x_2, y)} ~~ \min_{q\in \Theta(x, y)} g(p, q) \\
        &= \max_{\substack{p_1 \in \Gamma(x_1,y)\\ p_2 \in \Gamma(x_2,y)}} \min_{q\in \Theta(x, y)} g(\lambda p_1 + (1-\lambda)p_2, q)\\
        &\stackrel{\text{(ii)}}{\geq}\max_{\substack{p_1 \in \Gamma(x_1,y)\\ p_2 \in \Gamma(x_2,y)}} \min_{q\in \Theta(x,y)} (\lambda g(p_1,q) + (1-\lambda)g(p_2, q)) \\
        &\stackrel{\text{(iii)}}{\geq} \max_{\substack{p_1 \in \Gamma(x_1,y)\\ p_2 \in \Gamma(x_2,y)}}  \Big( \underbrace{\min_{q_1\in \Theta(x,y)}\lambda g(p_1,q_1)}_{A(p_1)} + \underbrace{\min_{q_2\in \Theta(x,y)}(1-\lambda)g(p_2, q_2)}_{B(p_2)} \Big) \\
        & = \lambda \max_{p_1 \in \Gamma(x_1,y)} \min_{q_1 \in \Theta(x,y)} g(p_1, q_1) + (1-\lambda) \max_{p_2 \in \Gamma(x_2,y)} \min_{q_2 \in \Theta(x,y)} g(p_2, q_2)\\
        & \stackrel{\text{(iv)}}{=} \lambda \max_{p_1 \in \Gamma(x_1,y)} \min_{q_1 \in \Theta(x_1,y)} g(p_1, q_1) + (1-\lambda) \max_{p_2 \in \Gamma(x_2,y)} \min_{q_2 \in \Theta(x_2,y)} g(p_2, q_2)\\
        &= \lambda f(x_1, y) + (1-\lambda) f(x_2, y),
    \end{align*}
    where inequality (i) holds from restricting the maximization domain using the convexity of $\Gamma$ in~\eqref{eqn:convex-correspondence-def}; 
    inequality (ii) holds from the assumption that $g$ is concave with respect to its $p$-argument; 
    inequality (iii) is the result of distributing the minimization;
    equality (iv) is due to $\Theta$ being constant with respect to $\X$.
    The above result implies the concavity of $f$ with respect to its $x$-argument.
    
    Fix $x \in \X$, and let $y_1, y_2 \in \Y$ and $\lambda \in (0,1)$. Denote $y = \lambda y_1 + (1-\lambda) y_2$. Then,
    \begin{align*}
        f(x, \lambda y_1 + (1-\lambda) y_2) &= \max_{p \in \Gamma(x,\lambda y_1 + (1-\lambda) y_2)} \; \min_{q\in \Theta(x,\lambda y_1 + (1-\lambda) y_2)} g(p,q)\\
        &\leq \max_{p\in \Gamma(x,y)} \min_{\substack{q_1\in \Theta(x,y_1) \\ q_2 \in \Theta(x, y_2)}} g(p, \lambda q_1 + (1-\lambda) q_2)\\
        &\leq \max_{p\in \Gamma(x,y)} \min_{\substack{q_1\in \Theta(x, y_1) \\ q_2 \in \Theta(x, y_2)}}(\lambda g(p,q_1) + (1-\lambda)g(p, q_2)) \\
        & \leq \lambda \max_{p_1 \in \Gamma(x, y)} \min_{q_1 \in \Theta(x,y_1)} g(p_1, q_1) + (1-\lambda) \max_{p_2 \in \Gamma(x, y)} \min_{q_2 \in \Theta(x,y_2)} g(p_2, q_2)\\
        & = \lambda \max_{p_1 \in \Gamma(x, y_1)} \min_{q_1 \in \Theta(x,y_1)} g(p_1, q_1) + (1-\lambda) \max_{p_2 \in \Gamma(x, y_2)} \min_{q_2 \in \Theta(x,y_2)} g(p_2, q_2)\\
        &= \lambda f(x, y_1) + (1-\lambda) f(x, y_2),
    \end{align*}
    which implies that $f$ is convex with respect to its $y$-argument.
\end{proof}

Recall the definition of independent dynamics:
\IDD*

The following lemma shows the convexity of the reachability correspondences $\R^\rho_{\mu,t}$ and $\R^\rho_{\nu,t}$ under independent dynamics.

\begin{restatable}{lemma}{lrc}
    \label{lmm:rset-indep-dynamics}
    Under the independent dynamics in Definition~\ref{def:independent-dynamics}, the reachability correspondences satisfy
    \begin{subequations}
        \begin{align}
            \R^\rho_{\mu,t}(\mu_t, \nu_t) &= \R^\rho_{\mu,t}(\mu_t, \nu'_t),\quad \forall \mu_t \in \P(\X), ~~\forall \nu_t, \nu'_t \in \P(\Y), \label{eqn:constant-blue-reachability}
            \\
            \R^\rho_{\nu,t}(\mu_t, \nu_t) &= \R^\rho_{\nu,t}(\mu'_t, \nu_t),\quad \forall \mu_t, \mu'_t \in \P(\X), ~~ \forall \nu_t \in \P(\Y).
            \label{eqn:constant-red-reachability}
        \end{align}
    \end{subequations}
    Furthermore, for all $\lambda \in [0,1]$, 
    \begin{subequations}
        \begin{align}
            \R^\rho_{\mu,t}\big( \lambda \mu_t + (1-\lambda) \mu'_t, \nu_t \big) &= \lambda \R^\rho_{\mu,t}\big( \mu_t, \nu_t \big) + (1-\lambda) \R^\rho_{\mu,t}\big( \mu'_t, \nu_t \big),
            \quad \forall \mu_t, \mu'_t \in \P(\X), ~~ \forall \nu_t \in \P(\Y), \label{eqn:concave-blue-reachability}
            \\
            \R^\rho_{\nu,t}\big(\mu_t, \lambda \nu_t + (1-\lambda) \nu'_t \big) &= \lambda \R^\rho_{\nu,t}\big(\mu_t, \nu_t \big) + (1-\lambda) \R^\rho_{\nu,t}\big(\mu_t, \nu'_t \big),
            \quad \forall \mu_t \in \P(\X), ~~\forall \nu_t, \nu'_t \in \P(\Y). \label{eqn:concave-red-reachability}
        \end{align}
    \end{subequations}
\end{restatable}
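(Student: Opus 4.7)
Under independent dynamics, equations~\eqref{eqn:independent-dynamics} make the transition matrix $F_t^\rho(\mu_t,\nu_t,\pi_t)$ depend only on the local policy $\pi_t$; let us write $\bar F_t(\pi_t)$ for this matrix, with entries $[\bar F_t(\pi_t)]_{x,x'}=\sum_{u}\bar f_t(x'|x,u)\pi_t(u|x)$, and analogously $\bar G_t(\sigma_t)$ for Red. Consequently, Definition~\ref{def:reachable-set} simplifies to $\R^\rho_{\mu,t}(\mu_t,\nu_t)=\{\mu_t\bar F_t(\pi_t):\pi_t\in\Pi_t\}$ and $\R^\rho_{\nu,t}(\mu_t,\nu_t)=\{\nu_t\bar G_t(\sigma_t):\sigma_t\in\Sigma_t\}$. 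The right-hand sides contain no dependence on the opposing mean-field, which immediately yields~\eqref{eqn:constant-blue-reachability} and~\eqref{eqn:constant-red-reachability}.

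For the affine identity~\eqref{eqn:concave-blue-reachability}, write $\tilde\mu=\lambda\mu_t+(1-\lambda)\mu'_t$. The inclusion $\supseteq$ is the easy direction: any point of the right-hand set has the form $\lambda\mu_t\bar F_t(\pi_t)+(1-\lambda)\mu'_t\bar F_t(\pi'_t)$, and choosing $\tilde\pi$ by the mass-averaging rule
\begin{equation*}
\tilde\pi(u|x)=\frac{\lambda\mu_t(x)\pi_t(u|x)+(1-\lambda)\mu'_t(x)\pi'_t(u|x)}{\tilde\mu(x)}\qquad\text{whenever }\tilde\mu(x)>0,
\end{equation*}
(and arbitrarily otherwise, since such $x$ contribute zero mass to $\tilde\mu\bar F_t(\tilde\pi)$) produces a bona fide policy $\tilde\pi\in\Pi_t$ with $\tilde\mu(x)\tilde\pi(u|x)=\lambda\mu_t(x)\pi_t(u|x)+(1-\lambda)\mu'_t(x)\pi'_t(u|x)$; summing against $\bar f_t(x'|x,u)$ over $x,u$ shows $\tilde\mu\bar F_t(\tilde\pi)=\lambda\mu_t\bar F_t(\pi_t)+(1-\lambda)\mu'_t\bar F_t(\pi'_t)$, which lies in $\R^\rho_{\mu,t}(\tilde\mu,\nu_t)$.

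For the reverse inclusion $\subseteq$, take any $\tilde\mu\bar F_t(\tilde\pi)$ with $\tilde\pi\in\Pi_t$ and simply use the same policy on both sides: $\lambda\mu_t\bar F_t(\tilde\pi)+(1-\lambda)\mu'_t\bar F_t(\tilde\pi)=\tilde\mu\bar F_t(\tilde\pi)$, and each summand lies in $\lambda\R^\rho_{\mu,t}(\mu_t,\nu_t)$ and $(1-\lambda)\R^\rho_{\mu,t}(\mu'_t,\nu_t)$, respectively. The Red identity~\eqref{eqn:concave-red-reachability} follows by the symmetric argument using $\bar G_t$.

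The only conceptual obstacle is well-posedness of the averaging construction at states where $\tilde\mu(x)=0$; I expect this to be handled cleanly by noting that both $\mu_t(x)$ and $\mu'_t(x)$ must then vanish (as $\lambda\in(0,1)$ and both weights are nonnegative), so any extension of $\tilde\pi$ at such $x$ leaves $\tilde\mu\bar F_t(\tilde\pi)$ unchanged. Everything else is bookkeeping.
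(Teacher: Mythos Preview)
Your proposal is correct and follows essentially the same approach as the paper: both exploit that under independent dynamics the transition matrix depends only on the local policy, establish the constant property immediately from this observation, and prove the affine identity by using the same policy on both components for the $\subseteq$ direction and the mass-averaging construction (with the same handling of zero-mass states) for the $\supseteq$ direction.
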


\begin{proof}
    Due to symmetry, we only prove the results for the Blue reachability correspondence.
    The results for the Red team can be obtained through a similar argument. 
    
    Consider an arbitrary pair of distributions $\mu_t \in \P(\X)$ and $\nu_t \in \P(\Y)$. 
    If $\mu_{t+1} \in \R_{\mu,t}^\rho (\mu_t, \nu_t)$ under the independent dynamics in~\eqref{eqn:independent-dynamics},  there exists a local policy $\pi_t \in \Pi_t$ such that
    \begin{equation*}
        \mu_{t+1}(x') = \sum_{x \in \X} \Big[\sum_{u \in \U} \bar{f}_t(x'|x,u)\pi_t(u|x)\Big] \mu_t(x),
    \end{equation*}
    which is independent of $\nu_t$. 
    Consequently, 
    \begin{equation*}
        \R_{\mu,t}^\rho (\mu_t, \nu_t) = \R_{\mu,t}^\rho (\mu_t, \nu'_t), \quad \forall~ \mu_t \in \P(\X) \text{ and } \forall~\nu_t, \nu'_t \in \P(\Y).
    \end{equation*}
    
    Next, we prove the property in~\eqref{eqn:concave-blue-reachability}.
    For the $\subseteq$ direction, consider two Blue mean-fields $\mu_t, \mu'_t \in \P(\X)$ and a Red mean-field $\nu_t \in \P(\Y)$. 
    Let $\bar{\mu}_{t+1} \in \R^\rho_{\mu,t}(\lambda \mu_t + (1-\lambda)\mu'_t, \nu_t)$. 
    From the definition of the reachable set, there exists a local policy $\pi_t \in \Pi_t$ such that
    \begin{align*}
        \bar{\mu}_{t+1}(x') &= \sum_{x \in \X} \Big[\sum_{u \in \U} \bar{f}_t(x'|x,u) \pi_t(u|x) \Big] (\lambda \mu_t(x) + (1-\lambda)\mu_t'(x))  \\
        &=\lambda \underbrace{\sum_{x \in \X} \Big[\sum_{u \in \U} \bar{f}_t(x'|x,u) \pi_t(u|x) \Big]  \mu_t(x) }_{\mu_{t+1}(x')}
        +(1-\lambda) \underbrace{\sum_{x \in \X} \Big[\sum_{u \in \U} \bar{f}_t(x'|x,u) \pi_t(u|x) \Big]  \mu_t'(x)}_{\mu'_{t+1}(x')},
    \end{align*}
    where $\mu_{t+1}\in \R^\rho_{\mu,t} (\mu_t,\nu_t)$ and $\mu'_{t+1}\in \R^\rho_{\mu,t} (\mu'_t,\nu_t)$.
    Hence, we have 
    \[\R^\rho_{\mu,t}(\lambda \mu_t + (1-\lambda)\mu'_t, \nu_t) \subseteq \lambda \R^\rho_{\mu,t}\big( \mu_t, \nu_t \big) + (1-\lambda) \R^\rho_{\mu,t}\big( \mu'_t, \nu_t \big).\]
    
    \vspace{+5pt}
    
    Let now $\bar{\mu}_{t+1} = \lambda \mu_{t+1} + (1-\lambda) \mu_{t+1}'$, where $\mu_{t+1} \in \R^\rho_{\mu,t}(\mu_t, \nu_t)$ and $\mu_{t+1}' \in \R^\rho_{\mu,t}(\mu_t', \nu_t)$. 
    Further, define $\bar{\mu}_t = \lambda \mu_t + (1-\lambda) \mu'_t$.
    From the definition of the reachable set, there exists local policies $\pi_t$ and $\pi'_t$ such that 
    \begin{align*}
        \bar{\mu}_{t+1} &= \lambda \sum_{x \in \X} \Big[\sum_{u \in \U} \bar{f}_t(x'|x,u) \pi_t(u|x) \Big]  \mu_t(x)
        +(1-\lambda) \sum_{x \in \X} \sum_{u \in \U} \bar{f}_t(x'|x,u) \pi'_t(u|x) \Big]  \mu_t'(x) \\
        &= \sum_{x \in \X} \sum_{u \in \U} \bar{f}_t(x'|x,u) \Big[ \lambda \pi_t(u|x) \mu_t(x) + (1-\lambda) \pi'_t(u|x) \mu'_t(x)\Big] \\
        &= \sum_{x \in \X} \sum_{u \in \U} \mathds{1}_{\bar{\mu}_t(x) >0} \; \bar{f}_t(x'|x,u) \Big[ \lambda \pi_t(u|x) \mu_t(x) + (1-\lambda) \pi'_t(u|x) \mu'_t(x)\Big] \\
        & + \sum_{x \in \X} \sum_{u \in \U} \mathds{1}_{\bar{\mu}_t(x) =0}  \; \bar{f}_t(x'|x,u) \underbrace{\Big[ \lambda \pi_t(u|x) \mu_t(x) + (1-\lambda) \pi'_t(u|x) \mu'_t(x)\Big]}_{=0 \, \footnotemark} \\
        &= \sum_{x \in \X} \sum_{u \in \U} \mathds{1}_{\bar{\mu}_t(x) >0} \; \bar{f}_t(x'|x,u) \frac{ \lambda \pi_t(u|x) \mu_t(x) + (1-\lambda) \pi'_t(u|x) \mu'_t(x)}{\bar{\mu}_t(x)} \bar{\mu}_t(x) \\
        &+\sum_{x \in \X} \sum_{u \in \U} \mathds{1}_{\bar{\mu}_t(x) =0} \; \bar{f}_t(x'|x,u) \underbrace{\Big[ \lambda \pi_t(u|x) + (1-\lambda) \pi'_t(u|x) \Big] \bar{\mu}_{t}(x)}_{=0} \\
        &= \sum_{x \in \X} \left[\sum_{u \in \U} \bar{f}_t(x'|x,u) \bar{\pi}_t(u|x)\right] \bar{\mu}_t(x),
    \end{align*}
    \footnotetext{Since the condition $\bar{\mu}_t(x)=0$ together with $\lambda \in (0,1)$ implies that $\mu_t(x) =0$ and $\mu_t'(x) = 0$.}
    
    where the ``averaged" local policy $\bar{\pi}_t$ is given by
    \begin{equation*}
        \bar{\pi}_t (u|x) = \mathds{1}_{\bar{\mu}_t(x) > 0}  \frac{\lambda\pi_t(u|x)\mu_t(x) + (1-\lambda) \pi'_t(u|x) \mu'_t(x)}{\lambda \mu_t(x) + (1-\lambda) \mu'_t(x)} + \mathds{1}_{\bar{\mu}_t(x) = 0} \Big(\lambda\pi_t(u|x) + (1-\lambda) \pi'_t(u|x) \Big).
    \end{equation*}
    Consequently, we have
    \begin{equation*}
        \bar{\mu}_{t+1} \in \R^\rho_{\mu,t} (\lambda \mu_t + (1-\lambda) \mu_{t}', \nu_t).
    \end{equation*}
     Hence,  
     \[\R^\rho_{\mu,t}(\lambda \mu_t + (1-\lambda)\mu'_t, \nu_t) \supseteq \lambda \R^\rho_{\mu,t}\big( \mu_t, \nu_t \big) + (1-\lambda) \R^\rho_{\mu,t}\big( \mu'_t, \nu_t \big).\]

\end{proof}


\EGV*
\begin{proof}
The game value at time $T$ is directly the mean-field reward, which is assumed to be concave-convex.
Hence, one can provide an inductive proof leveraging Lemma~\ref{lmm:concave-convex} and Lemma~\ref{lmm:rset-indep-dynamics} to show that both the lower and upper game values are concave-convex at each time step. Note that the lower game value in~\eqref{eqn:lower-j-ab-t-rset} has the form of a max-min marginal function. 
Finally, as the reachable sets are compact and convex, one can apply the minimax theorem~\citep{owen2013game}, which guarantees that the max-min optimal value is the same as the min-max optimal value, and thus ensures the existence of the game value.
\end{proof}

\section{Policy Extraction from Reachability Set}
\label{appdx-sec:LP}

In Section~\ref{subsec:equivalent-opt}, we changed the optimization domain from the policy space to the reachability set and obtained the optimal next-to-go mean-field.
We now address the natural question regarding the construction of the coordination strategy that achieve the desired next-to-go mean-field. 

We consider the following general policy extraction problem. 

\begin{problem}
    Given a desired next-to-go Blue mean-field $\mu^\rho_{t+1} \in \RMSet{t}{\rho}$, find the local policy $\pi_t \in \Pi_t$ such that
    \begin{equation*}
        \mu^\rho_{t+1}  = \mu^\rho_t F^\rho(\mu^\rho_t, \nu^\rho_t, \pi_t).
    \end{equation*}
\end{problem}

From the convex hull characterization of the reachability set (see~Proposition~\ref{appdx-lmm:Rset-convex-hull}), we have that the mean-field $\mu^\rho_{t+1} \in \RMSet{t}{\rho}$ can be written as 
\begin{equation*}
    \mu_{t+1}^\rho = \sum_{k} \lambda^k \mu_{t}^\rho F^\rho_t(\mu_t^\rho , \nu_t^\rho , \hat{\pi}_t^k),
\end{equation*}
where $\lambda^k \geq 0$ and $\sum_{k} \lambda^k = 1$, and $\{\hat{\pi}^k\}_k$ are the pure local policies (see Definition~\ref{appdx-def:pure-policy}).

Since $\mu_t^\rho$, $\nu_t^\rho$ and $\{\pi^k_t\}$ are all known variables at time $t$, the vector-form coefficients $\bm{\lambda}$ can be found by solving the following linear feasibility problem:
\begin{equation}
    \label{appdx-eqn:lp}
    \bm{M}_t \bm{\lambda} = \mu^\rho_{t+1}, \quad \text{s.t.} ~ \bm{\lambda}\geq 0 ~\text{and}~ \sum_k \lambda^k = 1,
\end{equation}
where the matrix $\bm{M}_t \in \mathbb{R}^{|\X| \times |\hat{\Pi}_t|}$ has $\mu_t^\rho F^\rho_t(\mu_t^\rho , \nu_t^\rho , \hat{\pi}_t^k)$ as its columns. 
Again, the feasibility of~\eqref{appdx-eqn:lp} is guaranteed due to the construction of the reachable set. 

From the linearity of the mean-field dynamics in~\eqref{appd-eqn:linear-mf-dynamics}, we have that the local policy $\pi_t = \sum_k \lambda^k \hat{\pi}_t^k$ achieves the desired next-to-go mean-field.
Specifically, 
\[
\mu_{t+1}^\rho = \mu^\rho_t F^\rho(\mu^\rho_t, \nu^\rho_t, \pi_t) = \sum_{k} \lambda^k \mu_{t}^\rho F^\rho_t(\mu_t^\rho , \nu_t^\rho , \hat{\pi}_t^k).
\]

The Blue coordination strategy can then be constructed to select the local Blue policy solved from~\eqref{appdx-eqn:lp} to achieve the optimal next-to-go Blue mean-field.

\end{appendices}

\bibliographystyle{apalike}
\bibliography{references}
\end{document}